\newcommand{\pr}{\textnormal{Pr}}
\newcommand{\vone}{\mathbf{1}}
\numberwithin{equation}{section}
\newcommand{\omegam}{\Omega_{\scaleto{\ref{lem:delocalization}}{3pt}}}
\newcommand{\NN}{\mathsf{N}}
\newcommand{\sgn}{\textnormal{sgn}}
\DeclareFontFamily{OT1}{rsfs}{}
\DeclareFontShape{OT1}{rsfs}{n}{it}{<-> rsfs10}{}
\DeclareMathAlphabet{\mathscr}{OT1}{rsfs}{n}{it}
\theoremstyle{plain}
\newtheorem{theorem}{Theorem}[section]
\newtheorem{proposition}[theorem]{Proposition}
\newtheorem{prop}[theorem]{Proposition}
\newtheorem{lemma}[theorem]{Lemma}
\newtheorem{conjecture}[theorem]{Conjecture}
\newtheorem{claim}[theorem]{Claim}
\theoremstyle{definition}
\newtheorem{definition}[theorem]{Definition}
\newtheorem{remark}[theorem]{Remark}
\newtheorem{example}[theorem]{Example}
\newcommand\E{\mathbb{E}}
\newcommand\R{\mathbb{R}}
\newcommand\N{\mathbb{N}}
\newcommand\pphi{\bm{\varphi}}
\begin{document}
\title[Nodal Decompositions of a Symmetric Matrix]{Nodal Decompositions of a Symmetric Matrix}
\author{Theo McKenzie}
\address{Department of Mathematics, Stanford University, Stanford, CA, USA.}
\email{theom@stanford.edu}
\thanks{TM is supported by NSF Grant DMS-2212881}
\author{John Urschel}
\address{Department of Mathematics, Massachusetts Institute of Technology, Cambridge, MA, USA.}
\email{urschel@mit.edu}

\begin{abstract}
  Analyzing nodal domains is a way to discern the structure of eigenvectors of operators on a graph. We give a new definition extending the concept of nodal domains to arbitrary signed graphs, and therefore to arbitrary symmetric matrices. We show that for an arbitrary symmetric matrix, a positive fraction of eigenbases satisfy a generalized version of known nodal bounds for un-signed (that is classical) graphs. We do this through an explicit decomposition. Moreover, we show that with high probability, the number of nodal domains of a bulk eigenvector of the adjacency matrix of a signed Erd\H os-R\'enyi graph is $\Omega(n/\log n)$ and $o(n)$. 
\end{abstract}

\maketitle

\section{Introduction}\label{sec:intro}
Courant's nodal domain theorem states that the zero-level set (i.e., the set of points where the eigenfunction equals zero) of the $k^{th}$ lowest energy eigenfunction of a Laplacian on a smooth bounded domain in $\mathbb{R}^d$ with Dirichlet boundary conditions  divides the domain into at most $k$ subdomains (see \cite{courant1923allgemeiner}, and his text co-authored with Hilbert in the following year \cite{courant1924methoden}). The zero-level set is commonly referred to as the nodal set, the resulting subdomains are referred to as the nodal domains, and the number of subdomains is referred to as the nodal count. Results of this type have been of great interest in spectral geometry and mathematical physics (see, e.g. \cite{zelditch2017eigenfunctions}), with refinements in dimension two (e.g., Pleijel's nodal domain theorem) \cite{bourgain2013pleijel,pleijel1956remarks,polterovich2009pleijel}, and extensions to $p$-Laplacians \cite{cuesta2000nodal,drabek2002generalization}, Riemannian manifolds \cite{peetre1957generalization}, and domains with low regularity assumptions \cite{alessandrini1998courant}, among many others.

 Courant's theorem, and nodal domains in general, has also been studied in the discrete setting of graphs. This setting poses a number of unique challenges, as eigenvectors may vanish (e.g., equal zero) at some entries, while in Courant's setting the nodal set is of measure zero \cite{cheng1976eigenfunctions}. In this setting, a nodal domain of an eigenvector $\bm{\varphi}$ of the generalized Laplacian of a graph
is a maximal connected component on which the eigenvector entries do not change sign, e.g. ${\pphi(i) \pphi(j) > 0}$ for all $i \ne j$ in the domain.

 The earliest known result in this setting is due to Gantmacher and Krein, who studied the sign properties of eigenvectors of generalized Laplacians of the path graph and proved a tight estimate for the nodal count in this setting (see \cite{gantmacher2002oscillation} for a revised English edition of the original 1950 Russian text). 
 Fiedler's tree theorem proves exact nodal count estimates for trees, generalizing the work of Gantmacher and Krein. Namely he showed that the number of nodal domains of a non-vanishing eigenvector of a symmetric, acyclic, irreducible matrix is exactly the index of the corresponding eigenvalue indexed in increasing order \cite{fiedler1975eigenvectors} (in fact, both Gantmacher and Krein's result and Fiedler's result extend to the signed case, described below). These results can be thought of as a discrete version of Sturm's oscillation theorem for ordinary differential equations \cite{sturm2009memoire1,sturm2009memoire2}, of which Courant's theorem is a generalization. 

For discrete generalized Laplacians, the nodal count of a non-vanishing eigenvector corresponding to the $k^{th}$ eigenvalue is at most $k$. However, when an eigenvector vanishes on some vertices, complications arise, as the vertex sets of nodal domains no longer forms a partition of the vertex set. In this setting, there are a number of competing versions of nodal domains and nodal theorems. Most notably, many authors have considered the concept of weak and strong nodal domains: a strong nodal domain of an eigenvector $\bm{\varphi}$ of a symmetric matrix $M$ is simply a nodal domain as defined above, i.e., a maximally connected induced subgraph for which {$\bm{\varphi}(i) \bm{\varphi}(j)>0$}, and a weak nodal domain is a maximally connected induced subgraph for which {$ \bm{\varphi}(i) \bm{\varphi}(j)\ge0$, see Figure \ref{fig:examples}}.

Davies, Gladwell, Leydold, and Stadler proved a weak and strong nodal count theorem for generalized Laplacians {(e.g., unsigned graphs)}: given an eigenpair $(\lambda,\bm{\varphi})$ of an irreducible generalized Laplacian $M$, the weak and strong nodal count of $\bm{\varphi}$ are at most $k$ and $k+r-1$, respectively, where $k$ and $r$ are the index (in increasing order) and multiplicity of $\lambda$, respectively \cite{biyikoglu2007laplacian,BRIANDAVIES200151,davies2000discrete}. There are a number of other proofs of various versions of this statement \cite{de1993multiplicites,duval1999perron,friedman1993some,powers1988graph,van1996topological}; see \cite[Sec. 2]{BRIANDAVIES200151} for a discussion of the results (and the correctness of some of the statements and associated proofs) in these works. {In \cite{urschel2018nodal},} the second author of the current paper proved a decomposition version of the Davies-Gladwell-Leydold-Stadler theorem, that for the $k^{th}$ eigenvalue of a symmetric generalized Laplacian of a discrete graph, a positive proportion of eigenvectors $\bm{\varphi}$ in the corresponding eigenspace can be decomposed into at most $k$ signed nodal domains (i.e., there exists a signing $\bm{\varepsilon}$  {satisfying $\bm{\varepsilon}(i) = \sgn(\bm{\varphi}(i))$ whenever $\bm{\varphi}(i) \ne 0$} with classical nodal count at most $k$).

In addition to discrete versions of Courant's nodal theorem, measuring the gap between the actual nodal count and Courant's bound has also been of interest. Most notably, Berkolaiko showed that a non-vanishing eigenvector of an irreducible generalized Laplacian has nodal count at least $k-\nu$, where $k$ is the index of the corresponding eigenvalue and $\nu$ is the cyclomatic number of the associated graph \cite{berkolaiko2008lower}\footnote{Berkolaiko stated a slightly weaker version of the aforementioned result, but the associated proof also proves the aforementioned version.}. {This result is also a corollary of the more general Berkolaiko-Colin de
Verdi\`ere theorem, in which the nodal surplus is expressed as a Morse index of a function of $\nu$ variables  \cite{berkolaiko2013nodal,colin2013magnetic}. Berkolaiko's orginal result} was later strengthened by Xu and S.T. Yau, producing a lower bound for an arbitrary eigenvector $\bm{\varphi}$ of $k+r-1-\nu - |i_0(\bm{\varphi})|$, where $k$ and $r$ are the index and multiplicity of the corresponding eigenvalue, $\nu$ is the cyclomatic number of the associated graph, and $i_0(\bm{\varphi}) = \{i \, | \, \bm{\varphi}(i) = 0 \}$ \cite{xu2012nodal}. The study of nodal surplus {(stemming from \cite{berkolaiko2013nodal,colin2013magnetic})} is an active area of research, see \cite{alon2020quantum,alon2018nodal,alon2022universality,alon2022morse} for details.

In this work, we consider nodal count theorems for the generalized Laplacian of arbitrary signed graphs, e.g., arbitrary symmetric matrices. Such estimates are important for a number of reasons. First, when dealing with a finite element approximation of an elliptic operator, the resulting stiffness matrix is not always a generalized Laplacian and Courant's bound may fail to hold (e.g., a 2D triangularization with some obtuse angles); see, for instance, \cite{gladwell2002courant} for details. More generally, nodal domains give us an idea of relationship between the structure of a matrix and that of its eigenvector, and signed graphs are used frequently in practice (e.g., Ising models, correlation clustering, etc). {For instance, signed graphs played a role in the recent breakthroughs regarding equiangular lines with a fixed angle \cite{jiang2021equiangular, jiang2023spherical}, the sensitivity conjecture \cite{huang2019induced}, and Bilu-Linial lifts to construct infinite families of Ramanujan graphs \cite{bilu2006lifts, marcus2013interlacing}.}

 However, as the setting of discrete Laplacians added barriers (in the form of vanishing vertices) to a direct version of the classical Courant nodal domain theorem, the extension from generalized Laplacian matrices to arbitrary symmetric matrices brings unique challenges, and ambiguity as to what constitutes a nodal domain in this setting. Path nodal domains for symmetric matrices have been studied by Mohammadian \cite{mohammadian2016graphs} and, recently, by Ge and Liu \cite{ge2023symmetric}. In this model, we count the number of connected components of the graph induced on ``good'' edges, namely edges for which the product of eigenvector entries of vertices in the edge respects the sign of the edge. Namely, given a symmetric irreducible matrix $M \in \mathbb{R}^{n \times n}$ and an eigenvector $\bm{\varphi}$ corresponding to an eigenvalue of index $k$ and multiplicity $r$, let 
$G^<_{\bm{\varphi}} = (\{1,...,n\},E^<_{\bm{\varphi}})$ and $G^\le_{\bm{\varphi}} = (\{1,...,n\} \backslash i_0(\bm{\varphi}),E^\le_{\bm{\varphi}})$, where
$E^<_{\bm{\varphi}} = \{(i,j) \in E(G) \, | \,  M_{ij} \bm{\varphi}(i) \bm{\varphi}(j) <0 \}$ and $E^\le_{\bm{\varphi}} = \{(i,j) \in E(G) \, | \,  M_{ij} \bm{\varphi}(i) \bm{\varphi}(j) \le 0 \}$ (with $G^\ge_{\bm{\varphi}}$ and $G^>_{\bm{\varphi}}$ defined analogously). Let $\kappa(\cdot)$ be the number of connected components of a graph. Mohammadian proved that $\kappa(G^\le_{\bm{\varphi}}) \le k$, $\kappa(G^<_{\bm{\varphi}}) \le k+(r-1)$, and that, if $i_0(\bm{\varphi}) = \emptyset$, then $\kappa(G^<_{\bm{\varphi}}) \ge k - \nu$ \cite{mohammadian2016graphs}. Ge and Liu expanded upon the analysis of Mohammadian by proving lower bounds for $\kappa(G^<_{\bm{\varphi}})$ {involving further parameters that depend on the eigenvector $\bm{\varphi}$}, and showing that $\kappa(G^<_{\bm{\varphi}})\le k$ for any eigenvector of minimal support \cite{ge2023symmetric}. In addition, they produced an upper bound on a formulation of weak nodal domain slightly different from $G^\le_{\bm{\varphi}}$, and proved a number of estimates for acyclic matrices.

There is a peculiarity specific to the path nodal domain, in that it is possible that $i,j$ in the same nodal domain could satisfy $M_{ij}\pphi(i)\pphi(j)> 0$, whereas this is impossible in un-signed graphs. Namely, $i$ and $j$ could be identifiably negatively related, but still be a part of the same nodal domain, if there is a path of good edges from $i$ to $j$. Here, ``bad'' edges are treated equivalently to as if there were no edge at all.

{In order to incorporate the information from all edges into our decomposition, we take a different approach from that of previous authors. Instead of studying walks with a classical nodal-type property and ignoring bad edges (i.e. considering only $M_{ij} \bm{\varphi}(i) \bm{\varphi}(j) < 0$), we prove bounds for induced subgraphs (``nodal" subgraphs) with a nodal-type property, thus producing subsets of the domain for which the eigenvector does not change sign and the matrix, restricted to this subset, is a generalized Laplacian (both up to a sign transformation $M \rightarrow DMD$, $\bm{\varphi} \rightarrow D \bm{\varphi}$, for some involutory diagonal matrix). We consider bounds involving properties of the matrix itself, rather than quantities that depend on the specific choice of eigenvector (see, for instance, Remark \ref{remark:invariants}).}


In this work, we focus on the minimal size of decompositions of the domain into nodal subgraphs ({see Figure \ref{fig:examples}}). This is a stricter definition than that of path nodal domains; therefore, we expect more nodal domains in this case. In what follows, we prove upper bounds regarding nodal decompositions of a symmetric matrix. In particular, we prove a natural analogue of Courant's nodal domain theorem for any matrix and eigenvector pair that depends only on the energy level of the eigenvector and how ``close" to a generalized Laplacian the given matrix is (Theorem \ref{thm:basis_bound}).

\begin{figure}
{\centering
\includegraphics[width = 6.5 in]{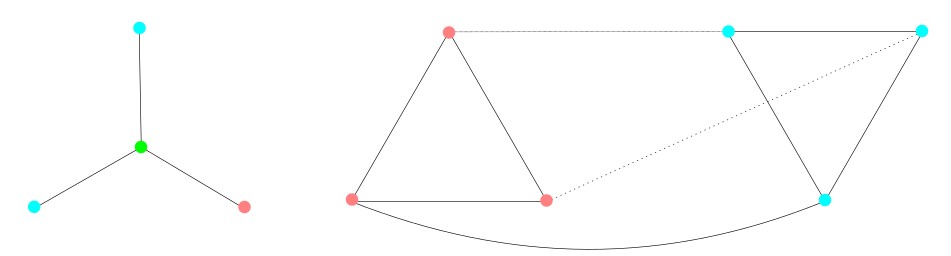}
\caption{{In both (unweighted) graphs, solid lines represent positive edges and dashed lines represent negative edges. Moreover, red vertices represent positive entries in the eigenvector, blue vertices are negative, and green is 0. For the given eigenvector on the graph on the left, there are 2 weak nodal domains but 3 strong nodal domains. For the eigenvector $\pphi$ on the graph on the right, there is 1 path nodal domain, but $\mathsf N(\pphi)=2$.}}\label{fig:examples}}
\end{figure}

One large appeal of our formulation is in deducing the structure of eigenvectors of random graphs. Studying the structure of eigenvectors of random graphs is a well known problem with applications in both computer science and mathematical physics, see, for example,  \cite{kottos1997quantum,pothen1990partitioning}. On dense Erd\H{o}s-R\'enyi graphs, eigenvectors that do not correspond to the highest eigenvalue have exactly two nodal domains \cite{arora2011eigenvectors,dekel2011eigenvectors}, which are approximately the same size \cite{huang2020size}, and all vertices are on the boundary of their nodal domains \cite{rudelson2017delocalization}. This follows the general notion of \emph{quantum ergodicity}, that roughly, the distribution of entries in an eigenvector of these graphs should be close to a joint Gaussian distribution with individual entries close to independent \cite{bourgade2017eigenvector}. 

All of these results show that Erd\H{o}s-R\'enyi graphs have ``trivial'' eigenvector structure, in that the graph is too dense for these nodal domain statistics to detect different structure within the graph (this is not the case for random regular graphs, for which eigenvectors of the most negative eigenvalues have many nodal domains \cite{ganguly2023many}). In our setting, we consider an Erd\H os-R\'enyi \emph{signed} graph, denoted by $G(n,p,q)$ for $n\in \N$, $0<p,q<1$. Here, we randomly sample an $n$ vertex graph, where each of the $\binom n2$ possible edges has a $p$ probability of being a positive edge, a $q$ probability of being a negative edge, and $1-p-q$ of not existing. {For an overview of the spectral theory of signed random graphs see \cite{gallier2016spectral}. Signed Erd\H{o}s-R\'enyi graphs are relevant in spin glass models, where positive and negative edges correspond to ferromagnetic/antiferromagnetic bonds in the discrete Hamiltonian,  \cite{dean2001tapping}. Moreover, the spectral theory signed Erd\H os-R\'enyi graphs is relevant for analyzing real world community detection problems \cite{ mercado2019spectral,tang2016survey}.}

Using the same argument as \cite{dekel2011eigenvectors}, with high probability, every eigenvector $\pphi$ has $\kappa(G^{<}_{\bm{\varphi}})=\kappa(G^{>}_{\bm{\varphi}})=1$ (we prove this in Appendix \ref{sec:averagepath}). However, as we show in Section \ref{sec:average}, our stronger notion of a nodal domain shows the nontrivial structure of the eigenvectors of a $G(n,p,q)$, as the number of nodal domains scales sublinearly. Specifically, we show that with high probability, there are $o(n)$ nodal domains in this signed random graph. A simple lower bound of $\Omega(n/\log_{\frac{1}{1-(p\vee q)}}n)$ is given in Proposition \ref{prop:avglower}, which follows from the size of the maximum clique in a graph. These results are also towards a program proposed by Linial {(described in \cite{huang2020size})} studying the geometry of nodal domains on graphs. 

\subsection{Definitions, Notation, and Results}
 Let $M$ be an $n \times n$ symmetric, irreducible matrix with eigenpair $(\lambda, \bm{\varphi})$. We denote the index of $\lambda$ (the number of eigenvalues strictly less than $\lambda$, plus one) by $k$, and the multiplicity by $r$. We can associate with $M$ a signed graph $G = ([n],E,\sigma)$, where $[n]:=\{1,...,n\}$, $E = \{(i,j) \, | \, M_{ij} \ne 0, i \ne j \}$, and $\sigma:E \rightarrow \{\pm 1\}$ is defined by $\sigma_{ij} = -\sgn(M_{ij})$. Let $G[S]$ be the induced subgraph of $G$ on vertex set $S \subset [n]$, with corresponding edge set $E[S]$. We denote by $\nu$ the cyclomatic number (dimension of the unsigned cycle space) of $G$. Given a subset of vertices $S$, we denote by $M_S$ the $|S|\times |S|$ principal submatrix of $M$ corresponding to $S$. Similarly, we write $\bm{\varphi}(S)$ as the subvector of $\bm{\varphi}$ corresponding to $S$. We denote by $\overrightarrow{1_{S}}$ the vector that equals one on $S$ and zero elsewhere; its dimension is always clear from context. For a vertex $v\in V$, we write $\pphi(v)=\pphi(\{v\})$ and $\overrightarrow{1_{v}} = \overrightarrow{1_{\{v\}}}$.  For two vectors $x,y$ of the same length, $x\circ y$ denotes the entrywise product of $x$ and $y$. Let $p \vee q = \max\{p,q\}$ and $p \wedge q = \min \{p,q\}$.

A key ingredient in the worst-case analysis that follows is a notion of how ``far" a signed graph is from being equivalent to an all positive graph, e.g., how far a matrix is from being a generalized Laplacian. We make use of the notion of frustrated edges and frustration index.
\begin{definition}
Given a signed graph $G = ([n],E,\sigma)$ and a \emph{state} $\bm{\varepsilon} \in \{\pm 1\}^n$, an edge $\{i,j\} \in E$ is said to be \emph{frustrated} if $\sigma_{i,j} \bm{\varepsilon}(i) \bm{\varepsilon}(j)<0$. The \emph{frustration index} $\mathsf{f}$ of $G$ is the minimum number of frustrated edges over all states $\bm{\varepsilon} \in \{\pm 1\}^n$.
\end{definition}
 Computing the frustration index of a signed graph is $\mathsf{NP}$-hard, via a reduction from the max-cut problem; {see \cite{huffner2010separator} for a through discussion of the complexity of computing the frustration index.} In what follows, we often assume that the number of positive off-diagonal pairs of $M$ is exactly $\mathsf{f}$; this can be done by performing an involutory diagonal transformation $DMD$ for a diagonal matrix $D$ corresponding to a state $\bm{\varepsilon} \in \{\pm 1\}^n$ that achieves the frustration index. {When $G$ is a forest, corresponding to the sparsity structure of an acyclic matrix, the frustration index is exactly zero, as every symmetric acyclic matrix can be transformed to a generalized Laplacian via an involutory diagonal matrix.} Given a vector $\bm{x}$, we denote by $i_0(\bm{x})$ the set of indices where $\bm{x}$ equals zero, i.e., $i_0(\bm{x}) := \{ j \, | \,  \bm{x}(j) = 0 \}$.
\begin{definition}
Given a symmetric matrix $M \in \mathbb{R}^{n \times n}$ and a non-vanishing vector $\bm{x} \in \mathbb{R}^n$ (i.e., $i_0(\bm{x}) = \emptyset$), we denote by $\mathsf{N}(\bm{x})$ the minimal quantity $s$ for which there exists a partition $[n] = \sqcup_{\ell=1}^s V_\ell$ with $G[V_\ell]$ connected and $M_{ij} \bm{x}(i) \bm{x}(j) <0$ for all $\{i,j\} \in E[V_\ell]$, $\ell = 1,...,s$, i.e., the minimal decomposition of the domain into nodal subgraphs.
\end{definition}
{Computing $\mathsf{N}(\bm{\varphi})$ is $\mathsf{NP}$-hard (as is often the case for problems involving signed graphs), via a reduction from the clique-cover problem (one of Karp's original $21$ $\mathsf{NP}$-complete problems \cite{karp2010reducibility}). We provide a brief sketch of a reduction. Given an instance $H = ([n],E)$ of the clique-cover problem, consider the signed clique $G = ([n],E,\sigma)$, where $\sigma_{ij} = +1$ if $(i,j) \in E(H)$ and $\sigma_{ij}=-1$ if $(i,j) \not\in E(H)$. Computing the smallest $s$ such that there exists a partition $[n] = \sqcup_{\ell=1}^s V_\ell$ with $G[V_\ell]$ connected and $\sigma_{ij}=+1$ for all $(i,j) \in E[V_\ell]$, $\ell = 1,...,s$, is exactly the smallest $s$ such that the vertices of $H$ can be partitioned into $s$ parts, each of which is a clique. Taking any matrix with sparsity structure $G$ and considering a vector $\bm{\varphi}$ with all entries positive implies that computing $\mathsf{N}(\bm{\varphi})$ is also $\mathsf{NP}$-hard.} This hardness holds not just for adversarial $\bm{\varphi}$, but for eigenvectors $\bm{\varphi}$, as, for any signed graph $G = ([n],E,\sigma)$, one can always produce a matrix $M$ with sparsity structure $G$ and an eigenvector with constant sign. We provide a short proof of the bounds
\begin{equation} \label{ineq:generic1}
    k +(r-1) - \mathsf{\nu}  \le \mathsf{N}(\bm{\varphi}) \le k + \mathsf{f}
\end{equation}
for non-vanishing eigenvectors
in Section \ref{sec:non-vanishing} (Proposition \ref{prop:generic_bound}).
This follows quickly from Mohammadian's aforementioned results and an analysis of frustrated edges, but we provide a short proof of independent interest. Let $\mathsf{N}^s(\bm{\varphi})$ be the minimal nodal decomposition of $\bm{\varphi}$ restricted to non-vanishing entries (i.e., $\mathsf{N}^s(\bm{\varphi}) = \mathsf{N}(\bm{\varphi} \vert_{\bm{\varphi}(i) \ne 0})$). We prove that for any symmetric matrix $M$ there exists an orthonormal eigenbasis $\bm{\varphi}_1,...,\bm{\varphi}_n$ ordered by increasing energy, such that {$\mathsf{N}^s(\bm{\varphi}_k)\le k+ \mathsf{f}$} for all $k$ (Proposition \ref{prop:strong_basis}), extending a result of Gladwell and Zhu for generalized Laplacians \cite{gladwell2002courant}. However, the restriction to non-vanishing entries leads to a number of limitations in the above result. For instance, the eigenbases satisfying the proposition statement may be of measure zero. For this reason, we prove a more robust version of the above statement, focusing on nodal decompositions of the entire domain. In particular, in Section \ref{sec:eigenbasis}, we prove that there exists a subset of orthonormal eigenbases, of positive measure, such that for every basis $\bm{\varphi}_1,...,\bm{\varphi}_n$ in this subset ordered by increasing energy, there exist corresponding signings $\bm{\varepsilon}_1,...,\bm{\varepsilon}_n \in \{\pm 1\}^n$ satisfying {$\bm{\varepsilon}_k(i) = \sgn(\bm{\varphi}_k(i))$ whenever $\bm{\varphi}_k(i) \ne 0$} and $\mathsf{N}(\bm{\varepsilon}_k) \le k+\mathsf{f}$ for all $k$.

\begin{theorem}\label{thm:basis_bound}
Let $M$ be an $n \times n$ symmetric, irreducible matrix, and let $\mathcal{B}$ be the set of corresponding orthonormal eigenbases of $\mathbb{R}^n$ {ordered by increasing energy, i.e.,
$$ \mathcal{B} = \{ Q \in O(n) \, | \, M Q = Q \Lambda \text{ for diagonal } \Lambda \text{ with } \Lambda_{1,1} \le \Lambda_{2,2} \le ... \le \Lambda_{n,n} \}.$$}
Then there exists a subset $\Phi \subset \mathcal{B}$ of co-dimension zero such that, for every $(\bm{\varphi}_1,...,\bm{\varphi}_n) \in \Phi$, there exists signings $\bm{\varepsilon}_1,...,\bm{\varepsilon}_n \in \{\pm 1\}^n$ satisfying {$\bm{\varepsilon}_k(i) = \sgn(\bm{\varphi}_k(i))$ for all $\bm{\varphi}_k(i) \ne 0$, $i,k \in [n]$,} and
\begin{equation}\label{ineq:basis}
 \mathsf{N}(\bm{\varepsilon}_k) \le k+\mathsf{f}, \qquad \quad k =1,...,n,
\end{equation}
where $\mathsf{f}$ is the frustration index of the signed graph of $M$.
\end{theorem}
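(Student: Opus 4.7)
The plan is to reduce to the non-vanishing case via a perturbation/limit argument and then invoke the single-eigenvector bound from Proposition \ref{prop:generic_bound}. First I would perturb $M$ within the space of symmetric matrices preserving its off-diagonal sparsity pattern: set $M_\epsilon := M + \epsilon N$ where $N$ is a symmetric matrix with $N_{ij} = 0$ whenever $M_{ij} = 0$ and $i \ne j$, so that the signed graph of $M$ is frozen while the diagonal entries of $N$ remain free. For a generic such $N$ and sufficiently small $\epsilon > 0$, the matrix $M_\epsilon$ has (i) the same signed graph as $M$, hence the same frustration index $\mathsf{f}$; (ii) simple spectrum; and (iii) every eigenvector non-vanishing. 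Statement (iii) is the critical genericity claim: the vanishing of the $j$-th coordinate of an eigenvector of $M_\epsilon$ is, within this family, a codimension-one algebraic condition, and perturbations of the diagonal alone are sufficient to break any structural zeros forced by the automorphism group of $M$ (as in the star-graph example, where otherwise the central coordinate of every eigenvector in the $\lambda=1$ eigenspace vanishes).

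For such $M_\epsilon$, Proposition \ref{prop:generic_bound} applies to each eigenvector $\bm{\varphi}_k^\epsilon$ and yields $\mathsf{N}(\bm{\varphi}_k^\epsilon) \le k + \mathsf{f}$. Taking $\epsilon \to 0^+$ along a generic analytic curve, Kato-type perturbation theory for simple eigenvalues provides an analytic selection of $\bm{\varphi}_k^\epsilon$ in $\epsilon$; each coordinate, being an analytic and (by genericity) non-identically zero function of $\epsilon$, has a definite sign for all sufficiently small $\epsilon > 0$. Consequently $\sgn(\bm{\varphi}_k^\epsilon)$ stabilizes to a signing $\bm{\varepsilon}_k \in \{\pm 1\}^n$ with $\bm{\varepsilon}_k(i) = \sgn(\bm{\varphi}_k(i))$ whenever $\bm{\varphi}_k(i) \ne 0$, where $\bm{\varphi}_k := \lim_{\epsilon \to 0^+} \bm{\varphi}_k^\epsilon$. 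Since $\mathsf{N}$ of a signing depends only on the discrete sign pattern and is constant along the stabilized portion of the curve, the bound $\mathsf{N}(\bm{\varepsilon}_k) \le k + \mathsf{f}$ is inherited in the limit.

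To produce a co-dimension zero subset $\Phi \subset \mathcal{B}$ rather than a single basis, I would vary $N$ over an open dense set of admissible perturbations. By first-order degenerate perturbation theory, the limit basis depends on $N$ through the eigenbasis of the compression $P_\lambda N P_\lambda$ on each eigenspace $E_\lambda$; as $N$ varies, these compressions sweep out a dense subset of symmetric operators on each $E_\lambda$, so the map $N \mapsto (\bm{\varphi}_1, \ldots, \bm{\varphi}_n)$ has image containing an open dense subset of $\mathcal{B} \cong \prod_i O(r_i)$ up to the trivial sign-flip symmetries. This image together with those sign-flips is the desired $\Phi$. The main obstacle I anticipate is precisely this last step: verifying that the set of limit bases covers a positive-measure portion of $\mathcal{B}$ requires a careful analysis of the linear map $N \mapsto P_\lambda N P_\lambda$ restricted to matrices with the given sparsity, for each degenerate eigenspace; for matrices with simple spectrum the argument collapses and $\Phi = \mathcal{B}$ follows from sign-flip symmetry alone.
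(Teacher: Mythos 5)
Your perturbative reduction to Proposition \ref{prop:generic_bound} is a reasonable way to manufacture \emph{one} good eigenbasis with signings (granting the genericity of simple spectrum and non-vanishing eigenvectors within a connected sparsity class, and the analytic selection of eigenvectors), but the final step --- deducing a co-dimension zero subset $\Phi \subset \mathcal{B}$ --- has a genuine gap, and it is precisely the step you flag as an ``obstacle.'' The compressions $P_\lambda N P_\lambda$, as $N$ ranges over the sparsity-constrained perturbations, form a \emph{linear subspace} of $\mathrm{Sym}(E_\lambda)$ whose dimension is bounded by the number of free parameters in $N$, and this is in general far smaller than $\dim \mathrm{Sym}(E_\lambda)$. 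Concretely, for the star graph on $n$ vertices (the paper's own cautionary example), $E_1 = \{\bm{x} \mid \bm{x}(1)=0,\ \sum_{i\ge 2}\bm{x}(i)=0\}$ has dimension $n-2$, and for $\bm{x},\bm{y}\in E_1$ one computes $\bm{x}^T N \bm{y} = \sum_{i\ge 2} N_{ii}\bm{x}(i)\bm{y}(i)$ (the off-diagonal entries of $N$ all involve the vanishing center coordinate). So the compressions sweep out at most an $(n-1)$-dimensional family, and the resulting eigenbases trace out a subset of the flag manifold of $E_1$ of dimension at most $n-2$, whereas that flag manifold has dimension $\binom{n-2}{2}$. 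For $n\ge 6$ the set of limit bases is therefore a positive-codimension (measure zero) subset of $\mathcal{B}$, so your claim that the image contains an open dense subset of $\mathcal{B}$ is false, and the theorem's main content is not established by this route.

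The repair is not to enlarge the family of perturbations but to show that the property ``there exist signings with $\mathsf{N}(\bm{\varepsilon}_k)\le k+\mathsf{f}$'' is an \emph{open} condition near a well-chosen basis: one must verify that the chosen basis is non-vanishing on $[n]\setminus i_0(\lambda)$ (your limits need not be --- an entry of order $\epsilon$ vanishes in the limit, and a subsequent small rotation could then flip its sign against the signing you extracted), and then fatten that single basis by small rotations that preserve all sign patterns off $i_0(\lambda)$. This is exactly what the paper does: it characterizes the eigenspace structure on $i_0(\lambda)$, imposes explicit half-space conditions on the coefficients so that the nodal bound holds (Claims \ref{claim:bound} and \ref{claim:orth}), repairs any spurious zeros with Givens rotations involving a fully non-vanishing first vector, and only then expands to an open neighborhood in $\mathcal{B}_\lambda$ (Claim \ref{claim:main}). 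Your argument, as written, delivers at best a measure-zero family of good bases and so proves a strictly weaker statement than Theorem \ref{thm:basis_bound}.
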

Informally, the above theorem tells us that if we choose an arbitrary orthonormal eigenbasis, there is a positive probability that there is a signing of vanishing entries of our eigenbasis so that the resulting vectors satisfy the upper bounds of Inequality \ref{ineq:generic1}. This can be viewed as a stronger version of ``weak nodal bounds," as vanishing vertices cannot be used to connect both positively and negatively signed vertices. However, we can only hope for such a result for a positive proportion of bases. The unsigned star graph is an illustrative example of this limitation, {see Figure \ref{fig:star}.}
\begin{figure}
{\centering
\includegraphics[width = 6.5 in]{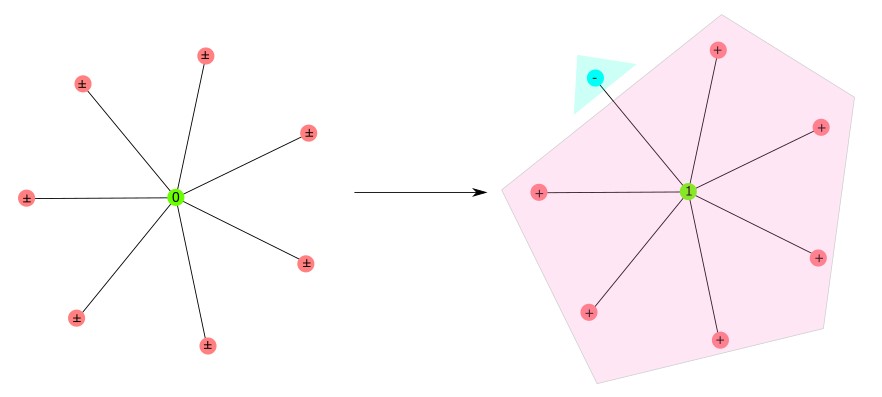}
\caption{{The graph Laplacian of a star on $n>3$ vertices has algebraic connectivity one with corresponding eigenspace $E_1 = \{ \bm{x} \, | \, \sum_{i = 2}^n \bm{x}(i) = 0, \, \bm{x}(1) = 0 \}$. In order for $\mathsf{N}(\bm{\varepsilon}_2) \le 2$, $\bm{\varphi}_2 \in E_1$ must have either exactly one positive entry or exactly one negative entry, a property only an exponentially small fraction (with respect to $n$) of the eigenvectors in $E_1$ satisfy in respect to the Haar measure over the eigenspace.}}\label{fig:star}}
\end{figure}

The positive probability portion of the above theorem is quite important; this property forces the resulting partitions in the basis to represent the ``dynamics" of the eigenspace, e.g., the theorem statement does not require the artificial vanishing of vertices (as in the proof of Proposition \ref{prop:strong_basis}). Our proof of Theorem \ref{thm:basis_bound} proceeds as follows:

\begin{enumerate}
\item
Characterize the structure of an eigenspace $E_\lambda$ with eigenvectors that simultaneously vanish, i.e., $i_0(\lambda):= \{ j \, | \, \bm{\varphi}(j) = 0 \text{ for all } \bm{\varphi} \in E_\lambda\} \ne \emptyset$, and parameterize $E_\lambda$ using eigenvectors of the connected components of $M$ restricted to $[n] \backslash i_0(\lambda)$.

\item
Algorithmically define the sign vector $\bm{\varepsilon}$ associated with any eigenvector $\bm{\varphi}$ that is non-vanishing on $[n] \backslash i_0(\lambda)$ and restrict the elements of the orthonormal basis $\bm{\varphi}_1,...,\bm{\varphi}_r$ of $E_\lambda$ to certain half-spaces so that $\mathsf{N}(\bm{\varepsilon}_s) \le k+(s-1) +\mathsf{f}$.

\item
Prove that a positive proportion of orthonormal bases of $E_\lambda$ are non-vanishing on $[n] \backslash i_0(\lambda)$ and satisfy the half-space conditions of {Step (2)}.
\end{enumerate}

{It is possible to prove an analogue of Theorem \ref{thm:basis_bound} for the lower bound $k-\nu$ instead of the upper bound $k + \mathsf{f}$; the analysis is similar to (but simpler than) that of the stated theorem. However, the more interesting statement regarding bases satisfying the lower and upper bounds simultaneously (i.e., $k- \nu \le \mathsf{N}(\bm{\varepsilon}_k) \le k + \mathsf{f}$) does not follow directly from the techniques presented, and may require a more detailed analysis. This possible extension is left to the interested reader.}

In Section \ref{sec:average}, we analyze the nodal count of the adjacency operator of the Erd\H{o}s-R\'enyi signed graph. A lower bound is given by the combinatorial properties of a $G(n,p,q)$, that a nodal domain is similar to a clique, and that all cliques in a random graph are of size $O(\log n)$. Therefore there are $\Omega(n/\log n)$ nodal domains {(Proposition \ref{prop:avglower})}. An upper bound proves to be a tougher challenge; however, we show the following in Section \ref{sec:average}.

{
\begin{theorem}\label{thm:mainavg}
For any $0<\epsilon,p,q<1$, there is some constant $\gamma>0$ such that for any index $i\in [\epsilon n,(1-\epsilon)n]$, with probability $1-O(n^{-\gamma})$, the $i$th eigenvector $\bm{\varphi}$ of the adjacency matrix of $G\sim G(n,p,q)$ has $\NN(\bm{\varphi})=o(n)$ nodal domains. 
\end{theorem}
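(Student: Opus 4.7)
The plan is to construct, for each bulk eigenvector $\bm{\varphi}$, an explicit partition of $[n]$ into $o(n)$ valid nodal subgraphs. Set $\bm{\sigma} := \sgn(\bm{\varphi})$. A subset $S \subset [n]$ is a nodal subgraph iff $G[S]$ is connected and every edge $\{i,j\} \in E[S]$ is ``good,'' meaning $A_{ij}\bm{\sigma}(i)\bm{\sigma}(j) < 0$. My strategy is to pack $(1-o(1))\,n/\omega$ pairwise disjoint nodal subgraphs of a common slowly growing size $\omega = \omega(n) \to \infty$, and to dump the remaining $o(n)$ vertices into singleton parts, for a total of $o(n)$ parts.

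There are three inputs. First, I would invoke eigenvector delocalization for the centered adjacency matrix $\tilde A = A - (p-q)J/n$, a Wigner-type matrix with bounded entries: with probability $1-O(n^{-\gamma})$ every bulk eigenvector satisfies $\|\bm{\varphi}\|_\infty = n^{-1/2+o(1)}$ and $\bm{\sigma}$ has $(1\pm o(1))n/2$ entries of each sign. Second, I would use a quantum-unique-ergodicity-type comparison to show that on any polynomially small subset of coordinates the law of $\bm{\sigma}$ is close in total variation to a uniform balanced $\pm 1$ vector \emph{independent} of the adjacency entries incident to that subset. Third, conditional on this effective independence, for a uniformly random $\omega$-subset $S$ the probability that $S$ has no bad edges is at least $(1-p)^{2\binom{\omega/2}{2}}(1-q)^{(\omega/2)^2}$; moreover, given no bad edges, $G^<_{\bm{\varphi}}[S]$ is an Erd\H{o}s--R\'enyi-type graph on $\omega$ vertices of density $\Theta(p+q)$ and hence connected with probability $1-o(1)$ as $\omega \to \infty$.

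I would combine these via greedy extraction: process vertices in a random order, at each step attempt to form a nodal $\omega$-subgraph from $\omega$ uniformly random unassigned vertices, and either remove the subgraph on success or return the seed to a singleton pile on failure. Choosing $\omega$ to grow slowly enough that the per-attempt success probability stays $\Omega(n^{-\gamma/2})$, Markov's inequality bounds the total number of singletons by $o(n)$ except with probability $O(n^{-\gamma})$, giving a partition of size $(1+o(1))n/\omega + o(n) = o(n)$. The main obstacle is the decoupling in the second step: $\bm{\sigma}$ is a deterministic function of $A$ and thus globally correlated with the bad edges, so one must quantitatively disentangle $\bm{\sigma}$ on small subsets from the local adjacency structure. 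This is exactly the content of quantum-unique-ergodicity estimates for bulk eigenvectors of Wigner-type matrices, and propagating this control through the $\omega$-subset analysis constitutes the core technical work of the proof.
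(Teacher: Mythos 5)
Your high-level outline (delocalization, a decoupling of the sign pattern from the local adjacency structure, then greedy extraction of small nodal subgraphs) matches the spirit of the paper's argument, but two of your steps have genuine gaps that would sink the proof as written.

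First, the packing count does not close. You want $(1-o(1))n/\omega$ disjoint nodal subgraphs of a common size $\omega\to\infty$. Even granting perfect independence, the probability that a uniformly random $\omega$-subset has \emph{no} bad edge is $e^{-\Theta(\omega^2)}$, so with $\omega$ growing your per-attempt success probability tends to zero; you acknowledge this by taking it to be $\Omega(n^{-\gamma/2})$. But then almost every attempt fails, the expected number of seeds returned to the singleton pile is $(1-n^{-\gamma/2})\cdot\Theta(n)=\Theta(n)$, and Markov's inequality cannot give $o(n)$ singletons. The paper avoids exactly this tension: it fixes a constant domain size $k$ and a constant witness size $s=(p\wedge q)^{-k}$, and shows via Janson's inequality that a $1-\exp(-(p\wedge q)^{-k^2(1/2-o_k(1))})$ fraction of $s$-sets \emph{contain} a $k$-clique nodal domain. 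Because that probability is close to $1$ rather than polynomially small, greedy extraction runs until only $r$ vertices remain with $\binom{r}{s}\le 2\epsilon\binom{n}{s}$, yielding at most $2n/k$ parts; the $o(n)$ bound then comes from letting $k$ be an arbitrarily large constant, not from a single growing $\omega$.

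Second, your decoupling step asserts far more than quantum ergodicity delivers. Known QUE results for Wigner-type matrices (the paper uses \cite{bourgade2017eigenvector} via \cite{huang2020size}) control expectations of \emph{fixed finite-degree polynomials} of $\sqrt{n}\bm{\varphi}(S)$ for \emph{fixed-size} $S$, against a Gaussian; they do not give total-variation closeness of $\sgn(\bm{\varphi})|_S$ to a uniform balanced sign vector, let alone independence from the adjacency entries incident to $S$, and certainly not for $|S|=\omega(n)\to\infty$ (the paper explicitly notes this is why it must work with constant $k$). The paper's substitute is two-fold: a Sobolev-norm polynomial approximation of the indicator-type test function $f_s$, and a resolvent/contour-integral perturbation bound (Lemma 5.5) showing that resampling $A_S$ changes each product $\bm{\varphi}(u)\bm{\varphi}(v)$ by only $O(n^{-3/2+3c})$, which is what actually decouples $A_S$ from $\bm{\varphi}(S)$ inside the polynomial. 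Without an argument of this type, the correlation between $\bm{\sigma}$ and the bad edges is not controlled, and your conditional Erd\H{o}s--R\'enyi connectivity claim for $G^<_{\bm{\varphi}}[S]$ given no bad edges is also unjustified (the paper sidesteps connectivity entirely by restricting to clique domains).
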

}

We do this by showing that for any fixed $k$, we can partition almost the entire graph into nodal domains of size $k$. In order to show that there are many nodal domains in our graph of size $k$, we use quantum ergodicity \cite{huang2015bulk} to show that eigenvector statistics emulate those of random Gaussians. Therefore, fix $k$, and consider $\bm{\varphi}$ the $i$th eigenvector of $A$, where $A$ is the adjacency matrix of the graph $G\sim G(n,p,q)$. We proceed as follows.

\begin{enumerate}
\item
For a set of vertices $S$, create a function $f_s(A_S,\bm{\varphi}(S))$, {where $A_S$ is $A$ restricted to the set $S$}, on the induced subgraph on $S$ that confirms there is a nodal domain inside $S$. 

\item
Approximate $f$ with a finite degree polynomial $p_s$ of entries of the matrix and entries of the eigenvector. 
\item
Show that the matrix entries and eigenvector are close to independent in $p_s$. 

\item
Use quantum ergodicity to compare $\E(p_s(A_S,\bm{\varphi}))$ with $\E(p_s(A_S,\mathbf{g}))$, where $\mathbf{g}$ is a multivariate standard normal Gaussian.

\item
Show that with Gaussian inputs there are many nodal domains. 
\end{enumerate}
This general method was used in \cite{huang2020size} to show the two nodal domains of an unsigned Erd\H{o}s-R\'enyi graph are approximately the same size in the bulk of the spectrum. However, there are key challenges specific to our question. (i) Our function $f$ does not rely solely on $\bm{\varphi}$, but on $A$ as well. (ii) The polynomial approximation in \cite{huang2020size} is found using the closure of univariate polynomials in a specific Sobolev norm. Our function testing for nodal domains must be multivariate, as we need to control the sign of $\binom k2$ edges at once. (iii) We need to worry about the \emph{overlap} of different nodal domains. For example, merely checking the number of size $k$ nodal domains is not sufficient, as all of these could overlap on one vertex.

To solve (i), we have the added step of showing that $A$ and $\bm{\varphi}$ are close to independent in $f$. Because $\pphi$ is delocalized with high probability, perturbing a small number of entries does not significantly change $\pphi$ or $p_s$. We do this by Taylor expanding products of eigenvector entries. For (ii) we use results concerning the density of univariate polynomials in Sobolev norms of Rodr\'iguez \cite{rodriguez2003approximation}. These results do not generalize to all multivariate polynomials, but nevertheless, we can interpret our function $f$ as a composition of univariate functions, and show that approximation of each of these univariate functions is sufficient.  For (iii), rather than count the number of sets of size $k$ that are nodal domains, we count the number of sets of size $s$ that \emph{contain} a nodal domain, for $s\gg k$. There is some delicacy needed in that we require a set size $s$ that is large enough such that almost all sets contain a nodal domain, but small enough that when no sets of size $s$ that have a nodal domain are left, there are only few vertices left. This tightness is shown by Janson's Inequality \cite{janson2011random}.

Results in quantum ergodicity concern finite degree polynomials. Therefore we must consider constant $k$ rather than $k$ up to $\log n$, where we would expect these results to remain true. Moreover, the high probability statement is not strong enough to union bound over all indices at once. We suspect that the eigenvector entries are independent enough such that $N(\bm{\varphi})$ will emulate the chromatic number of a $G(n,p)$ graph and will be $\Theta(n/\log n)$.
{
\begin{conjecture}
Fix constants $0<\epsilon, p,q<1$. There are constants $c_1(\epsilon,p,q), c_2(\epsilon,p,q)$ such that with probability $1-o_N(1)$, for any index $i\in[\epsilon n,(1-\epsilon)n]$, the $i$th eigenvector $\pphi$ of a randomly sampled $G(n,p,q)$ satisfies $c_1 n/\log n\leq N(\pphi)\leq c_2 n/\log n$. 
\end{conjecture}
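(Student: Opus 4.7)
The plan is to strengthen the upper bound of Theorem \ref{thm:mainavg} from $o(n)$ to $O(n/\log n)$ and to make it uniform over bulk indices; the matching lower bound is already provided by Proposition \ref{prop:avglower}. The strategy is to identify $\mathsf{N}(\pphi)$, up to a constant factor, with the chromatic number of the companion random graph
\[
G^+_{\pphi} \,:=\, \bigl([n],\; \{(i,j) \in E(G) : A_{ij}\pphi(i)\pphi(j) > 0\}\bigr).
\]
Any nodal domain is independent in $G^+_{\pphi}$ and connected in $G$, so $\mathsf{N}(\pphi) \ge \chi(G^+_{\pphi})$. Conversely, given a proper coloring of $G^+_{\pphi}$ into classes of size $s = \Theta(\log n)$, each class $C$ induces in $G[C]$ only ``good'' edges, whose density is $\approx (p+q)/2$; provided $s$ exceeds the Erd\H{o}s--R\'enyi connectivity threshold, $G[C]$ is connected with probability $1 - n^{-\omega(1)}$, and any disconnected class is split into its $O(1)$ components, yielding $\mathsf{N}(\pphi) \le (1+o(1))\,\chi(G^+_{\pphi})$.

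The heuristic is that, conditionally on the signs $\sigma(i) := \sgn(\pphi(i))$, each edge $(i,j) \in E(G)$ lands in $G^+_{\pphi}$ with conditional probability $p$ (if $\sigma(i)\sigma(j) = +1$) or $q$ (if $\sigma(i)\sigma(j) = -1$); after averaging over near-uniform signs, $G^+_{\pphi}$ behaves like a $G(n,(p+q)/2)$-type Erd\H{o}s--R\'enyi model, and Bollob\'as's theorem then yields $\chi(G^+_{\pphi}) = \Theta(n/\log n)$ with explicit constants. To make this rigorous I would argue in two steps. First, use quantum ergodicity and delocalization (as deployed in Section \ref{sec:average}) to show that on any fixed set $S$ the signs $(\sigma(i))_{i \in S}$ are asymptotically i.i.d.\ uniform on $\{\pm 1\}$. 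Second, and more delicately, establish that $\pphi(S)$ is approximately independent of $A_S$: the natural approach is resampling, replacing the $\binom{|S|}{2}$ entries of $A$ inside $S$ with independent copies and using a rank-$O(|S|^2)$ perturbation bound together with delocalization to show $\pphi(S)$ barely changes. Together the two steps should give total-variation closeness of the marginal of $G^+_{\pphi}$ on $S$ to the corresponding Erd\H{o}s--R\'enyi marginal, enough to run a greedy independent-set extraction with Janson-type tail bounds at each of the $O(n/\log n)$ iterations.

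The hard part is this joint approximation of $(A_S, \pphi(S))$ at the scale $|S| = \Theta(\log n)$. The quantum ergodicity tools invoked in the proof of Theorem \ref{thm:mainavg} control polynomials of $\pphi$ of bounded degree, whereas the event ``$S$ is independent in $G^+_{\pphi}$'' is of polynomial degree $\Theta((\log n)^2)$ in the joint variables. To afford a union bound over the $\binom{n}{|S|}$ choices of $S$ and then over the $\Theta(n)$ bulk indices, one needs per-set failure probability $n^{-\omega(1)}$, substantially stronger than the $n^{-\gamma}$ of Theorem \ref{thm:mainavg}. Bridging this likely requires a quantitative joint universality --- for instance, isotropic local laws combined with an eigenvector moment flow of Bourgade--Yau type --- applied simultaneously to eigenvector coordinates and matrix entries on the $\log n$ scale, with error bounds uniform over $i \in [\eps n, (1-\eps)n]$. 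Establishing this joint universality at logarithmic scale is the real crux of the conjecture.
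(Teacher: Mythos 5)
This statement is stated in the paper as a conjecture: the authors prove only the one-sided bounds $\NN(\pphi)=\Omega(n/\log n)$ (Proposition \ref{prop:avglower}) and $\NN(\pphi)=o(n)$ for a single bulk index with probability $1-O(n^{-\gamma})$ (Theorem \ref{thm:mainavg}), and they explicitly identify why their method stops short of the conjectured $\Theta(n/\log n)$ with uniformity over all bulk indices. So there is no proof in the paper to compare yours against, and what you have written is a program rather than a proof. To your credit, you locate the obstruction in exactly the same place the authors do: the quantum ergodicity input (Lemma \ref{lem:qe} via \cite{bourgade2017eigenvector}) controls only \emph{fixed-degree} polynomials of finitely many eigenvector coordinates, with error $n^{-\nu}$ rather than $n^{-\omega(1)}$, whereas your scheme needs joint control of $(A_S,\pphi(S))$ for $|S|=\Theta(\log n)$, i.e.\ events of polynomial degree $\Theta((\log n)^2)$, with per-set failure probability small enough to survive a union bound over $\binom{n}{|S|}$ sets and then over $\Theta(n)$ indices. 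That is precisely the step you do not supply, so the proposal does not constitute a proof of the conjecture.

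Beyond the acknowledged crux, two further steps in your reduction need repair. First, the passage from a proper coloring of $G^{>}_{\pphi}$ back to a nodal decomposition is not as clean as stated: the color classes of a (near-)optimal coloring are chosen adaptively as a function of the realized graph, so you cannot condition on a class $C$ and then treat $G[C]$ as an Erd\H{o}s--R\'enyi graph on $|C|=\Theta(\log n)$ vertices to conclude connectivity; moreover even for a uniformly random set of size $c\log n$ the connectivity failure probability is only polynomially small in $n$, and the number of components of a disconnected class need not be $O(1)$ uniformly over all $\Theta(n/\log n)$ classes without a separate argument. Second, concentration of $\chi(G^{+}_{\pphi})$ at the Bollob\'as value is not a consequence of total-variation closeness of the \emph{marginals} of $G^{+}_{\pphi}$ on sets of size $\Theta(\log n)$: Bollob\'as's argument relies on martingale concentration coming from the product structure (edge- or vertex-exposure) of $G(n,p)$, and $G^{+}_{\pphi}$ has no such product structure because $\pphi$ depends globally on all of $A$, so each resampled entry of $A$ perturbs every sign $\sgn(\pphi(i))$ simultaneously. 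Absent a genuinely new joint universality statement at logarithmic scale with superpolynomially small error, both the union bound and the second-moment/concentration steps fail, which is why the statement remains open.
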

}
Note that we require our eigenvalue is in the ``bulk'' of the spectrum, in order to use quantum ergodicity. In Appendix \ref{sec:averagepath}, we compare this result to the path nodal domains of Mohammadian and Ge and Liu {and show that merely counting path nodal domains does not indicate a nontrivial relationship between the eigenvector and the structure of the graph.}

\begin{proposition}
    For fixed $0<p,q<1$, consider the adjacency matrix of the graph $G\sim G(n,p,q)$. With probability {$1-n^{-\omega(1)}$}, every eigenvector $\pphi$ satisfies $\kappa(G^<_{\bm{\varphi}})=1$.
\end{proposition}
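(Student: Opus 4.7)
The plan is to adapt the strategy of \cite{dekel2011eigenvectors} (originally carried out for unsigned $G(n,p)$) by showing that, with probability $1-n^{-\omega(1)}$, every pair $u\ne v\in[n]$ is joined in $E^<_{\pphi}$ by a length-two path $u-w-v$ for every eigenvector $\pphi$ of the signed adjacency matrix $A$. Union-bounding the failure probability over the $\binom{n}{2}$ pairs of vertices and the $n$ eigenvectors preserves the $n^{-\omega(1)}$ bound and forces $\kappa(G^<_{\pphi})=1$ for every $\pphi$ on the overwhelmingly likely event.

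I would first record the random matrix inputs. Since the off-diagonal entries of $A$ are i.i.d., uniformly bounded, and have constant positive variance, $A$ lies in the Wigner-type universality class, so with probability $1-n^{-\omega(1)}$ every eigenvector satisfies the delocalization bound $\|\pphi\|_\infty\le n^{-1/2+o(1)}$, and the isotropic local law applies. Fix $u\ne v$ and let $A^{(u,v)}$ be the principal submatrix obtained by deleting rows and columns $u,v$; the $2(n-2)$ variables $\{A_{uw},A_{vw}\}_{w\ne u,v}$ are then mutually independent and independent of $A^{(u,v)}$. Importing the eigenvector sign-stability input from \cite{dekel2011eigenvectors}, the sign patterns of every eigenvector $\pphi$ of $A$ and of the corresponding eigenvector $\widetilde\pphi$ of $A^{(u,v)}$ agree on a common set $W\subseteq[n]\setminus\{u,v\}$ with $|W|\ge(1-o(1))n$. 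For each $w\in W$ and each of the four possible sign combinations of $(\sgn(\pphi(u)),\sgn(\pphi(v)))$, the probability that both $A_{uw}\pphi(u)\pphi(w)<0$ and $A_{vw}\pphi(v)\pphi(w)<0$ is at least $(p\wedge q)^2$, since we need each of the two independent edges to take a specific sign in $\{+1,-1\}$ and the less likely direction still occurs with probability $p\wedge q$. A Chernoff bound over the $|W|$ independent trials then produces at least one good two-path with probability $1-\exp(-\Omega(n))$, uniformly in the four sign combinations.

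The principal obstacle is the sign-stability transfer from $A^{(u,v)}$ to $A$. A naive Davis-Kahan bound is far too weak: $\|A-A^{(u,v)}\|_{\mathrm{op}}=\Theta(\sqrt n)$ while bulk spectral gaps are $\Theta(n^{-1/2})$, so in the worst case eigenvectors could be rotated arbitrarily. The resolution, mirroring \cite{dekel2011eigenvectors}, is to combine eigenvector delocalization with the isotropic local law: the projection of the rank-two perturbation onto a single delocalized eigenvector is much smaller than the crude operator-norm bound, and this refinement confines sign flips to $o(n)$ coordinates. The Wigner-type structure of $A$ makes these inputs available in the signed model with no change, so once the sign-stability ingredient is in hand, the Chernoff estimate and union bound above are routine.
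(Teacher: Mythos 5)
Your proposal diverges from the paper's argument and, as written, has a genuine gap at exactly the step you flag as ``the principal obstacle.'' The sign-stability transfer from $A^{(u,v)}$ to $A$ is not a routine consequence of delocalization plus the isotropic local law. First, the correspondence between an eigenvector $\pphi_j$ of $A$ and ``the corresponding'' eigenvector of the minor $A^{(u,v)}$ degenerates whenever the relevant spectral gap is small; gaps below any fixed polynomial scale cannot be excluded with probability $1-n^{-\omega(1)}$ (known gap lower bounds for Wigner-type matrices hold only with polynomially small failure probability), and near a near-degeneracy $\pphi_j$ is a nontrivial mixture of two minor eigenvectors, so no sign agreement on $(1-o(1))n$ coordinates can be asserted for \emph{every} eigenvector on an event of probability $1-n^{-\omega(1)}$. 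Second, and independently of this, even if such a set $W$ of sign-agreement exists, $W$ is a function of the full matrix $A$, including the very variables $\{A_{uw},A_{vw}\}_w$ you then resample. Conditioning on $w\in W$ therefore biases the law of $A_{uw},A_{vw}$, and the ``$|W|$ independent trials'' feeding your Chernoff bound are not independent of the event defining the trials. Repairing this would require either a union bound over all candidate sets $W$ (hopeless) or a set of stable coordinates measurable with respect to $A^{(u,v)}$ alone, which perturbation theory does not supply. So the two-path/resampling route does not close.

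The paper's proof avoids resampling entirely. It first shows $\kappa(G^>_{\pphi})\le 3\log_{1/(1-p\vee q)}n$ by picking one representative per component and union-bounding over the $\binom{n}{k}2^k$ choices of representatives and sign patterns, since all $\binom{k}{2}$ induced edges must fail to be ``good''; it then shows the second largest component is a single vertex by the same style of union bound applied to the edges between a small component and the giant one. The only analytic input is reserved for isolated vertices: if $v$ is isolated then every edge at $v$ satisfies $A_{uv}\pphi(u)\pphi(v)\le 0$, so the eigenvector equation gives $|\lambda\pphi(v)|=\sum_{u\sim v}|\pphi(u)|\ge \sqrt{n}/\log^5 n$ by the no-gaps delocalization of Rudelson--Vershynin, contradicting $|\lambda|\,\|\pphi\|_\infty=O(\sqrt{n}\cdot\log^4 n/\sqrt{n})$ (with the top eigenvector handled separately via its proximity to $\vec{1}_{[n]}/\sqrt{n}$). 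If you want to salvage your two-path idea, you would need to replace the sign-stability step with an argument of this deterministic, union-bound-over-sign-patterns type; as proposed, the independence structure you rely on is not actually present.
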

This follows using the same proof as \cite{dekel2011eigenvectors}, using more recent eigenvector delocalization results, as is done in \cite{rudelson2017delocalization}.

We finish this discussion by giving a proof of the lower bound for $G(n,p,q)$. 
\begin{prop}\label{prop:avglower}
With high probability, for fixed $0<p,q<1$, any eigenvector $\bm{\varphi}$ of the adjacency matrix of a $G\sim G(n,p,q)$ has $\NN(\bm{\varphi})=\Omega(n/\log_{\frac{1}{1-(p\wedge q)}} n)$.
\end{prop}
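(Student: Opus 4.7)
The plan is to establish a purely combinatorial upper bound of $O(\log n)$ on the size of any nodal subgraph in any decomposition, from which the lower bound on $\mathsf{N}(\bm{\varphi})$ follows by pigeonhole. The only structure of the eigenvector that is needed is its sign pattern. Specifically, if $[n] = \bigsqcup_{\ell=1}^s V_\ell$ is any nodal decomposition of $\bm{\varphi}$, then for each $\ell$ the restriction of $\sgn(\bm{\varphi})$ to $V_\ell$ witnesses that the signed subgraph $(G[V_\ell], \sigma|_{V_\ell})$ has frustration index zero (equivalently, is \emph{balanced}), and balancedness is inherited by induced subgraphs. Hence it suffices to show that with high probability, no subset $S \subset [n]$ with $|S| = k^* := \lceil C \log_{1/(1 - (p \wedge q))} n \rceil$ induces a balanced signed subgraph, for $C$ a sufficiently large constant; this forces $|V_\ell| \le k^*$ for every $\ell$, and therefore $\mathsf{N}(\bm{\varphi}) = s \ge n/k^* = \Omega(n/\log_{1/(1-(p\wedge q))} n)$.

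For the combinatorial claim, I fix $S$ with $|S| = k^*$ and a signing $\bm{\varepsilon} \in \{\pm 1\}^S$. For each pair $\{i,j\} \subset S$, the edge is \emph{consistent} with $\bm{\varepsilon}$ (i.e.\ either absent, or present with $\sigma_{ij} \bm{\varepsilon}(i) \bm{\varepsilon}(j) = +1$) with probability $1-p$ when $\bm{\varepsilon}(i)\bm{\varepsilon}(j) = +1$ and $1-q$ when $\bm{\varepsilon}(i)\bm{\varepsilon}(j) = -1$; in both cases this is at most $1 - (p \wedge q)$. By independence of edges in $G(n,p,q)$,
\begin{equation*}
\mathbb{P}\bigl((G[S], \sigma|_S) \text{ is consistent with } \bm{\varepsilon}\bigr) \le \bigl(1 - (p \wedge q)\bigr)^{\binom{k^*}{2}}.
\end{equation*}
A union bound over the $\binom{n}{k^*} \le n^{k^*}$ subsets $S$ and the $2^{k^*-1}$ signings modulo global flip bounds the expected number of balanced induced subgraphs of size $k^*$ by $n^{k^*} 2^{k^*} (1 - (p \wedge q))^{\binom{k^*}{2}}$, which tends to zero as soon as $k^* \ge (2 + \eta) \log_{1/(1 - (p \wedge q))} n$ for any fixed $\eta > 0$, by a standard manipulation of the logarithm.

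There is no substantive obstacle; the argument is a first-moment calculation entirely analogous to the classical $O(\log n)$ bound on the maximum clique of a dense Erd\H{o}s-R\'enyi graph that was alluded to in the introduction. A minor subtlety is that $\bm{\varphi}$ may vanish at some entries, but the strict inequality $M_{ij}\bm{\varphi}(i)\bm{\varphi}(j) < 0$ fails at any edge incident to a zero entry, so any vanishing vertex is forced to be isolated in its nodal subgraph, which only strengthens the bound. Note also that this argument does not invoke any eigenvector-specific information, consistent with the remark preceding the proposition that the true nodal count should emulate the chromatic number of $G(n,p,q)$ and hence is expected to be $\Theta(n/\log n)$.
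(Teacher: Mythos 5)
Your proposal is correct and is essentially the paper's own argument: a first-moment union bound over all $\binom{n}{k}$ vertex subsets and all $2^{k}$ signings showing that with high probability no set of size $\gtrsim \log_{1/(1-(p\wedge q))} n$ can be a nodal subgraph, followed by pigeonhole. Your reframing via balanced induced signed subgraphs is only a cosmetic repackaging of the paper's "union bound over all possible signings," and your handling of vanishing entries is a harmless variant of the paper's appeal to the non-vanishing of eigenvectors.
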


\begin{proof}
Consider any set of vertices $S$ of size $k$. We will show that typically, there is no signing of eigenvectors that makes $S$ a nodal domain. With high probability, $\bm{\varphi}$ is nonzero \cite{nguyen2017random}. Take the signs of $\bm{\varphi}$ to be arbitrarily fixed. In order for a set of vertices $S$ to be a nodal domain, we must have $A_{ij}\pphi(i)\pphi(j)\geq 0$. Each edge satisfies this with probability at most $((1-p)\vee(1-q))$. Therefore the probability that $S$ forms a nodal domain is at most $((1-p)\vee(1-q))^{\binom k2}$. Union bounding over all possible signings, if we assume without loss of generality that $p\geq q$, the probability that $S$ forms a nodal domain is at most $2^k(1-q)^{\binom k2}= \exp(k\log 2+\binom k 2\log (1-q))$. Therefore the probability that there exists any such set is at most 
\[
\binom{n}k\exp\left(k\log 2+\binom k2\log (1-q)\right).
\]

As $\binom nk\leq n^k$, if, say, $k=3\log_{\frac{1}{1-q}} n$, then with probability $n^{-\Omega(\log n)}$ there are no nodal domains of size $k$.  As no nodal domain can have size $k$, there are at least $\Omega(n/\log_{\frac{1}{1-q}} n)$ domains. The desired result follows from a union bound over all $\pphi$. 
\end{proof}

\section{Classical Nodal Bounds}\label{sec:non-vanishing}

In this section, we provide tight upper and lower bounds on the nodal count of a non-vanishing eigenvector of a symmetric matrix in terms of the corresponding eigenvalue index and multiplicity, and a number of graph invariants. In addition, for vanishing eigenvectors, we prove the existence of orthonormal eigenbases with strong nodal count satisfying Courant-type nodal upper bounds. Finally, we illustrate that the set of eigenbases satisfying such conditions may be of measure zero.

\subsection{Non-Vanishing Nodal Count}\label{sub:generic}

Here, we prove tight bounds on the nodal count of a non-vanishing eigenvector. This result follows from \cite{mohammadian2016graphs} combined with an argument regarding the number of additional domains created by frustrated edges. However, we provide a direct proof by modifying a technique of Fiedler \cite{fiedler1975eigenvectors} and making use of known results regarding the inertia of signed Laplacian matrices. We have the following result.

\begin{proposition}\label{prop:generic_bound}
Let $M$ be a symmetric irreducible matrix and $\bm{\varphi}$ be a non-vanishing eigenvector corresponding to an eigenvalue of index $k$ and multiplicity $r$. Then
\begin{equation}\label{ineq:generic}
k +(r-1) - \mathsf{\nu}  \le \mathsf{N}(\bm{\varphi}) \le k + \mathsf{f},
\end{equation}
where $\nu$ and $\mathsf{f}$ are the cyclomatic number and frustration index of the signed graph of $M$.
\end{proposition}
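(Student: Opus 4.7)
My plan is to use a Fiedler-style variational argument for the upper bound and a complementary cyclomatic/spanning-tree argument for the lower bound. As a preliminary reduction for both, I apply the involutory diagonal transformation $M\mapsto DMD$, $\bm{\varphi}\mapsto D\bm{\varphi}$ corresponding to a frustration-minimizing state; this preserves every per-edge quantity $M_{ij}\bm{\varphi}(i)\bm{\varphi}(j)$, and hence $\mathsf{N}(\bm{\varphi})$, while reducing to the case that $M$ has exactly $\mathsf{f}$ positive off-diagonal entries.

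For the upper bound, let $[n]=\sqcup_{\ell=1}^{s}V_{\ell}$ be a minimum nodal partition with $s=\mathsf{N}(\bm{\varphi})$, and form the pairwise-orthogonal test vectors $\psi_{\ell}:=\bm{\varphi}\cdot\overrightarrow{1_{V_{\ell}}}$. Using $M\bm{\varphi}=\lambda\bm{\varphi}$, a direct computation gives $\langle\psi_{\ell},M\psi_{\ell'}\rangle=B_{\ell\ell'}$ for $\ell\neq\ell'$ and $\langle\psi_{\ell},M\psi_{\ell}\rangle=\lambda\|\psi_{\ell}\|^{2}-\sum_{\ell'\neq\ell}B_{\ell\ell'}$, where $B_{\ell\ell'}=\sum_{i\in V_{\ell},\,j\in V_{\ell'}}M_{ij}\bm{\varphi}(i)\bm{\varphi}(j)$. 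With $\Sigma=\mathrm{diag}(\|\psi_{\ell}\|)$, the Rayleigh matrix $\tilde M$ of $M$ on $\mathrm{span}\{\psi_{\ell}\}$ satisfies $\Sigma(\tilde M-\lambda I)\Sigma=\tilde L$, where $\tilde L$ is a weighted signed Laplacian on the contracted $s$-vertex multigraph $\tilde G$ whose edges are the boundary edges of the partition (with edge weight $-M_{ij}\bm{\varphi}(i)\bm{\varphi}(j)$: positive on good boundary edges, negative on bad). The crucial step is the inertia bound $n_{+}(\tilde L)\le\mathsf{f}$; combined with Cauchy interlacing applied to the projection onto $\mathrm{span}\{\psi_{\ell}\}$, this forces $\lambda_{s-\mathsf{f}}(M)\le\lambda=\lambda_{k}(M)$, hence $s\le k+\mathsf{f}$.

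For the lower bound, I would first note that $\mathsf{N}(\bm{\varphi})\ge\kappa(G^{<}_{\bm{\varphi}})$, since every piece of a nodal decomposition is connected by edges in $E^{<}_{\bm{\varphi}}$ and so lies in a single component of $G^{<}_{\bm{\varphi}}$. It then suffices to prove $\kappa(G^{<}_{\bm{\varphi}})\ge k+r-1-\nu$. I would choose a spanning tree $T$ of $G$, so that $G$ has $\nu$ non-tree edges; since acyclic signed graphs have frustration zero, the restriction of $M$ to $T$ is gauge-equivalent to a generalized Laplacian and Fiedler's tree theorem yields an exact strong nodal count of $k+r-1$ on $T$. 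A Berkolaiko-style cycle-surplus perturbation argument would then show that each of the $\nu$ non-tree edges, when reinserted, decreases the component count of $G^{<}_{\bm{\varphi}}$ by at most one, producing the claimed bound.

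The main obstacle will be the inertia estimate for $\tilde L$: the naive decomposition $\tilde L=Q_{\mathrm{good}}-Q_{\mathrm{bad}}$ into positive and negative semidefinite pieces together with a Weyl-type inequality give only $n_{+}(\tilde L)\le\#\{\text{good boundary edges}\}$, which can greatly exceed $\mathsf{f}$. The correct argument must exploit both the minimality of the nodal decomposition and a signed-Laplacian-specific inertia result linking the positive inertia of $\tilde L$ directly to the frustration index of the ambient signed graph $G$, rather than to any more naive combinatorial quantity of the contracted graph $\tilde G$. A secondary difficulty lies in controlling how $\bm{\varphi}$ and its nodal structure evolve along the cycle-edge perturbation used for the lower bound, particularly when intermediate eigenvectors could become vanishing.
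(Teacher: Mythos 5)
Your gauge reduction and the observation $\mathsf{N}(\bm{\varphi})\ge\kappa(G^{<}_{\bm{\varphi}})$ are both correct, but neither half of the argument closes. For the upper bound there are two problems. First, the inertia estimate $n_{+}(\tilde L)\le\mathsf{f}$ on which everything hinges is exactly the step you leave unproved, and nothing in your setup supplies it: $\tilde L$ is a congruence-compression of the signed Laplacian $B=D_{\bm{\varphi}}(M-\lambda I)D_{\bm{\varphi}}$, which only yields $n_{+}(\tilde L)\le n_{+}(B)=n-k-r+1$, and minimality of the partition controls which \emph{pairs} of pieces are joined by bad boundary edges (any two pieces joined by a good edge must also be joined by a bad one, else they merge) rather than the rank of the positive part of $\tilde L$. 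Second, even granting that estimate, Courant--Fischer gives $\lambda_{s-\mathsf{f}}(M)\le\lambda$, which only forces $s-\mathsf{f}\le k+r-1$ (there are $k+r-1$ eigenvalues at most $\lambda$), so you would land at $s\le k+(r-1)+\mathsf{f}$, not $k+\mathsf{f}$. The paper sidesteps both issues by never analyzing the minimal partition variationally: it pins down the inertia of $B$ by Sylvester's law and invokes the Bronski--DeVille inertia bounds for signed Laplacians \cite{bronski2014spectral} to conclude $\kappa(G^{<}_{\bm{\varphi}})\le k$ from the count of eigenvalues \emph{strictly} below $\lambda$ (which is where the sharp $k$, with no $r-1$ loss, comes from), and then \emph{constructs} a decomposition by splitting the components of $G^{<}_{\bm{\varphi}}$ at most $\mathsf{f}$ additional times to remove the positive off-diagonal entries of the gauged $M$.

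For the lower bound, the spanning-tree route does not work as described: $\bm{\varphi}$ is an eigenvector of $M$, not of the matrix obtained by deleting the $\nu$ non-tree entries, so Fiedler's tree theorem says nothing about the sign pattern of $\bm{\varphi}$ restricted to $T$, and ``reinserting'' the non-tree edges changes both the eigenvector and the index of the eigenvalue. The interpolation you gesture at is essentially the content of Berkolaiko's theorem itself --- a substantial argument you would need to redo in the signed setting while controlling vanishing of intermediate eigenvectors --- so this is a restatement of the difficulty, not a proof. The paper's lower bound is instead a short counting argument from the same inertia result: with $\kappa_{\pm}$ and $e_{\pm}$ the component and edge counts of the positive/negative parts of $B$, one has $\mathsf{N}(\bm{\varphi})\ge\kappa_{-}\ge n-e_{-}$, $e_{+}\ge n-\kappa_{+}\ge k+r-2$, and $e_{-}+e_{+}=\nu+n-1$, which combine to give $k+(r-1)-\nu$. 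The single ingredient missing from your proposal is that inertia theorem for signed Laplacians; with it in hand, both inequalities follow without any variational or perturbative machinery.
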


\begin{proof}
Consider a symmetric, irreducible $n \times n$ matrix $M$ with eigenpair $(\lambda,\bm{\varphi})$, where $\lambda$ has index $k$ and multiplicity $r$, and $\bm{\varphi}$ is non-vanishing ($\bm{\varphi}(i) \ne 0$ for all $i$). The matrix $B=D_{\bm{\varphi}} (M-\lambda I) D_{\bm{\varphi}}$, where $D_{\bm{\varphi}}$ is a diagonal matrix with $\bm{\varphi}$ on the diagonal, is a signed Laplacian matrix, and, by Sylvester's law of inertia \cite[Thm. 8.1.17]{golub2013matrix}, has $n-k-r+1$ positive and $k-1$ negative eigenvalues.

We make use of the following result regarding the inertia of a signed Laplacian: let $\kappa_+$ and $\kappa_-$ be the number of connected components of the graph of a $n \times n$ signed Laplacian $L$ restricted to positive and negative entries, respectively; then 
$$ \kappa_+ - 1 \le \lambda^+ \le n - \kappa_- \qquad \text{and} \qquad \kappa_- - 1 \le \lambda^- \le n- \kappa_+,$$
where $\lambda^+$ and $\lambda^-$ are the number of positive and negative eigenvalues of $L$, respectively \cite[Thm. 2.10]{bronski2014spectral}.

Therefore, for $B=D_{\bm{\varphi}} (M-\lambda I) D_{\bm{\varphi}}$, we have\footnote{{Note that $\kappa_+$ and $\kappa_-$ of $B$ in the proof of Proposition \ref{prop:generic_bound} are exactly the quantities $G^>_{\bm{\varphi}}$ and $G^<_{\bm{\varphi}}$.}} 
$${ \kappa_+ - 1 \le n-k-r+1 \qquad \text{and} \qquad \kappa_- - 1 \le k-1 .}$$
Let $e_+$ and $e_-$ be the number of pairs of off-diagonal entries of $B$ that are positive and negative, respectively. Then $\kappa_+ \ge n- e_+$ and $\kappa_- \ge n - e_-$. {Using the above inequalities and recalling that $\nu = e-n+1$ and $e=e_+ + e_-$, we obtain our desired lower bound
\begin{align*}
\mathsf{N}_M(\bm{\varphi}) \ge \kappa_- &\ge n - e_- \\
&= n - e + e_+ \\ &\ge n - e + n - \kappa_+ \\ &\ge n - e + n - (n-k-r+2) \\ &=n - e + k + r -2 \\ &= k + (r-1) - \nu.
\end{align*}}
Suppose (w.l.o.g.) that $M$ has exactly $\mathsf{f}$ pairs of positive off-diagonal entries. Then, by choosing a nodal decomposition $[n] = \sqcup_{\ell = 1}^s V_\ell$ where entries $i$ and $j$ are in the same nodal subgraph only if $M_{ij}\le 0$ and $s$ is as small as possible, we have $\mathsf{N}_M(\bm{\varphi}) \le s \le \kappa_- + \mathsf{f} \le k+\mathsf{f}$, completing the proof.
\end{proof}

\begin{remark}\label{remark:invariants} {The proof of Proposition \ref{prop:generic_bound} actually produces a lower bound of $k+(r-1) -\nu + \nu_+ + \nu_-$, where $\nu_+$ and $\nu_-$ are the cyclomatic numbers of the graphs of $M$ restricted to entries where $M_{ij} \bm{\varphi}(i) \bm{\varphi}(j) >0$ and $M_{ij} \bm{\varphi}(i) \bm{\varphi}(j) <0$, respectively. However, in this work we attempt to focus on bounds in terms of graph invariants rather than quantities depending on the eigenvector itself. }
\end{remark}

Below we give a simple example illustrating the tightness of the bounds in Proposition \ref{prop:generic_bound} in general.

\begin{example}
Let $M$ be the negative adjacency matrix of the path on $n$ vertices, where $n+1$ is an odd prime, {with eigenpairs $\{(\lambda_k,\bm{\varphi}_k)\}_{k=1}^n$.} Consider $B = M + \epsilon C$, for some $\epsilon < 2/(n+1)^5$ and symmetric matrix $C$ with $|C_{ij}|\le 1$, $i,j \in [n]$. The minimal eigenvalue gap of $M$ is bounded below by $ \min_{\lambda,\lambda' \in \Lambda(M)} |\lambda - \lambda'| \ge 3\pi^2/4(n+1)^2$, and, because $n+1$ is an odd prime, every entry of each eigenvector of $M$ is bounded away from zero, namely
$$ |\bm{\varphi}_k(i)|  = \big| \sqrt{2/n} \sin[i k \pi/(n+1)] \big| \ge \pi /\sqrt{2}(n+1)^{3/2} \quad \text{for all } i,k \in [n].$$
Let $\{(\mu_k,\bm{\psi}_k)\}_{k=1}^n$ be the eigenpairs of $B$. By \cite[Cor. 8.1.6]{golub2013matrix}, $|\lambda_k - \mu_k| \le \epsilon n$, and so the spectrum of $B$ is simple and interlaces with that of $M$. In addition, by \cite[Thm. 8.1.12]{golub2013matrix}, $\| \bm{\varphi}_k-\bm{\psi}_k\| < \pi /\sqrt{2}(n+1)^{3/2}$, and so the eigenvectors of $B$ are also non-vanishing and have the same sign pattern as the eigenvectors of $M$. By Fielder's tree theorem \cite[Corollary 2.5]{fiedler1975eigenvectors}, $\mathsf{N_M}(\bm{\varphi}_k) = k$. To illustrate the tightness of the lower bound, we note that, for any $\nu\le k-2$, we may add $\nu$ edges to $M$ (through the matrix $C$), connecting $\nu$ pairs of non-adjacent nodal domains, and resulting in $\mathsf{N_B}(\bm{\psi}_k) = k - \nu$. For the upper bound, we note that when $k,\mathsf{f}\ll n$, each nodal domain is large and we may add non-crossing frustrated edges (with respect to the path ordering) within nodal domains, giving $\mathsf{N_B}(\bm{\psi}_k)=k+\mathsf{f}$.
\end{example}

\subsection{Orthonormal Eigenbases with Classical Strong Nodal Count} When an eigenvector has some vanishing entries, the above bounds no longer hold in general. By slightly modifying an argument of Mohammadian \cite{mohammadian2016graphs}, it is not hard to show that an upper bound of $\mathsf{N}^s(\bm{\varphi}) \le k + (r-1) + \mathsf{f}$ holds and is tight in general. However, by using a well-chosen orthonormal eigenbasis, we can obtain improved upper bounds, using a variation on a well-known technique {(see \cite[Sec. 3.2]{biyikoglu2007laplacian}, a variation on Courant's original technique \cite{courant1923allgemeiner,courant1924methoden}).}

\begin{proposition}\label{prop:strong_basis}
Let $M$ be an $n \times n$ symmetric matrix. There exists an orthonormal eigenbasis $\bm{\varphi}_1,...,\bm{\varphi}_n$ ordered by increasing energy, satisfying
\begin{equation}\label{ineq:strong_basis}
\mathsf{N}^s(\bm{\varphi}_k) \le k + \mathsf{f}, \qquad \qquad k = 1,...,n,
\end{equation}
where $\mathsf{f}$ is the frustration index of the signed graph of $M$.
\end{proposition}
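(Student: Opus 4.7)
I would prove Proposition \ref{prop:strong_basis} by a Courant-type variational construction, extending Gladwell-Zhu's argument \cite{gladwell2002courant} for generalized Laplacians. Since $\mathsf{N}^s$ is invariant under the paired substitution $(M, \bm{\varphi}) \mapsto (DMD, D\bm{\varphi})$ for any involutory diagonal $D$, I may assume that $M$ has exactly $\mathsf{f}$ positive off-diagonal entries, and decompose $M = L - N$ with $L$ a generalized Laplacian and $N$ symmetric, supported on the $\mathsf{f}$ frustrated pairs with zero diagonal (so $N$ has at most $\mathsf{f}$ positive and $\mathsf{f}$ negative eigenvalues). The sharper bound $k + \mathsf{f}$---rather than the $k + (r-1) + \mathsf{f}$ that follows for an arbitrary orthonormal basis by applying Proposition \ref{prop:generic_bound} to the irreducible blocks of $M$ restricted to the non-vanishing support and invoking Cauchy interlacing---requires careful basis selection when $\lambda_k$ has multiplicity.

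The basis is constructed greedily. Having fixed orthonormal $\bm{\varphi}_1, \ldots, \bm{\varphi}_{k-1}$ with $\mathsf{N}^s(\bm{\varphi}_j) \leq j + \mathsf{f}$, let $W_k := E_{\lambda_k} \cap \mathrm{span}(\bm{\varphi}_1, \ldots, \bm{\varphi}_{k-1})^\perp$ and choose $\bm{\varphi}_k \in W_k$ of unit norm minimizing $\mathsf{N}^s$. Suppose for contradiction that $\mathsf{N}^s(\bm{\varphi}_k) = s \geq k + \mathsf{f} + 1$, realized by a minimum nodal decomposition $V_1, \ldots, V_s$ of the non-vanishing support; set $\bm{u}_\ell := \bm{\varphi}_k\vert_{V_\ell}$ and $U := \mathrm{span}(\bm{u}_1, \ldots, \bm{u}_s)$, of dimension $s$ since the $\bm{u}_\ell$ have disjoint supports. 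I would aim to derive a contradiction by producing a nonzero $\bm{\psi} = \sum_\ell c_\ell \bm{u}_\ell \in U$ satisfying (i) $\bm{\psi} \perp \bm{\varphi}_j$ for $j<k$, (ii) $\bm{\psi}^T M \bm{\psi} \leq \lambda_k \|\bm{\psi}\|^2$, and (iii) $c_\ell = 0$ for at least one $\ell$. Conditions (i)--(ii) combined with Courant-Fischer force $\bm{\psi} \in W_k$ (the smallest eigenvalue of $M$ on $\mathrm{span}(\bm{\varphi}_1,\ldots,\bm{\varphi}_{k-1})^\perp$ is $\lambda_k$ with eigenspace $W_k$), while (iii) gives $\mathsf{N}^s(\bm{\psi}) \leq s - 1$, since $M_{ij}\bm{\psi}(i)\bm{\psi}(j) = c_\ell^2 M_{ij}\bm{\varphi}_k(i)\bm{\varphi}_k(j) < 0$ on each $V_\ell$ with $c_\ell \neq 0$ and the support of $\bm{\psi}$ omits at least one $V_\ell$; this contradicts the minimality of $\bm{\varphi}_k$.

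The central task---and the main obstacle---is producing $\bm{\psi}$ via dimension counting. Conditions (i) and (iii) impose at most $k$ linear constraints on $U$, leaving a subspace $U'$ of dimension $\geq s - k \geq \mathsf{f} + 1$. For the quadratic form $Q(\bm{\psi}) := \bm{\psi}^T(M - \lambda_k I)\bm{\psi}$ on $U$, the classical Courant cross-term computation for the generalized Laplacian part identifies the $s \times s$ Gram matrix $A_{\ell m} := \bm{u}_\ell^T L \bm{u}_m$ as having non-negative off-diagonals, with $\mathbf{1}$ a positive Perron-Frobenius eigenvector at generalized eigenvalue $\lambda_k$ relative to $D := \mathrm{diag}(\|\bm{u}_\ell\|^2)$; this yields negative semidefiniteness of $\bm{\psi}^T(L - \lambda_k I)\bm{\psi}$ on $U$. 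Weyl's inequality then caps the positive-eigenvalue count of $Q$ by the number of positive eigenvalues of $-N$, which is $\leq \mathsf{f}$, so the nonpositive cone of $Q$ on $U$ has dimension $\geq s - \mathsf{f}$. Intersecting with $U'$ yields a subspace of dimension $\geq (s - \mathsf{f}) + (s - k) - s = s - \mathsf{f} - k \geq 1$ containing the desired $\bm{\psi}$. The delicate step will be verifying the cross-term sign analysis in the signed setting: adjacent subgraphs in a minimum nodal decomposition of the signed $M$ need not carry strictly opposite signs along every $L$-edge, so one must carefully separate $L$-type from $N$-type obstructions to merging---possibly by refining the working decomposition---to ensure $A_{\ell m} \geq 0$ cleanly before applying Perron-Frobenius component-wise on the quotient graph on $V_1, \ldots, V_s$.
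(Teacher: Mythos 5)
Your overall architecture---reduce to a matrix with exactly $\mathsf{f}$ positive off-diagonal pairs, run a Courant-type variational argument on the span $U$ of the restrictions $\bm{u}_\ell = \bm{\varphi}_k\vert_{V_\ell}$, pay at most $\mathsf{f}$ dimensions for frustration, and finish by dimension counting and Courant--Fischer---is the same as the paper's. But the step you yourself flag as ``the main obstacle'' is precisely where the argument breaks, and the route through $M = L - N$ does not close it. The Perron--Frobenius claim is false at its root: $\bm{\varphi}_k$ is an eigenvector of $M$, not of $L$, so $\sum_m \bm{u}_\ell^T L\, \bm{u}_m = \bm{u}_\ell^T L\, \bm{\varphi}_k = \lambda_k\|\bm{u}_\ell\|^2 + \bm{u}_\ell^T N \bm{\varphi}_k$, and the correction term does not vanish; hence $\mathbf{1}$ is not a generalized eigenvector of $(A,D)$ at $\lambda_k$, and $L-\lambda_k I$ is genuinely not negative semidefinite on $U$ (already $\bm{\varphi}_k^T(L-\lambda_k I)\bm{\varphi}_k = -2\sum_{\{i,j\}:\,M_{ij}>0} M_{ij}\bm{\varphi}_k(i)\bm{\varphi}_k(j)$ can be positive). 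Moreover $A_{\ell m}\ge 0$ is exactly the unverified hypothesis and does fail in general: a cross-domain edge with $M_{ij}<0$ and $\bm{\varphi}_k(i)\bm{\varphi}_k(j)>0$ contributes negatively to $A_{\ell m}$. (Separately, for the intersection count you need a \emph{subspace} on which $Q\le 0$, not the nonpositive cone; this is fixable by taking the span of the nonpositive eigenvectors of $Q\vert_U$, but only once the ``at most $\mathsf{f}$ positive eigenvalues'' claim is actually established.)

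The paper closes this gap by working with $M$ directly. Since $M\bm{\varphi}=\lambda\bm{\varphi}$, one has the exact identity
\[
\bm{x}_{\bm{\alpha}}^T (M-\lambda I)\, \bm{x}_{\bm{\alpha}} \;=\; -\sum_{\{i,j\}\in E} M_{ij}\bm{\varphi}(i)\bm{\varphi}(j)\,\big(\bm{\alpha}(i)-\bm{\alpha}(j)\big)^2 ,
\]
so within-domain terms vanish and the only positive contributions to the form come from cross-domain edges with $M_{ij}\bm{\varphi}(i)\bm{\varphi}(j)<0$. The point your ``refine the working decomposition'' must make precise is this: take the decomposition minimal \emph{subject to the extra constraint} that $i,j$ lie in the same $V_\ell$ only if $M_{ij}\le 0$. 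Then $\bm{\varphi}$ is constant-sign on each $V_\ell$, and any two domains joined by a good cross edge but by no positive entry of $M$ could be merged, contradicting minimality. Hence every positive contribution lies between a pair of domains joined by one of the at most $\mathsf{f}$ positive entries, and imposing $\alpha_p=\alpha_q$ across all such pairs (at most $\mathsf{f}$ linear constraints) makes the form nonpositive on a subspace of dimension at least $s-\mathsf{f}$. From there your dimension count and the Courant--Fischer identification of $\bm{\psi}$ as an eigenvector go through (the paper zeroes out $t-\mathsf{f}-k-\ell$ coordinates at once rather than arguing by contradiction with an $\mathsf{N}^s$-minimizer, but that difference is cosmetic). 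As written, your proof is incomplete at its central step.
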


\begin{proof}
It suffices to consider an $n \times n$ symmetric matrix $M$ with an eigenvalue $\lambda$ of index $k$ and multiplicity $r$, and produce an orthonormal basis $\bm{\varphi}_1,...,\bm{\varphi}_r$ of the corresponding eigenspace satisfying $\mathsf{N}^s(\bm{\varphi}_\ell) \le k + (\ell-1) + \mathsf{f}$ for $\ell = 1,...,r$. In addition, suppose, without loss of generality, that $M$ is the matrix in the equivalence class $\{ DMD \, | \, D \text{ involutory diagonal matrix} \}$ that minimizes the number of positive off-diagonal entries, i.e., the number of positive off-diagonal entries of $M$ equals $\mathsf{f}$. 

We proceed by induction. Suppose that we have orthonormal eigenvectors $\bm{\varphi}_1,...,\bm{\varphi}_\ell$, $\ell<r$, satisfying our desired nodal count {$\mathsf{N}^s(\bm{\varphi}_\tau) \le k +(\tau-1) + \mathsf{f}$ for $\tau = 1,...,\ell$} (if $\ell = 0$, we have no such vectors). Consider an arbitrary eigenvector $\bm{\varphi}$ orthogonal to $\bm{\varphi}_1,...,\bm{\varphi}_\ell$. Consider a minimal nodal decomposition $V_1,...,V_t$ of $[n] \backslash i_0(\bm{\varphi})$ for which $i$ and $j$ are in the same nodal domain only if $M_{ij}\le 0$ (i.e., a nodal decomposition satisfying this condition of minimal size). Suppose that $t>k+\ell + \mathsf{f}$, otherwise we are already done. Let
$$\bm{x}_p(i) = \begin{cases} \bm{\varphi}(i)& i \in V_p \\ \; 0 & \text{otherwise}\end{cases}\, , \quad p = 1,...,t$$
and consider the set of vectors $\bm{x}_{\bm{\alpha}} = \sum_{p=1}^t \alpha_p \, \bm{x}_{p}$ in their span, parameterized by $\alpha_1,...,\alpha_t$. By the orthogonality of $\{\bm{x}_1,...,\bm{x}_t\}$, this set is a subspace of $\mathbb{R}^{n}$ of dimension $t$. Let  $\bm{\alpha}(i) := \alpha_p$ for $i \in V_p$, $p = 1,...,t$. If $\bm{x}_{\bm{\alpha}}$ is a unit vector, then
\begin{equation}\label{eqn:partition}
\bm{x}_{\bm{\alpha}}^T M \, \bm{x}_{\bm{\alpha}} = \lambda - \sum_{(i,j) \in E} \big[ M_{ij} \bm{\varphi}(i) \bm{\varphi}(j) \big] (\bm{\alpha}(i) -\bm{\alpha}(j))^2.
\end{equation}
If vertices $i$ and $j$ are in the same nodal domain, then $\bm{\alpha}(i) = \bm{\alpha}(j)$. So, the only pairs $(i,j)\in E$ for which $\big[ M_{ij} \bm{\varphi}(i) \bm{\varphi}(j) \big] (\bm{\alpha}(i) -\bm{\alpha}(j))^2$ is strictly negative are those for which $i$ and $j$ are in different nodal domains, say $i \in V_p$ and $j \in V_q$, and there exists some $i^* \in V_p$ and $j^* \in V_q$ such that $M_{i^*, j^*} > 0$. There are $\mathsf{f}$ edges with $M_{i,j}>0$, and so the subspace of vectors $\bm{x_\alpha}$ for which $\bm{\alpha}(i) = \bm{\alpha}(j)$ for all $M_{ij}>0$ is of dimension at least $t - \mathsf{f} >k+\ell$. In addition, in this subspace, the Rayleigh quotient of all vectors $\bm{x_\alpha}$ is at most $\lambda$. By restricting our $\bm{x_\alpha}$ further to be orthogonal to $\bm{\varphi}_1,...,\bm{\varphi}_\ell$ and the eigenspaces of the $k-1$ eigenvalues strictly less than $\lambda$, we are left with a subspace of dimension of at least $t-\mathsf{f}-(k-1)-\ell>1$, which consists solely of eigenvectors of $\lambda$ orthogonal to $\bm{\varphi}_1,...,\bm{\varphi}_\ell$. This implies that there exists an $\bm{\widehat \alpha}$ in this subspace with $\alpha_1 = ... = \alpha_{t-\mathsf{f}-k-\ell} = 0$, and therefore $\bm{x_{\widehat\alpha}}$ has $\mathsf{N}^s(\bm{x_{\widehat\alpha}}) \le t - (t-\mathsf{f}-k-\ell) = k + \ell + \mathsf{f}$, completing the proof.
\end{proof}

Finally, by simply analyzing the Laplacian of a star graph, we note that the eigenbases satisfying the above proposition may be of measure zero. 

\begin{example}\label{ex:star}
    Let $M = \sum_{i=2}^n (\bm{e}_1 - \bm{e}_i)(\bm{e}_1 - \bm{e}_i)^T$, e.g., the graph Laplacian of a star. This matrix has eigenvalues $0$, $1$, and $n$, of multiplicity $1$, $n-2$, and $1$, respectively. The eigenvalue $\lambda = 1$ has eigenspace $E_1 = \{ \bm{x} \, | \, \sum_{i = 2}^n \bm{x}(i) = 0, \, \bm{x}(1) = 0 \}$. The $\bm{x} \in E_1$ satisfying $\mathsf{N}^s(\bm{x}) \le 2$ must have all but two entries equal to zero in an eigenspace of dimension $n-2$.
\end{example}
In the following section, we address this limitation by proving a more robust theorem regarding signings of eigenvectors.

\section{Orthonormal Eigenbases Satisfying Non-Vanishing Nodal Bounds}\label{sec:eigenbasis}

In this section, we prove Theorem \ref{thm:basis_bound}, breaking our analysis into three parts (as detailed in Section \ref{sec:intro}). First, we analyze the structure of repeated eigenvalues whose corresponding eigenvectors all vanish on some set of coordinates, Then, we place sign restrictions on an orthonormal basis of such an eigenspace so that, if such a basis exists, our desired nodal counts will be satisfied. Finally, we show that such non-vanishing orthonormal bases do exist and constitute a positive proportion of all orthonormal eigenbases. The proof of Theorem \ref{thm:basis_bound} is algorithmic in nature. For illustrative purpose, an example of this algorithm applied to a small matrix is given in Appendix \ref{sec:decompositionexample}.

\subsection{Part I: Structure of Eigenspaces with Vanishing Entries}\label{sub:basis_struct}

Let $M$ be a symmetric, irreducible $n\times n$ matrix with eigenvalue $\lambda$ of index $k$ and multiplicity $r$, and corresponding eigenspace $E_\lambda$. Recall that  $i_0(\bm{x}) := \{ j \in [n] \, | \,  \bm{x}(j) = 0 \}$, and let $$i_0(\lambda) = \{ j \in [n] \, | \, \forall \bm{\varphi}\in E_\lambda, \,\bm{\varphi}(j) = 0  \}.$$
We note that $i_0(\bm{\varphi}) = i_0(\lambda)$ for all but a set of positive co-dimension of $\bm{\varphi} \in E_\lambda$. Suppose $G[i_0(\lambda)]$ has connected components on $p$ vertex sets $X_1,...,X_p \subset i_0(\lambda)$ and $G\big[[n]\backslash i_0(\lambda)\big]$ has connected components on $q$ vertex sets $Y_1,...,Y_q \subset [n]\backslash i_0(\lambda)$. Let us write the matrix $M$ and an arbitrary eigenvector $\bm{\varphi} \in E_\lambda$ in block notation with respect to the partition $X_1,...,X_p,Y_1,...,Y_q$:
$$M = \begin{pmatrix} N & A \\ A^T & \hat M \end{pmatrix}, \quad \bm{\varphi} = \begin{pmatrix} 0 \\ \bm{\hat \bm{\varphi}} \end{pmatrix},$$
where
$$ N = \begin{pmatrix} N^{(1)} & 0 & \dots  & 0 \\   0 & N^{(2)}& \ddots  & \vdots \\  \vdots & \ddots & \ddots & 0 \\ 0 & \dots & 0 & N^{(p)} \end{pmatrix}, \quad \hat M = \begin{pmatrix} M^{(1)} & 0 & \dots  & 0 \\   0 & M^{(2)}& \ddots  & \vdots \\  \vdots & \ddots & \ddots & 0 \\ 0 & \dots & 0 & M^{(q)} \end{pmatrix},$$
are block diagonal matrices, $N^{(i)} \in \mathbb{R}^{|X_i| \times |X_i|}$ for $i = 1,...,p$, and $M^{(j)} \in  \mathbb{R}^{|Y_j| \times |Y_j|}$ for $j = 1,...,q$, and 
$$ A =\begin{pmatrix} A^{(1,1)} & \dots & A^{(1,q)} \\   \vdots & \ddots &  \vdots \\ A^{(p,1)}  & \dots & A^{(p,q)}  \end{pmatrix}, \quad \bm{\hat \bm{\varphi}} = \begin{pmatrix} \bm{\varphi}^{(1)} \\ \vdots \\ \bm{\varphi}^{(q)} \end{pmatrix} ,$$

$A^{(i,j)} \in \mathbb{R}^{|X_i| \times |Y_j|}$ for $i = 1,...,p$, $j = 1,...,q$, and $\bm{\varphi}^{(j)} \in \mathbb{R}^{|Y_j|}$ for $j = 1,...,q$. 
Next, we define $H = (X,Y,E_H)$ to be the bipartite graph, with bipartition $X = \{x_1,...,x_p\}$ and $Y = \{y_1,...,y_q\}$, representing the connectivity between the elements of $\{X_i\}_{i=1}^p$ and $\{Y_j\}_{j=1}^q$, i.e., $E_H = \big\{ (x_i,y_j)  \, | \, A^{(i,j)} \ne 0 \big\}$. {We note that $H$ is connected, as it is an aggregation of the connected graph $G$.} Let us define
\begin{align*}
u(i)&:= \text{smallest index } j \in [q] \text{ such that } x_i \sim_{H} y_j, \\
v(j)&:= \text{smallest index } i \in [p] \text{ such that } x_i \sim_{H} y_j.
\end{align*}
We order the elements of $X$ and $Y$ so that, for any $i_1>1$, $d_H(x_{i_1},x_{i_2}) = 2$ for some $i_2<i_1$, and $v(j_1) \le v(j_2)$ if $j_1 < j_2$. In addition, we suppose that our original ordering of vertex sets $X_1,...,X_p$ and $Y_1,...,Y_q$ corresponds to the aforementioned ordering of $x_1,...,x_p$ and $y_1,...,y_q$ in $H$. See Figure \ref{fig:graph} for an example of such a bipartite graph for a small matrix (a full example illustrating our procedure for this matrix is provided in Appendix \ref{sec:decompositionexample}.

\begin{figure}
{\centering
\includegraphics[width = 5.5 in]{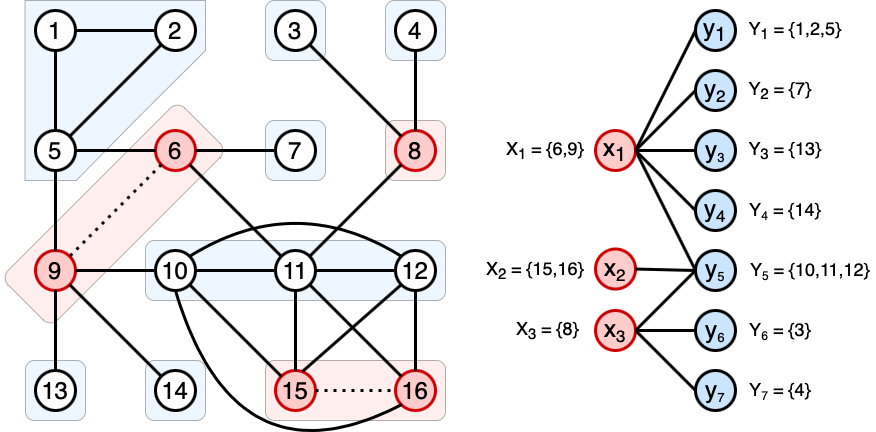}
\caption{The signed graph $G = ([16],E,\sigma)$ associated with the example matrix $M$ used in Appendix \ref{sec:decompositionexample}, and the corresponding bipartite graph for the eigenspace corresponding to $\lambda = 0$. The matrix has diagonal entries equal to $-1$ for $i = 1,2,5,10,11,12$, has off-diagonal entries equal to $+1$ for dashed edges and $-1$ for solid edges, and zeros otherwise. The vertices $6,8,9,15,16$ vanish on the eigenspace of $\lambda = 0$. See Appendix \ref{sec:decompositionexample} for the full analysis associated with this matrix.}\label{fig:graph}}
\end{figure}

From analysis of the eigenvalue-eigenvector equation, we note that $E_\lambda$ can be equivalently represented as the set of vectors $\bm{\varphi}^{(1)},...,\bm{\varphi}^{(q)}$ satisfying
$$\sum_{j=1}^q A^{(i,j)} \bm{\varphi}^{(j)} =\mathbf{0},\quad i = 1,...,p, \qquad \text{ and } \qquad M^{(j)} \bm{\varphi}^{(j)} = \lambda \, \bm{\varphi}^{(j)}, \quad j = 1,...,q.$$
Let $\bm{\psi}^{(j)}_1,...,\bm{\psi}^{(j)}_{r_j}$ be a non-vanishing orthonormal basis for the orthogonal projection of $E_\lambda$ to the indices of $Y_j$, $j = 1,...,q$. This projection (restricted to the indices of $Y_j$) is a subspace of the eigenspace of the matrix $M^{(j)}$ and eigenvalue $\lambda$. Let 
$\hat E_\lambda = \text{span} \{\bm{\psi}^{(j)}_{\sigma}\}_{\sigma=1,...,r_j}^{j = 1,...,q}$, $\hat k$ be the index of $\lambda$ with respect to $\hat M$, and $\hat r$ be the dimension of $\hat E_\lambda$. By eigenvalue interlacing, $\hat k + \hat r \le k + r$. Each eigenvector $\bm{\varphi} \in E_\lambda$ can be represented uniquely in the basis $\big\{\bm{\psi}_\sigma^{(j)}\big\}_{\sigma=1,...,r_j}^{j=1,...,q}$, say, 
$\bm{\varphi} = \sum_{j=1}^q \sum_{\sigma = 1}^{r_j} \alpha_\sigma^{j} \bm{\psi}^{(j)}_\sigma$. In fact, we can associate $E_\lambda$ with the subspace of $\{ \alpha_{\sigma}^{j}\}_{\sigma=1,...,r_j}^{j = 1,...,q} \cong \mathbb{R}^{\hat r}$ satisfying
$$\sum_{j=1}^q \sum_{\sigma = 1}^{r_j} \alpha_\sigma^{j} \big[ A^{(i,j)}\bm{\psi}^{(j)}_\sigma \big] = \mathbf{0}, \quad i = 1,...,p,$$
or, by Gaussian elimination, $\gamma:=\hat r - r$ homogeneous equations
$$h_\ell \big[ \{\alpha_{\sigma}^{j}\}_{\sigma = 1,...,r_j}^{j=1,...,q} \big] = \sum_{j=1}^q \sum_{\sigma = 1}^{r_j} c_{\ell,\sigma}^{j}\,\alpha_\sigma^{j} = 0, \quad \ell = 1,...,\gamma,$$
for some constants $c_{\ell,\sigma}^{j} \in \mathbb{R}$, $\ell = 1,...,\gamma$, $j = 1,...,q$, $\sigma = 1,...,r_j$.

Recall our eigenspace $E_\lambda$ corresponds to the subspace of $\{ \alpha_{\sigma}^{j}\}_{\sigma=1,...,r_j}^{j = 1,...,q} \cong \mathbb{R}^{\hat r}$ satisfying the $\gamma$ homogeneous equations $h_\ell\big[\{\alpha_{\sigma}^{j}\}_{\sigma=1,...,r_j}^{j = 1,...,q}  \big] = 0$. Consider the matrix associated with these $\gamma$ equations, where row $i$ corresponds to the linear form $h_i$, $i = 1,...,\gamma$, and the columns $\{1,..., \hat r\}$ correspond to the variables $\{\alpha_{\sigma}^{j}\}_{\sigma=1,...,r_j}^{j = 1,...,q}$ listed in reverse order: $\alpha_{r_q}^q,...,\alpha_{1}^q,\alpha_{r_{q-1}}^{q-1},...,\alpha_{1}^{q-1},...,\alpha_{r_1}^1,...,\alpha_{1}^1$ (e.g., column one corresponds to $\alpha_{r_q}^q$, column $\hat r$ to $\alpha_1^1$). Let us further suppose that the system of equations $h_\ell\big[\{\alpha_{\sigma}^{j}\}_{\sigma=1,...,r_j}^{j = 1,...,q}  \big] = 0$ is in reduced row echelon form with respect to the aforementioned ordering of equations and variables. In this case, the pivot for $h_\ell[ \cdot]$ is given by
$$\eta_\ell,\sigma_\ell = \text{argmax}_{j,\sigma} (j+ \sigma/r_j) \bm{1}\{c_{\ell,\sigma}^{(j)} \ne 0\}, \quad \ell= 1,...,\gamma,$$
and we denote the set of pivots by $\Sigma:=\{(\eta_{\ell},\sigma_{\ell}) \, | \, \ell = 1,...,\gamma\}$. Let us fix the values of the variables corresponding to the $\gamma$ pivots so that the equations $h_\ell[\{\alpha_\sigma^{j}\}] = 0$, $\ell =1,...,\gamma$, are satisfied:
$$\alpha_{\sigma_\ell}^{\eta_\ell} = -  \sum_{(j,\sigma) \not \in \Sigma} c_{\ell,\sigma}^{j} \, \alpha_{\sigma}^{j}, \quad \ell = 1,...,\gamma.$$

With the values of pivot variables fixed, our eigenspace $E_\lambda$ is parameterized by the coefficients $\big\{\{\alpha_\sigma^{j}\}_{\sigma = 1,...,r_j}^{j = 1,...,q} \, \backslash \, \{\alpha_{\sigma_\ell}^{\eta_\ell} \}_{\ell=1}^\gamma \big\}$, and in what follows we work directly with this formulation:
 $$E_\lambda = \text{span} \bigg\{ \bm{\psi}_{\sigma}^{(j)} - \sum_{\ell=1}^\gamma c_{\ell,\sigma}^{j} \bm{\psi}_{\sigma_\ell}^{(\eta_\ell)} \, \bigg| \, (j, \sigma) \ne \Sigma  \bigg\}.$$
Finally, we note that, for two eigenvectors $\bm{\varphi}_{1},\bm{\varphi}_2 \in E_\lambda$ with coefficients $\{{}_1\alpha_\sigma^{j}\}$ and $\{{}_2\alpha_\sigma^{j}\}$, respectively,
\begin{equation}\label{eqn:innerprod}
\langle \bm{\varphi}_1,\bm{\varphi}_2 \rangle = \sum_{(j,\sigma) \not \in \Sigma} {}_1\alpha_{\sigma}^{j} \bigg[{}_2\alpha_{\sigma}^{j} + \sum_{(\iota,\omega) \not \in \Sigma}  {}_2\alpha_{\omega}^{\iota} \sum_{\ell=1}^\gamma c_{\ell,\sigma}^{j} c_{\ell,\omega}^{\iota}  \bigg].
\end{equation}

\subsection{Part II: Restrictions That Produce Classical Nodal Bounds}\label{sub:basis_proof}

Now that we have sufficiently characterized the structure of an eigenspace with vanishing entries, we are now prepared to restrict the choices of an orthonormal basis $\bm{\varphi}_1,...,\bm{\varphi}_r$ of $E_\lambda$ and the choices of signings of vanishing vertices (given by $\bm{\varepsilon}_1,...,\bm{\varepsilon}_r$) so that $\NN(\bm{\varepsilon}_s) \le k + (s-1) + \mathsf{f}$ for $s = 1,...,r$, where $k$ is the index of $\lambda$ and $\mathsf{f}$ is the frustration index of the signed graph of $M$.

Here we make use of the notation introduced in Subsection \ref{sub:basis_struct}, and we assume $M$ minimizes the number of pairs of positive off-diagonal entries over the set $$\{ DMD \, | \, D \text{ involutory diagonal matrix }\}.$$
Therefore there are exactly $\mathsf{f}$ positive off-diagonal entries. Let us define:
\begin{align*}
\mathsf{\hat f} &= \text{number of pairs of positive off-diagonal entries of } \hat M, \\
    \mathsf{\tilde f} &= \text{number of pairs of positive off-diagonal entries $\{M_{a,b},M_{b,a}\}$} \\ &\qquad \text{where $a \in X_i$ for some $i$ and $b \in X_i \cup Y_{u(i)}$}.
\end{align*}
By applying Proposition \ref{prop:generic_bound} to each of the $q$ connected components of $G[[n]\backslash i_0(\lambda)]$ and using the inequality $\hat k + \hat r \le k + r$, any non-vanishing eigenvector of $\bm{\hat \bm{\varphi}} \in \hat E_\lambda$ satisfies (for an arbitrary integer $s\geq0$) \begin{equation}\label{ineq:subvec_bound}
    \mathsf{N}(\bm{\hat\bm{\varphi}}) \le \hat k + (q-1) + \mathsf{\hat f} \le k + (s-1) + \mathsf{f} + \big[q- \gamma - s - (\mathsf{f}-\mathsf{\hat f})\big].
\end{equation}
Furthermore, as noted in the proof of Proposition \ref{prop:generic_bound}, each eigenvector $\bm{\hat \bm{\varphi}}$ has a nodal decomposition satisfying \eqref{ineq:subvec_bound} where vertices $i$ and $j$ are in the same nodal subgraph only if $M_{ij}\le 0$. In what follows, we always assume that any nodal decomposition of $\bm{\hat \bm{\varphi}}$ under consideration also satisfies this property. Let us denote the coefficients corresponding to the eigenvector $\bm{\varphi}_s$ by ${}_s\alpha_{\sigma}^{j}$.  For each $(\bm{\varphi}_s,\bm{\varepsilon}_s)$, we aim to fix the signs $\bm{\varepsilon}_s(j)$ of vanishing entries $j \in i_0(\lambda)$ and restrict the values of the elements ${}_s\alpha_{\sigma}^{j}$ so that the original nodal count is decreased by at least $q- \gamma - s -(\mathsf{f}-\mathsf{\hat f})$ (using $q-\gamma-s+1$ elements) and $\bm{\varphi}_s$ is orthogonal to $\bm{\varphi}_t$ for all $t < s$ (using $s-1$ elements).

First, we describe our signing of $i_0(\lambda)$ for each vector. Consider an arbitrary non-vanishing vector $\bm{\hat \bm{\varphi}} \in \hat E_\lambda$. We restrict our signing so that $\bm{\varepsilon}$ is constant over each $X_i$, $i=1,...,p$. In particular, we set:
$$ \bm{\varepsilon}(a) = \sgn(\bm{\varphi}(b)) \text{ for all $a \in X_i$ and some fixed $b \in Y_{u(i)}$ in the neighborhood of $X_i$}.$$
Ignoring edges between $X$ and $Y$ that are not between $X_i$ and $Y_{u(i)}$ for some $i$ (i.e., each $X_i$ is connected to only one $Y_j$), the above signing of $i_0(\lambda)$ increases the nodal count by at most $\mathsf{\tilde f}$, already giving the bound
\begin{equation}\label{ineq:pessimistic}
\NN(\bm{\varepsilon}) \le \NN(\bm{\hat \bm{\varphi}}) +\mathsf{\tilde f}  \le  k + (s-1) + \mathsf{f} + \big[q- \gamma - s - (\mathsf{f}-\mathsf{\hat f}-\mathsf{\tilde f})\big].
\end{equation}
Next we will choose the signs of some of our coefficients ${}_s \alpha_\sigma^j$ so that the nodal count decreases further by using edges between $X$ and $Y$ that are not between $X_i$ and $Y_{u(i)}$ for some $i$. In  particular, we restrict our basis coefficients ${}_s\alpha_{\sigma}^{j}$ so that the nodal count of the $s^{th}$ eigenvector is at least $q- \gamma - s - (\mathsf{f} - \mathsf{\tilde f} - \mathsf{\hat f})$ less than the bound of Inequality \ref{ineq:pessimistic}. We need only consider $s < q-\gamma$, otherwise our desired bound already holds. Let us partition the variables ${}_s\alpha_{\sigma}^{j}$ into four sets, based on their function in the analysis that follows:
\begin{align*}
    \Pi_E^s &= \text{variables fully restricted so that $\bm{\varphi}_s \in E_\lambda$}, \\
    \Pi_S^s &= \text{variables restricted in sign so that $\NN(\bm{\varepsilon}_s) \le k + (s-1) + \mathsf{f}$}, \\
    \Pi_O^s &= \text{variables fully restricted so that $\langle \bm{\varphi}_s,\bm{\varphi}_t \rangle =0$ for all $t<s$}, \\
    \Pi_F^s &= \text{unrestricted variables}.
\end{align*}

Let $\hat Y = \{ y_j \, | \, j \ne \eta_\ell, \, \ell = 1,...,\gamma\}$. We note that $|\widehat Y| \ge q - \gamma$, and denote the indices of the first $q-\gamma$ elements of $Y$ in $\widehat Y$ by $j_1<...<j_{q-\gamma}$. 
For $s < q - \gamma$, we can set
\begin{align*}
    \Pi_E^s &= \big \{ {}_s\alpha_{\sigma}^j \, | \, (j,\sigma) \in \Sigma \big \}, \\
    \Pi_S^s &= \big \{ {}_s\alpha_{\sigma}^{j_m} \, | \, m = 2,..., q- \gamma -s + 1, \, \sigma = 1,...,r_{j_m} \big\}, \\
    \Pi_O^s &=\big \{ {}_s\alpha_{r_{j_m}}^{j_m} \, | \, m = q- \gamma -s + 2,...,q-\gamma \big\}, \\
    \Pi_F^s &= \{{}_s\alpha_\sigma^{j}\}_{\sigma = 1,...,r_j}^{j = 1,...,q} \, \backslash \, \big[ \Pi_E^s \sqcup \Pi_S^s \sqcup \Pi_O^s\big].
\end{align*}
By definition, when $s =1$, $\Pi_{O}^s = \emptyset$. We aim to show that the nodal count is at least $\big[q- \gamma - s - (\mathsf{f} -\mathsf{\tilde f} -\mathsf{\hat f})\big]$ less than Inequality \ref{ineq:pessimistic} for each $\bm{\varepsilon}_s$ by traversing the elements $ y_{j_1},...,y_{j_{q-\gamma}}$, and restricting the signs of the variables in $\Pi_S^s$ in some way.

For each pair $j_m,v(j_m)$, $m = 2,..., q - \gamma$, let $a_m,b_m \in [n]$ be a pair of vertices satisfying $a_m \in Y_{j_m}$, $b_m \in X_{v(j_m)}$, and $a_m \sim_G b_m$.
For every $\big\{{}_s \alpha_{\sigma}^{j_m} \big\}_{\sigma =1}^{r_{j_m}} \subset \Pi_{S}^s$, we require
$$\sgn \bigg( \sum_{\sigma =1}^{r_{j_m}} {}_s \alpha_{\sigma}^{j_m} \bm{\psi}_\sigma^{(j_m)}(a_m) \bigg) =  \bm{\varepsilon}_s(b_m) ,$$
resulting in $a_m$ and $b_m$ being in the same nodal domain if there is no frustrated edge between $Y_{j_m}$ and $X_{v(j_m)}$. We claim that these restrictions are consistent (e.g., there exist vectors simultaneously satisfying all conditions), and sufficient to produce our desired nodal bounds.

\begin{claim}\label{claim:bound}
Suppose an eigenvector $\bm{\varphi} = \sum_{(j,\sigma) \ne \Sigma} \alpha_{\sigma}^j \big[  \bm{\psi}_{\sigma}^{(j)} - \sum_{\ell=1}^\gamma c_{\ell,\sigma}^{j} \bm{\psi}_{\sigma_\ell}^{(\eta_\ell)} \big] \in E_\lambda$ and signing $\bm{\varepsilon} \in \{\pm 1\}^n$ satisfy
\begin{enumerate}
    \item {$\bm{\varepsilon}(i) = \sgn(\bm{\varphi}(i))$ for all $\bm{\varphi}(i) \ne 0$, $i \in [n]$,}
    \item $\bm{\varphi}(i) \ne 0$ for all $i \in [n] \backslash i_0(\lambda)$,
    \item $ \bm{\varepsilon}(a) = \sgn(\bm{\varphi}(b))$  for all $a \in X_i$ and some fixed $b \in Y_{u(i)}$ in the neighborhood of $X_i$,
    \item $\sgn \big( \sum_{\sigma =1}^{r_{j_m}} \alpha_{\sigma}^{j_m} \bm{\psi}_\sigma^{(j_m)}(a_m) \big) =  \bm{\varepsilon}(b_m) $ for $m = 2,...,q-\gamma-s+1$, for some $s>0$.
\end{enumerate}
Then $\NN\big(\bm{\varepsilon}\big) \le k + (s-1) + \mathsf{f}$. Furthermore, there exists some $(\pphi,\bm{\varepsilon})$ satisfying the above conditions for $s = 1$.
\end{claim}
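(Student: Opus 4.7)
The plan is to build an explicit nodal decomposition of $\bm\varepsilon$ in three stages, each consuming one portion of the budget $k + (s-1) + \mathsf{f}$. The first stage is a block-wise decomposition: applying Proposition~\ref{prop:generic_bound} to each diagonal block $M^{(j)}$ separately, using the fact that $\hat{\bm\varphi}$ restricted to $Y_j$ is a non-vanishing eigenvector of $M^{(j)}$, summing the block-wise bounds via $\sum_j k_j = \hat k + q - 1$, and invoking the interlacing inequality $\hat k \le k - \gamma$, recovers Inequality~\ref{ineq:subvec_bound}. Each resulting nodal domain of $\hat{\bm\varphi}$ lies inside a single $Y_j$.

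In the second stage I would absorb each $X_i$ into the nodal domain of its chosen neighbor $b \in Y_{u(i)}$: since $\bm\varepsilon$ is constant on $X_i$ with the same sign as the entire nodal domain of $b$ (which lies inside $Y_{u(i)}$), the only obstructions are positive edges within $X_i$ or between $X_i$ and $Y_{u(i)}$. A spanning-tree argument shows each such positive edge can force at most one extra split, yielding the pessimistic bound Inequality~\ref{ineq:pessimistic}. The third stage uses condition~(4) to close the remaining gap $q - \gamma - s - (\mathsf f - \mathsf{\hat f} - \mathsf{\tilde f})$: for $m = 2, \ldots, q - \gamma - s + 1$ processed in order, the sign match in (4) forces $\bm\varepsilon(a_m)\bm\varepsilon(b_m) = +1$, so whenever $M_{a_m b_m} < 0$ the edge is non-frustrated under $\bm\varepsilon$, and I can merge the nodal domain of $a_m \in Y_{j_m}$ with the one already containing $X_{v(j_m)}$ from Stage~2. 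Because the $j_m$ are distinct and each $Y_j$ is a separate connected component of $G[[n]\setminus i_0(\lambda)]$, the $Y$-side domains of the $a_m$ are pairwise disjoint, so every successful merge strictly reduces the count by one; the failed merges inject into the at-most-$(\mathsf f - \mathsf{\hat f} - \mathsf{\tilde f})$ cross-block positive edges, yielding $\NN(\bm\varepsilon) \le k + (s-1) + \mathsf f$.

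For the existence claim with $s = 1$, I would work in the $r$-dimensional parameter space of free coefficients $\{\alpha_\sigma^j\}_{(j,\sigma) \notin \Sigma}$. Condition~(2) excludes only finitely many codimension-one hyperplanes and hence holds on a full-measure open set. Each of the $q - \gamma - 1$ instances of~(4) is a non-trivial sign inequality on a linear form $\sum_\sigma \alpha_\sigma^{j_m} \bm\psi_\sigma^{(j_m)}(a_m)$ in the free coefficients, non-trivial because the $\bm\psi_\sigma^{(j_m)}$ are linearly independent so some evaluation at $a_m$ is nonzero, and hence defines an open halfspace. Processing $m$ sequentially and using at least one previously-unconstrained free coefficient per step gives a non-empty open region on which all conditions hold simultaneously; the signing $\bm\varepsilon$ is then determined by (1) on $[n]\setminus i_0(\lambda)$ and by (3) on $i_0(\lambda)$, with (3) being well-defined because (2) guarantees $\bm\varphi(b) \ne 0$ for the chosen $b \in Y_{u(i)}$.

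The main obstacle I expect is the Stage~3 accounting: one must verify both that distinct successful merges involve disjoint pairs of nodal domains (so the count reductions add cleanly) and that the failed merges inject into the cross-block positive edges without over-counting the within-block case $j_m = u(v(j_m))$, where the edge $(a_m, b_m)$ is already counted in $\mathsf{\tilde f}$ and Stage~3 contributes nothing beyond Stage~2. The ordering conventions on $(X, Y)$ from Subsection~\ref{sub:basis_struct} --- in particular the monotonicity $v(j_1) \le v(j_2) \le \cdots$ together with $\hat Y$ omitting the pivot rows $\eta_\ell$ --- are what should make this combinatorial bookkeeping tractable.
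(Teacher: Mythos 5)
Your proposal follows essentially the same route as the paper's proof: your three stages correspond to Inequality \ref{ineq:g_prime}, the $G'$/$\mathsf{\tilde f}$ absorption of the $X_i$ blocks, and the sequential addition of the edges $(j_m,v(j_m))$ to the forest $H'$ with failed merges charged to the remaining cross-block positive pairs, while your existence argument is the same iterative coefficient-setting that avoids finitely many hyperplanes and stays in the prescribed half-lines. The one fact you defer --- that $j_m > u(v(j_m))$ for $m>1$, which simultaneously guarantees the consistency of conditions (3) and (4) and that each successful merge joins two genuinely distinct components --- is exactly what the paper derives from the stated ordering conventions on $X$ and $Y$.
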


\begin{proof}
We break our proof of the desired claim into two parts: first, we show the existence of $(\pphi,\bm{\varepsilon})$ satisfying the conditions of the claim for $s=1$, and then we show that satisfying Properties (1)-(4) for an arbitrary $s$ implies $\NN\big(\bm{\varepsilon}\big) \le k + (s-1) + \mathsf{f}$. 

We first aim to show that Properties (3) and (4) are consistent with each other, e.g., we can choose $\alpha_{\sigma}^j$, $(j,\sigma) \in \Sigma$, and $\bm{\varepsilon}$ so that both properties simultaneously hold. By the construction of our pivots, any $\alpha_{\sigma_\ell}^{\eta_\ell}$ is a linear function of variables of lower index, e.g., variables $\alpha_\sigma^j$ with $j\le \eta_\ell$, and, if $j = \eta_\ell$, then $\sigma<\sigma_\ell$. Therefore, $\bm{\varepsilon}|_{X_i}$ is a function of $\alpha_\sigma^j$ with $j \le u(i)$. Property (4) is a sign restriction on the variables $\{\alpha_{\sigma}^{j_m}\}_{\sigma = 1}^{r_{j_m}}$ corresponding to $Y_{j_m}$, depending on the quantity $\bm{\varepsilon}|_{X_{v(j_m)}}$, which depends only on $\alpha_\sigma^j$ with $j \le u(v(j_m))$. Therefore, it suffices to show that $j_m > u(v(j_m))$ for $m >1$. 

Recall that $d_H(x_{i_1},x_{i_2}) = 2$ for all $i_1>1$ and some $i_2<i_1$, and that $v(j_1) \le v(j_2)$ if $j_1 < j_2$. Because $m >1$, we may consider $j_{m-1}<j_m$, and note that $v(j_{m-1})\le v(j_m)$. If $v(j_{m-1})=v(j_m)$, then $u(v(j_{m})) \le j_{m-1} < j_m$, proving the desired result. If $v(j_{m-1})<v(j_m)$, then $v(j_m) \ne 1$, and so there exists some $i< v(j_m)$ with $d_H(x_i,x_{v(j_m)}) =2$. This implies that there is some $j$ with $y_j \sim_H x_i$ and $y_j \sim_H x_{v(j_m)}$, giving $v(j) \le i < v(j_m)$, implying $j < j_m$ and so $u(v(j_m)) \le j < j_m$. Therefore, we may indeed choose $\alpha_{\sigma}^j$, $(j,\sigma) \in \Sigma$, and $\bm{\varepsilon}$ so that both Properties (3) and (4) simultaneously hold.

What remains is to formally choose $\pphi$ so that Properties (2) and (4) hold. Each entry $\pphi(i)$, $i \not \in i_0(\lambda)$, is a non-trivial linear function in $\alpha$:
$$\bm{\varphi}(i) = \sum_{(j,\sigma) \ne \Sigma} \alpha_{\sigma}^j \bigg[  \bm{\psi}_{\sigma}^{(j)}(i) - \sum_{\ell=1}^\gamma c_{\ell,\sigma}^{j} \bm{\psi}_{\sigma_\ell}^{(\eta_\ell)}(i) \bigg]  ,$$
and has a lexicographically largest pair $(j,\sigma)$ for which the coefficient corresponding to $\alpha_\sigma^j$ is non-zero. If $\alpha_\sigma^j$ is not this variable for any $i \not \in i_0(\lambda)$, then we simply set $\alpha_\sigma^j = 0$. We set the values of the remaining $\alpha_{\sigma}^j$ iteratively, starting with the smallest values of $j$ (and, conditional on $j$, the smallest values of $\sigma$). Consider some $\alpha_{\sigma}^j$ not yet set, with all variables corresponding to smaller pairs $(j',\sigma')$ already set to some fixed value. If $\sigma = r_j$ and $j = j_m$ for some $1<m<q-\gamma$, then $\alpha_{\sigma}^j$ is restricted to the half-line defined by Property (4). Let us consider the set of linear functions $\pphi(i)$ only in $\alpha_{\sigma}^j$ and variables corresponding to smaller pairs $(j',\sigma')$. With all variables of lower index set to fixed values, each function is a non-trivial linear function of only $\alpha_{\sigma}^j$.
By avoiding the at most $n-|i_0(\lambda)|$ choices of $\alpha_{\sigma}^j$ for which any of these linear functions can be zero (also choosing $\alpha_{\sigma}^j$ to satisfy Property (4) if $\sigma = r_j$ and $j = j_m$ for some $1<m<q-\gamma$), and recursing, we have constructed a pair $(\pphi,\bm{\varepsilon})$ satisfying the conditions of the claim.

What remains is to show that Properties (1)-(4) (for $s$ arbitrary) implies that $\NN\big(\bm{\varepsilon}\big) \le k + (s-1) + \mathsf{f}$.  Recall, by Proposition \ref{prop:generic_bound} and eigenvalue interlacing, that
    \begin{equation}\label{ineq:g_prime}
    \NN\big( \bm{\varepsilon}|_{[n] \backslash i_0(\lambda)}  \big) = \NN\big( \bm{\varphi}|_{[n] \backslash i_0(\lambda)}  \big) \le k + (s-1) + \mathsf{f} + \big[q- \gamma - s - (\mathsf{f}-\mathsf{\hat f})\big] .
    \end{equation}
    Because $\bm{\varepsilon}$ is constant on each $X_i$ (by Property (3)), $\NN\big(\bm{\varepsilon}|_{X_i}\big)$ is at most one plus the number of positive pairs of off-diagonal entries of $M$ on the indices of $X_i$. In addition, because  $ \bm{\varepsilon}|_{X_i}$ has the same sign as $\bm{\varphi}(b)$  for some $b \in Y_{u(i)}$ in the neighborhood of $X_i$ (again, by Property (3)), either $b$ is in the same nodal domain as some $a \in X_i$, or there is a positive edge between $X_i$ and $Y_{u(i)}$. Let $G' = ([n],E')$ be the subgraph of $G$ with
    $$E' = E \, \backslash \, \big\{ (a,b) \in E \, | \, a \in X_i, b \in Y_j, j \ne u(i) \big\}.$$
    The subgraph of $H$ corresponding to $G'$, denoted by $H'$, is a forest consisting of trees, each with root in $Y$ and some number of leaves in $X$. Recall that $\mathsf{\tilde f}$ equals the number of pairs of positive off-diagonal entries either within some $X_i$ or between some $X_i$ and $Y_{u(i)}$. The nodal count $\NN_{G'}(\bm{\varepsilon})$ is then at most $\NN\big( \bm{\varepsilon}|_{[n] \backslash i_0(\lambda)}  \big) + \mathsf{\tilde f}$, as $\NN\big(\bm{\varepsilon}|_{X_i}\big)$ is at most one plus the number of positive pairs of off-diagonal entries within $X_i$, and either some vertex of $X_i$ is in the same nodal domain as a vertex of $Y_{u(i)}$, or there is a positive off-diagonal entry between $X_i$ and $Y_{u(i)}$. Combining this observation with Inequality \ref{ineq:g_prime} gives
    $$\NN_{G'}(\bm{\varepsilon}) \le \NN\big( \bm{\varepsilon}|_{[n] \backslash i_0(\lambda)}  \big) + \mathsf{\tilde f} \le  k + (s-1) + \mathsf{f} + \big[q- \gamma - s - (\mathsf{f}-\mathsf{\hat f} -\mathsf{\tilde f})\big].$$
    We have $j_m>u(v(j_m))$ for $m>1$, and so $j_m$ and $v(j_m)$ are in different trees of $H'$. Let $H'_t$ be the graph resulting from the addition of edges $\{(j_m,v(j_m)) \, | \, m = 2,...t \}$ to $H'$. More generally, $j_{t+1}$ and $v(j_{t+1})$ are in different connected components of $H'_t$ for any $t = 2,..., q-\gamma-s$. Now, let us consider the effect of the addition of each of the edges $(j_m,v(j_m))$, $m = 2,...,q-\gamma-s+1$, on our nodal bound (e.g., consider the nodal count of the sequence of graphs $H',H'_2,....,H'_{q-\gamma+s-1}$).  Property (4) implies that, for $m=2,...,q-\gamma-s+1$, either $a_m$ and $b_m$ are in the same nodal domain (implying that $H'_m$ has one less nodal domain than $H'_{m-1}$), or there is a frustrated edge of $M$ between $Y_{j_m}$ and $X_{v(j_m)}$, and so
    $$\NN_{G}(\bm{\varepsilon}) \le \NN_{G'}(\bm{\varepsilon}) -\big[q- \gamma - s - (\mathsf{f}-\mathsf{\hat f} -\mathsf{\tilde f})\big] \le k + (s-1) + \mathsf{f} , $$
    completing the proof.
\end{proof}

\subsection{Part III: Many Orthonormal Bases Satisfy the Conditions of Part II}

Claim \ref{claim:bound} shows that the conditions for our eigenvectors $\bm{\varphi}_1,...,\bm{\varphi}_{r}$ (and corresponding signings) are satisfiable and, if these conditions are satisfied, then the desired nodal bound is achieved. What's left is to show that the eigenvectors can simultaneously satisfy the conditions of Claim \ref{claim:bound} and be orthogonal to each other, and that there is a set of co-dimension zero of such orthonormal bases. We break the remainder of the argument into two parts: First, we show that we can build orthogonal eigenvectors $\bm{\varphi}_1,...,\bm{\varphi}_{q-\gamma-1}$ so that, for $\bm{\varphi}_s$, the conditions of Claim \ref{claim:bound} are satisfied for vertices $\sqcup_{i\le v(j_{q-\gamma-s+1})} X_i$ and $\sqcup_{j\le j_{q-\gamma-s+1}} Y_j$, but possibly not for the remaining vertices (Claim \ref{claim:orth}). Then, we show that such a set of orthogonal eigenvectors can be extended to an orthonormal basis, rotated so that the conditions of Claim \ref{claim:bound} hold for all vertices, and that these conditions are maintained under sufficiently small rotations of the orthonormal basis (Claim \ref{claim:main}).

\begin{claim}\label{claim:orth}
There exists eigenvectors $\bm{\varphi}_1,...,\bm{\varphi}_{q-\gamma-1}$, $\bm{\varphi}_s= \sum_{(j,\sigma) \ne \Sigma} {}_s\alpha_{\sigma}^j \big[  \bm{\psi}_{\sigma}^{(j)} - \sum_{\ell=1}^\gamma c_{\ell,\sigma}^{j} \bm{\psi}_{\sigma_\ell}^{(\eta_\ell)} \big] \in E_\lambda$, and signings $\bm{\varepsilon}_1,...,\bm{\varepsilon}_{q-\gamma-1} \in \{\pm 1\}^n$ such that $(\bm{\varphi}_1,\bm{\varepsilon}_1)$ satisfy the conditions of Claim \ref{claim:bound} for $s=1$, and $(\bm{\varphi}_s,\bm{\varepsilon}_s)$, $s = 2,...,q-\gamma-1$, satisfy
\begin{enumerate}
    \item  {$\bm{\varepsilon}_s(i) = \sgn(\bm{\varphi}_s(i))$ for all $\bm{\varphi}_s(i) \ne 0$, $i \in [n]$,}
    \item $\bm{\varphi}_s(i) \ne 0$ for all $i \in \sqcup_{j\le j_{q-\gamma-s+1}} Y_j$,
 \item $ \bm{\varepsilon}_s(a) = \sgn(\bm{\varphi}_s(b))$  for all $a \in X_i$ and some fixed $b \in Y_{u(i)}$ in the neighborhood of $X_i$, $i = 1,...,v(j_{q-\gamma-s+1})$,
\item $\sgn \big( \sum_{\sigma =1}^{r_{j_m}} \alpha_{\sigma}^{j_m} \bm{\psi}_\sigma^{(j_m)}(a_m) \big) =  \bm{\varepsilon}(b_m) $ for $m = 2,...,q-\gamma-s+1$,
\item $\langle \bm{\varphi}_s, \bm{\varphi}_t \rangle = 0$ for all $t<s$.
\end{enumerate}
\end{claim}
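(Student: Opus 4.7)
The proof is by induction on $s \in \{1, \ldots, q-\gamma-1\}$. The base case $s=1$ follows directly from the final sentence of Claim \ref{claim:bound}, applied to the subgraph induced on $\bigsqcup_{j \le j_{q-\gamma}} Y_j \cup \bigsqcup_{i \le v(j_{q-\gamma})} X_i$. For the inductive step, assume that $(\bm{\varphi}_1, \bm{\varepsilon}_1), \ldots, (\bm{\varphi}_{s-1}, \bm{\varepsilon}_{s-1})$ have been constructed. Parameterize $\bm{\varphi}_s$ by the coefficients $\{{}_s\alpha_\sigma^j\}_{(j,\sigma)\notin\Sigma}$ and apply the partition $\Pi_E^s \sqcup \Pi_S^s \sqcup \Pi_O^s \sqcup \Pi_F^s$ from Subsection \ref{sub:basis_proof}. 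The key structural point is that each variable in $\Pi_O^s = \{{}_s\alpha_{r_{j_m}}^{j_m} : m = q-\gamma-s+2, \ldots, q-\gamma\}$ has $j$-index strictly greater than $j_{q-\gamma-s+1}$; since the RREF columns are ordered in reverse lexicographic $(j,\sigma)$-order, each pivot is the lex maximum of its row and hence every pivot value $\alpha_{\sigma_\ell}^{\eta_\ell}$ depends only on non-pivot variables of strictly smaller index. Combined with the ordering-based fact $u(i) \le j_{q-\gamma-s+1}$ for $i \le v(j_{q-\gamma-s+1})$ used in the proof of Claim \ref{claim:bound}, this means the $\Pi_O^s$ variables have no effect on the entries of $\bm{\varphi}_s$ on $\bigsqcup_{j \le j_{q-\gamma-s+1}} Y_j$ or on the induced signings on $\bigsqcup_{i \le v(j_{q-\gamma-s+1})} X_i$.

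Given this decoupling, the construction proceeds in two stages. First, provisionally set $\Pi_O^s = 0$ and run the iterative procedure from the proof of Claim \ref{claim:bound} on the subgraph induced by $\bigsqcup_{j \le j_{q-\gamma-s+1}} Y_j \cup \bigsqcup_{i \le v(j_{q-\gamma-s+1})} X_i$; this assigns values to $\Pi_F^s$, forces the relevant pivots, and determines signs in $\Pi_S^s$ together with a partial signing $\bm{\varepsilon}_s$ satisfying Properties (1)--(4). Second, impose the orthogonality relations: by Equation \ref{eqn:innerprod}, each condition $\langle \bm{\varphi}_s, \bm{\varphi}_t\rangle = 0$ for $t = 1, \ldots, s-1$ is linear in the coefficients of $\bm{\varphi}_s$, and after substituting the values fixed in the first stage, these conditions reduce to an $(s-1)\times(s-1)$ inhomogeneous linear system in the $s-1$ unknowns of $\Pi_O^s$. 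Solving this system determines $\bm{\varphi}_s$ without disturbing the first-stage work, and $\bm{\varepsilon}_s$ is extended arbitrarily on the remaining vanishing indices.

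The principal obstacle is to ensure that the coefficient matrix of this linear system is non-singular. We handle this by strengthening the inductive hypothesis to additionally require non-singularity of each such matrix arising at every future inductive step. Each invocation of Claim \ref{claim:bound} selects its eigenvector from a Zariski-open feasible region (the complement of finitely many hyperplanes, intersected with the open half-space dictated by Property (4)), and the singularity locus is the zero set of a determinantal polynomial in the previously chosen eigenvector data, hence a proper algebraic subvariety provided that polynomial is not identically zero. At the base $s=2$ this reduces to the non-vanishing of a single linear functional of $\bm{\varphi}_1$ extracted from Equation \ref{eqn:innerprod}, which can be arranged by an arbitrarily small perturbation within the open feasible region; inductively, small perturbations of earlier choices (which preserve Properties (1)--(5) by continuity and openness) suffice to achieve non-singularity at each new step. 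This supplies a solvable orthogonality system and completes the induction.
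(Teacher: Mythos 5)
Your construction mirrors the paper's almost step for step: the same induction, the same use of the partition $\Pi_E^s \sqcup \Pi_S^s \sqcup \Pi_O^s \sqcup \Pi_F^s$, the same observation that the $\Pi_O^s$ variables decouple from the entries on $\sqcup_{j \le j_{q-\gamma-s+1}} Y_j$ (your RREF/ordering argument for this is correct), and the same reduction of Property (5) to an $(s-1)\times(s-1)$ linear system in $\Pi_O^s$ via Equation \ref{eqn:innerprod}. The gap is in your resolution of what you correctly call the principal obstacle, the non-singularity of that system. You assert that the singularity locus is a proper subvariety ``provided that polynomial is not identically zero'' and then never verify that it is not identically zero for any $s$; that verification is the entire content of the step. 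Moreover, your proposed fix --- perturb earlier choices, which ``preserve Properties (1)--(5) by continuity and openness'' --- fails for Property (5): orthogonality is an equality constraint of positive codimension, not an open condition, so a generic small perturbation of $\bm{\varphi}_{s-1}$ destroys $\langle \bm{\varphi}_{s-1},\bm{\varphi}_t\rangle = 0$ for $t < s-1$. Any perturbation must be confined to the affine set of vectors in $E_\lambda$ orthogonal to $\bm{\varphi}_1,\dots,\bm{\varphi}_{s-2}$, and one must then show the determinant is not identically zero on that constrained set --- which your Zariski-genericity language does not address.

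The paper supplies exactly the two missing ingredients. For the base case $s=2$ it observes that the coefficient of ${}_1\alpha_{r_{j_{q-\gamma}}}^{j_{q-\gamma}}$ in the relevant linear functional is strictly positive (it is essentially the squared norm of the corresponding basis vector of $E_\lambda$), so the functional is nontrivial and a single extra linear constraint on $\bm{\varphi}_1$ suffices. For general $s$, when $\bm{\varphi}_1|_{\Pi_O^s},\dots,\bm{\varphi}_{s-1}|_{\Pi_O^s}$ are dependent, it exhibits a direction $\bm{x} \in E_\lambda$ orthogonal to $\bm{\varphi}_1,\dots,\bm{\varphi}_{s-2}$ with $\bm{x}|_{\Pi_O^s}$ outside the span of the earlier restrictions, and replaces $\bm{\varphi}_{s-1}$ by $\bm{\varphi}_{s-1}+\delta\bm{x}$; this preserves orthogonality exactly (not merely approximately) and, for $\delta$ small, preserves the sign conditions on the blocks $Y_j$, $j \le j_{q-\gamma-s+1}$. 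To repair your argument you would need to produce such a witness $\bm{x}$ (or otherwise prove the determinantal polynomial is nonzero somewhere on the orthogonality-preserving set); without it the induction does not close.
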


\begin{proof}
Claim \ref{claim:bound} guarantees the existence of our desired pair $(\bm{\varphi}_1,\bm{\varepsilon}_1)$. To prove this claim, we repeat a version of the proof of Claim \ref{claim:bound} in which the eigenvector $\bm{\varphi}_s$ needs only satisfy half-space and non-vanishing conditions for $Y_j$, $j \le j_{q-\gamma-s+1}$, has $\alpha_{\sigma}^j = 0$ for all elements of $\Pi_F^s$ with $j>j_{q-\gamma-s+1}$, and coefficients in $\Pi_O^s$ chosen so that $\bm{\varphi}_s$ is orthogonal to $\bm{\varphi}_t$, $t<s$.

We proceed by induction on the invertibility of the matrices associated with this orthogonalization procedure. In particular, given $\bm{\varphi}_1,...,\bm{\varphi}_{s-1}$, and some fixed choice of values for ${}_s\alpha_\sigma^j \not \in \Pi_O^s \cup \Pi_E^s$, orthogonality of $\bm{\varphi}_s$ to $\bm{\varphi}_t$, $t<s$, is equivalent (by Equation \ref{eqn:innerprod}) to the $s-1$ elements ${}_s\alpha_\sigma^j \in \Pi_O^s $ satisfying the $s-1$ linear equations
$$ \sum_{m = q-\gamma-s+2}^{q-\gamma}{}_s\alpha_{r_{j_m}}^{j_m} \bigg[{}_t\alpha_{r_{j_m}}^{j_m} + \sum_{(\iota,\omega) \not \in \Sigma}  {}_t\alpha_{\omega}^{\iota} \sum_{\ell=1}^\gamma c_{\ell,r_{j_m}}^{j_m} c_{\ell,\omega}^{\iota}  \bigg]  = -\sum_{{}_s\alpha_{\sigma}^{j} \not \in \Pi_O^s\cup \Pi_E^s} {}_s\alpha_{\sigma}^{j} \bigg[{}_t\alpha_{\sigma}^{j} + \sum_{(\iota,\omega) \not \in \Sigma}  {}_t\alpha_{\omega}^{\iota} \sum_{\ell=1}^\gamma c_{\ell,\sigma}^{j} c_{\ell,\omega}^{\iota}  \bigg]$$
for $ t= 1,...,s-1$. 

If the restrictions of $\bm{\varphi}_1,...,\bm{\varphi}_{s-1}$ to $\Pi_O^s$ are linearly independent, then the above system has a solution, as this restriction is a basis on $\Pi_O^s$, and so we may choose $\bm{\varphi}_{s}\big|_{\Pi_O^s}$ so that each $\langle \bm{\varphi}_t\big|_{\Pi_O^s} ,\bm{\varphi}_{s}\big|_{\Pi_O^s}\rangle$, $t<s$, is equal to any quantity we desire.

Therefore, it suffices to show that, at each step, the eigenvectors $\bm{\varphi}_1,...,\bm{\varphi}_{s-1}$ restricted to $\Pi_O^s$ are linearly independent. We begin with our base case of $s =2$. Our matrix is a scalar; we simply require that
$${}_1\alpha_{r_{j_{q-\gamma}}}^{j_{q-\gamma}} + \sum_{(\iota,\omega) \not \in \Sigma}  {}_1\alpha_{\omega}^{\iota} \sum_{\ell=1}^\gamma c_{\ell,r_{j_{q-\gamma}}}^{j_{q-\gamma}} c_{\ell,\omega}^{\iota}  \ne 0,$$
and note that the coefficient corresponding to ${}_1\alpha_{r_{j_{q-\gamma}}}^{j_{q-\gamma}}$ in the above linear function must be positive, as the eigenvector with $\alpha_{r_{j_{q-\gamma}}}^{j_{q-\gamma}} = 1$ and all other variables zero must have non-zero norm. By adding this single linear constraint to $\bm{\varphi}_1$, we have satisfied our base case of $s=2$. 

Now consider an arbitrary $s>2$. After selecting $\pphi_{s-1}$, if $\bm{\varphi}_1,...,\bm{\varphi}_{s-1}$ restricted to $\Pi_O^s$ are linearly independent, then we simply choose our coefficients for $\bm{\varphi}_s$ as in the proof of Claim \ref{claim:bound} for $Y_j$, $j \le j_{q-\gamma-s+1}$, and choose $\Pi_O^s$ to satisfy the above system. If the eigenvectors are linearly dependent, then, by induction $\bm{\varphi}_1\big|_{\Pi_O^s},...,\bm{\varphi}_{s-2}\big|_{\Pi_O^s}$ are linearly independent, and so $\bm{\varphi}_{s-1}\big|_{\Pi_O^s}$ is in the span of the other vectors. Let $\bm{x} \in E_\lambda$ be a vector orthogonal to $\bm{\varphi}_1,...,\bm{\varphi}_{s-2}$ such that $\bm{x} \big|_{\Pi_O^s} \not \in \text{span}\{\bm{\varphi}_1\big|_{\Pi_O^s},...,\bm{\varphi}_{s-2}\big|_{\Pi_O^s} \}$. Then, for a sufficiently small choice of $\delta$, $\sgn\big(\bm{\varphi}_{s-1}  + \delta \bm{x}\big)$ equals $\sgn\big(\bm{\varphi}_{s-1}\big)$ on $Y_j$, $j \le j_{q-\gamma-s+1}$. By simply replacing $\bm{\varphi}_{s-1}$ by $\bm{\varphi}_{s-1}  + \delta \bm{x}$, we now have our desired property, while maintaining all sign conditions.
\end{proof}

\begin{claim}\label{claim:main}
    Let $\mathcal{B}_\lambda$ be the set of orthonormal bases of $E_\lambda$. Then there exists a manifold $\Phi_\lambda \subset \mathcal{B}_\lambda$ of co-dimension zero, an ordering of basis elements $\{\bm{\varphi}_1,...,\bm{\varphi}_r\} \in \Phi_\lambda$, and signings $\bm{\varepsilon}_1,...,\bm{\varepsilon}_r \in \{\pm 1\}^n$ satisfying {$\bm{\varepsilon}_s(i) = \sgn(\bm{\varphi}_s(i))$ for all $\bm{\varphi}_s(i) \ne 0$, $i \in [n]$, $s = 1,...,r$,} such that
$$ \mathsf{N}(\bm{\varepsilon}_s) \le k+(s-1)+\mathsf{f}, \qquad \quad s =1,...,r.$$
    \end{claim}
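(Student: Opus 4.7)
The plan is to take the orthogonal eigenvectors produced by Claim \ref{claim:orth}, extend them to a full orthonormal basis of $E_\lambda$ that satisfies the hypotheses of Claim \ref{claim:bound} at every index, and then observe that all of the conditions involved are strict open conditions so the resulting set of bases has codimension zero.

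First I would normalize $\bm{\varphi}_1,\ldots,\bm{\varphi}_{q-\gamma-1}$ from Claim \ref{claim:orth} and complete them, in any fashion, to an orthonormal basis $\bm{\varphi}_1,\ldots,\bm{\varphi}_r$ of $E_\lambda$. Claim \ref{claim:orth} certifies non-vanishing of $\bm{\varphi}_s$ only on $\sqcup_{j\le j_{q-\gamma-s+1}} Y_j$, because in its proof the coefficients $\alpha_\sigma^j \in \Pi_F^s$ with $j > j_{q-\gamma-s+1}$ are set to zero. Using exactly these free coefficients, and mimicking the iterative construction in the proof of Claim \ref{claim:bound}, perturb each $\bm{\varphi}_s$ by an arbitrarily small amount so that (i) $\bm{\varphi}_s$ becomes non-vanishing on all of $[n]\setminus i_0(\lambda)$, (ii) orthogonality with the earlier vectors is preserved, and (iii) all of the strict sign conditions already guaranteed by Claim \ref{claim:orth} continue to hold. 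For $s \ge q-\gamma$, the vectors from the arbitrary orthonormal completion need only be made non-vanishing on $[n]\setminus i_0(\lambda)$, by an analogous small perturbation. Finally, define $\bm{\varepsilon}_s$ by the sign of $\bm{\varphi}_s$ on $[n]\setminus i_0(\lambda)$ and by Property (3) of Claim \ref{claim:bound} on each $X_i$.

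For $s \le q-\gamma-1$, Properties (1)--(4) of Claim \ref{claim:bound} are now satisfied and we conclude directly that $\mathsf{N}(\bm{\varepsilon}_s) \le k + (s-1) + \mathsf{f}$. For $s \ge q-\gamma$, Property (4) is vacuous; repeating the nodal-count argument in the proof of Claim \ref{claim:bound} without the Property (4) merging step, and using $\hat k \le k-\gamma$ (which follows from $\hat k + \hat r \le k + r$ and $\gamma = \hat r - r$), still yields $\mathsf{N}(\bm{\varepsilon}_s) \le \hat k + (q-1) + \mathsf{f} \le k + (s-1) + \mathsf{f}$, the last inequality using $s \ge q-\gamma$. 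Every condition imposed on the basis is either a strict sign inequality on a smooth function of the basis (the entries of $\bm{\varphi}_s$ and the linear forms in Property (4)) or a smooth constraint defining $\mathcal{B}_\lambda$ as a manifold; consequently the map from the basis to each $\bm{\varepsilon}_s$ is locally constant, and the nodal bounds persist under sufficiently small perturbations. The non-empty set $\Phi_\lambda$ of bases for which the argument succeeds is therefore open in $\mathcal{B}_\lambda$, i.e.\ of codimension zero.

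The main obstacle I anticipate is the perturbation step above: the free coefficients in $\Pi_F^s$ must simultaneously make finitely many linear forms (the coordinates of $\bm{\varphi}_s$ outside $i_0(\lambda)$ not handled by Claim \ref{claim:orth}) nonzero, while respecting orthogonality with $\bm{\varphi}_t$, $t<s$. Since orthogonality contributes only finitely many linear constraints and each non-vanishing requirement excludes only a codimension-one locus, a dimension-counting argument -- essentially the one used in the inductive construction in Claim \ref{claim:bound} -- produces a suitable perturbation at each step of the induction on $s$.
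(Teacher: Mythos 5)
Your overall architecture matches the paper's: take the vectors from Claim \ref{claim:orth}, complete to an orthonormal basis, upgrade to non-vanishing on $[n]\setminus i_0(\lambda)$, invoke Claim \ref{claim:bound} for $s<q-\gamma$ and the direct count for $s\ge q-\gamma$ (your chain $\hat k\le k-\gamma$, hence $\NN(\bm{\varepsilon}_s)\le \hat k+(q-1)+\mathsf{f}\le k+(s-1)+\mathsf{f}$, is the same as the paper's), and then pass to an open neighborhood. However, there is a genuine gap in your mechanism for the non-vanishing step. You propose to perturb each $\bm{\varphi}_s$ individually, preserving orthogonality only with the \emph{earlier} vectors, and to justify this by dimension counting. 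For the final basis to be orthonormal, each $\bm{\varphi}_s$ must be orthogonal to \emph{all} other basis vectors; once $\bm{\varphi}_1,\dots,\bm{\varphi}_{r-1}$ are fixed, the admissible perturbations of $\bm{\varphi}_r$ inside $E_\lambda$ form a one-dimensional space (scalar multiples of $\bm{\varphi}_r$ itself), which cannot move any entry off zero. Even for intermediate $s$, the dimension count does not suffice: the hyperplane $\{\pphi(i)=0\}$ in $E_\lambda$ has dimension $r-1$ and may well contain the entire admissible perturbation subspace, so ``each non-vanishing requirement excludes only a codimension-one locus'' does not by itself produce a valid perturbation. You need a specific direction that is guaranteed to move $\bm{\varphi}_s(i)$ off zero while preserving orthonormality of the whole frame.

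The paper supplies exactly this via Givens rotations anchored at $\bm{\varphi}_1$: since $\bm{\varphi}_1$ is non-vanishing on $[n]\setminus i_0(\lambda)$, applying $G_{1,s}(\theta)$ with a sufficiently small angle $\theta$ replaces $\bm{\varphi}_s$ by $\cos\theta\,\bm{\varphi}_s-\sin\theta\,\bm{\varphi}_1$ (simultaneously adjusting $\bm{\varphi}_1$), which preserves orthonormality exactly, turns every accidental zero of $\bm{\varphi}_s$ outside $i_0(\lambda)$ into a nonzero entry, and---with the explicit angle bound $\rho/2^r$---changes no existing sign. The same device, with angles bounded by an explicit $\upsilon$ depending on the smallest nonzero entry, is what the paper uses to exhibit $\Phi_\lambda$ as an explicit positive-measure neighborhood rather than appealing only to openness. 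Your final openness argument is fine in spirit, but without repairing the extension step (e.g., by adopting the rotation trick or an equivalent device that modifies several basis vectors at once) the construction of even a single qualifying basis is incomplete.
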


\begin{proof}
By Claim \ref{claim:orth}, there are eigenvectors $\bm{\varphi}_1,...,\bm{\varphi}_{q-\gamma-1}$ that almost satisfy the conditions of Claim \ref{claim:bound}, but may vanish on 
$$[n] \backslash \big[ i_0(\lambda)  \bigsqcup \sqcup_{j\le j_{q-\gamma-s+1}} Y_j \big].$$
Consider an arbitrary extension and re-normalization of $\bm{\varphi}_1,...,\bm{\varphi}_{q-\gamma-1}$ to an orthonormal basis for $E_\lambda$: $\bm{\varphi}_1,...,\bm{\varphi}_{r}$. The vector $\bm{\varphi}_1$ is non-vanishing on $[n] \backslash i_0(\lambda)$, and using small rotations involving this vector, we may make every eigenvector in the basis non-vanishing on $[n] \backslash i_0(\lambda)$ without changing the sign of any non-zero entry. 

To do so, we make use of Givens rotations $G_{i,j}(\theta) \in \mathbb{R}^{n\times n}$, e.g., the orthogonal matrix with non-zero entries $\big[G_{i,j}(\theta)\big]_{kk} = 1$ for $k \ne i,j$, $\big[G_{i,j}(\theta)\big]_{kk} = \cos \theta$ for $k = i,j$, and $\big[G_{i,j}(\theta)\big]_{ij} = -\big[G_{i,j}(\theta)\big]_{ji} = \sin \theta$. Let $\rho$ be the magnitude of the smallest non-zero entry of $\bm{\varphi}_1,...,\bm{\varphi}_{r}$, and consider the following sequence of Givens rotations:
$$ \big[\bm{\varphi}'_1 \, ... \, \bm{\varphi}'_r  \big]= \bigg[\prod_{i=2}^{r} G_{1,i}(\rho/2^r)\bigg] \big[\bm{\varphi}_1 \, ... \, \bm{\varphi}_r \big] .$$
Any non-zero entry $\bm{\varphi}_s(j)$ has the same sign as $\bm{\varphi}'_s(j)$, as
\begin{align*}
\big| |\bm{\varphi}'_s(j)| - |\bm{\varphi}_s(j)| \big| &\le |\bm{\varphi}_s(j)| \big(1-\cos^{r-1}(\rho/2^r) \big) + (r-1)\sin(\rho/2^r ) \\
&\le \big(1- (1-(\rho/2^r)^2/2)^{r-1} \big) + (r-1)(\rho/2^r) \\
&\le 2^{r-1} (\rho /2^r)^2 + (r-1) (\rho/2^r) < \rho,
\end{align*}
and any zero entry $\bm{\varphi}_s(j)$, $j \not \in i_0(\lambda)$ results in a non-zero $\bm{\varphi}'_s(j)$, as each $s \ne 1$ has only one Givens rotation applied to it, and every entry $\bm{\varphi}_1(j)$, $j \not \in i_0(\lambda)$, is non-vanishing and remains so after each rotation.

Therefore, there exists a choice of $\{\bm{\varphi}_1,...,\bm{\varphi}_r\} \in \mathcal{B}_\lambda$ and a compatible set of signings $\bm{\varepsilon}_1,...,\bm{\varepsilon}_r \in \{\pm 1\}^n$ such that each eigenvector vanishes only on $i_0(\lambda)$, and $(\bm{\varphi}_s,\bm{\varepsilon}_s)$, $s < q - \gamma$, satisfy the conditions of Claim \ref{claim:orth}. Again, we recall that, for $s \ge q-\gamma$, by simply setting $ \bm{\varepsilon}(a) = \sgn(\bm{\varphi}(b))$  for all $a \in X_i$ and some fixed $b \in Y_{u(i)}$ in the neighborhood of $X_i$, $i = 1,...,p$, we automatically have 
$$\NN(\bm{\varepsilon}_s) \le k -\gamma  + q-1 + \mathsf{f} \le k +(s-1) + \mathsf{f}.$$
For $s<q-\gamma$, $(\bm{\varphi}_s,\bm{\varepsilon}_s)$, satisfying the conditions of Claim \ref{claim:orth} and not vanishing on $[n]\backslash i_0(\lambda)$ implies, by Claim \ref{claim:bound}, that $\NN(\bm{\varepsilon}_s) \le k +(s-1) + \mathsf{f}$ in this case as well.

To complete the proof, we simply expand the point $\big\{\bm{\varphi}_1,...,\bm{\varphi}_r\big\} \in \mathcal{B}_\lambda$ using rotations of arbitrarily small angle. In particular, we set
$$\Phi_\lambda = \bigg\{ \bigg[\prod_{i=1}^{\binom r2} G_{p_i,q_i}(\theta_i)\bigg] \big[\bm{\varphi}_1 \, ... \, \bm{\varphi}_r \big] \, \bigg| \, p_i,q_i \in[r], \, \theta_i \in [ -\upsilon,\upsilon], \, i = 1,...,r \bigg\},$$
where 
$$\upsilon < 2^{-{r \choose 2}} \min_{\substack{s \in 1,...,r \\ j \in [n] \backslash i_0(\lambda)}} |\bm{\varphi}_s(j)|.$$ 
Considering an arbitrary entry $\bm{\varphi}_s(j)>0$, we note that the composition of $r \choose 2$ Givens rotations with angles in $[-\upsilon,\upsilon]$ can change this entry by at most
\begin{align*}
    \big|\bm{\varphi}_s(j) - \big[ \cos^{r \choose 2}(\upsilon) \bm{\varphi}_s(j)  -  \sum_{i=1}^{r \choose 2}\sin(\upsilon)\cos^{i-1}(\upsilon) \big] \big| &\le |\bm{\varphi}_s(j)| (1-\cos^{r \choose 2}(\upsilon)) + \textstyle{r \choose 2 }\sin(\upsilon) \\
    &\le  |\bm{\varphi}_s(j)| (1-(1-\upsilon^2/2)^{r \choose 2}) + \textstyle{r \choose 2 }\upsilon \\
    &\le 2^{r \choose 2} \upsilon^2 + \textstyle{r \choose 2} \upsilon < \bm{\varphi}_s(j),
\end{align*} 
and so every basis in $\Phi_\lambda$ has the same sign pattern as $\{\bm{\varphi}_1,...,\bm{\varphi}_r\}$, and, therefore, the same $\bm{\varepsilon}_1,...,\bm{\varepsilon}_r$. This completes the proof. 
\end{proof}

\section{Nodal Count of Signed Erd\H os-R\'enyi Graphs}\label{sec:average}

 In this section, we prove Theorem \ref{thm:mainavg}. We consider an Erd\H{o}s-R\'enyi random signed graph $G(n,p,q)$, with $0<p,q<1$ fixed constants. We give $\bm{\varphi}$ the index $i$. We prove the following.

 {
 \begin{theorem}[Specific version of Theorem \ref{thm:mainavg}]
For any $0<\epsilon,p,q<1$, there is a constant $\gamma>0$ such that for every $\alpha>0$, there is some $N$ such that for $n>N$, index $i\in [\epsilon n,(1-\epsilon)n]$ and the $i$th eigenvector $\pphi$ of the adjacency matrix of $G\sim G(n,p,q)$, the probability that $\NN(\pphi)<\alpha n$ is at least $1-n^{-\gamma}$. 
 \end{theorem}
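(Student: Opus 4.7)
The plan is to follow the five-step outline from the introduction and show that for any fixed integer $k$, one can greedily pack $(1-o(1))\,n/k$ disjoint size-$k$ nodal subgraphs, so that $\NN(\bm{\varphi}) \le n/k + o(n)$. Choosing $k > 2/\alpha$ gives the stated conclusion. Fix $k$ large and an auxiliary size parameter $s = s(k,p,q) \gg k$. For each $S \subset [n]$ with $|S|=s$, set $f_S(A_S, \bm{\varphi}(S)) = 1$ iff $S$ contains a $k$-subset $T$ with $G[T]$ connected and $A_{ij}\bm{\varphi}(i)\bm{\varphi}(j) < 0$ on every edge of $G[T]$. The whole argument reduces to showing $\E f_S = 1 - o(1)$ uniformly in $S$, with enough concentration to run a greedy removal.

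First I would compute $\E f_S$ in the ``Gaussian model'' where $\sqrt{n}\,\bm{\varphi}(S)$ is replaced by a standard Gaussian vector $\mathbf{g}$ independent of $A_S$: conditional on $A_S$, each fixed $k$-subset $T \subset S$ is a size-$k$ nodal subgraph with probability at least $2^{-k+1}$ times $\P[G[T]\text{ admits the required signing}]$, a positive constant $c(k,p,q)$. Picking $s$ so that $\binom{s}{k}\,c(k,p,q)\to\infty$ and applying Janson's inequality \cite{janson2011random} to the indicators $\{\mathbf{1}[T\text{ is nodal in }G[S]]\}_{|T|=k}$, whose dependency graph has controlled fiber overlap, yields $\E_{\mathrm{Gauss}} f_S \to 1$. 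Next, on the high-probability event that $\sqrt{n}\,\bm{\varphi}$ is delocalized (entries lying in $[-\rho^{-1},-\rho]\cup[\rho,\rho^{-1}]$, valid with probability $1-n^{-\omega(1)}$ by \cite{rudelson2017delocalization}), each sign $\sgn(\bm{\varphi}(i)\bm{\varphi}(j))$ is approximable by a degree-$D(\eta)$ univariate polynomial in the Sobolev sense via Rodr\'iguez \cite{rodriguez2003approximation}. Composing these with the Boolean structure of $f_S$ (a fixed function of $\binom{k}{2}$ sign tests and $\binom{k}{2}$ connectivity indicators in $A_S$) produces a multivariate polynomial $p_S$ of bounded degree $D = D(k,\eta)$ with $\E|f_S - p_S| < \eta$. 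To decouple $A_S$ from $\bm{\varphi}(S)$, let $A^{(S)}$ be $A$ with entries inside $S$ zeroed and $\bm{\varphi}^{(S)}$ its $i$-th eigenvector; resolvent perturbation combined with delocalization and bulk level repulsion for the generalized Wigner matrix $A$ gives $|\E p_S(A_S,\bm{\varphi}(S)) - \E p_S(A_S,\bm{\varphi}^{(S)}(S))| = o(1)$, and now $A_S \perp \bm{\varphi}^{(S)}$. Finally, the bulk quantum ergodicity theorem of \cite{huang2015bulk}, available since $i \in [\epsilon n,(1-\epsilon)n]$, gives $|\E p_S(A_S,\sqrt{n}\bm{\varphi}^{(S)}(S)) - \E p_S(A_S,\mathbf{g})| \le n^{-\beta}$ for some $\beta = \beta(D,\epsilon) > 0$; chaining these estimates with Step~A yields $\E f_S = 1 - o(1)$ with polynomial rate.

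With a parallel second-moment bound for $N_s := \sum_{|S|=s} f_S$ (applying the same quantum ergodicity plus decoupling argument to pairs $(S,S')$ to get $\mathrm{Var}(N_s) = o(\E[N_s]^2)$), Chebyshev's inequality gives $N_s \ge (1-o(1))\binom{n}{s}$ with probability at least $1 - n^{-\gamma}$ for some $\gamma > 0$. I would then greedily select an $S$ with $f_S = 1$, extract and record a size-$k$ nodal subgraph inside it, delete all $s$ vertices of $S$, and iterate; the Chebyshev bound survives each deletion so long as only $o(n)$ vertices have been removed. After $(1-o(1))\,n/s$ rounds, only $o(n)$ vertices remain, and assigning each leftover vertex its own nodal domain yields $\NN(\bm{\varphi}) \le n/k + o(n) < \alpha n$. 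The main obstacle is keeping the polynomial degree $D$ fixed while letting $\eta \to 0$: this forces the order of limits to be $n \to \infty$ first with $k, s, D$ held constant, which in turn prevents us from taking $k$ to grow with $n$ (e.g.\ as $\log n$), and is the reason the theorem only delivers $o(n)$ rather than the conjectured $\Theta(n/\log n)$.
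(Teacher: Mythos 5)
Your overall architecture (indicator function on $s$-sets, Sobolev polynomial approximation \`a la Rodr\'iguez, resolvent-based decoupling of $A_S$ from $\bm{\varphi}(S)$, quantum ergodicity in the bulk, Janson's inequality in the Gaussian model, then a greedy extraction) matches the paper's proof step for step, and the first four stages are essentially sound modulo routine details. The genuine gap is in the final counting step. You delete \emph{all $s$ vertices of $S$} after extracting a single size-$k$ nodal subgraph from it. Since $s \gg k$, each round assigns only $k$ of the $s$ deleted vertices to a recorded domain; the other $s-k$ vertices are removed from consideration but still must be covered by the partition. After $(1-o(1))\,n/s$ rounds you have produced only $(1-o(1))\,n/s$ domains covering $(1-o(1))\,kn/s$ vertices, leaving a \emph{constant fraction} $(1-k/s)(1-o(1))$ of $[n]$ to be assigned as singletons. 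That gives $\NN(\bm{\varphi}) \approx (1-k/s)\,n$, which is $\Theta(n)$ and does not beat $\alpha n$ for small $\alpha$. Relatedly, your claim that ``the Chebyshev bound survives each deletion so long as only $o(n)$ vertices have been removed'' is both insufficient (you remove $(1-o(1))n$ vertices in total) and not the right mechanism: the randomness has already been spent, so you cannot re-invoke a probabilistic bound after each deletion.

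The fix, which is what the paper does, is to remove only the $k$ vertices of the extracted clique domain each round and to replace the dynamic probabilistic argument by a deterministic pigeonhole at the end. On the single high-probability event that at least $(1-2\epsilon)\binom{n}{s}$ of the $s$-sets contain a $k$-clique domain, the greedy process cannot terminate until the surviving vertex set $R$ has the property that \emph{every} $s$-subset of $R$ is one of the at most $2\epsilon\binom{n}{s}$ bad sets; hence $\binom{|R|}{s}\le 2\epsilon\binom{n}{s}$, forcing $|R| \le e\,(2\epsilon)^{1/s} n$, which is made smaller than $n/k$ by choosing $\delta$ (and hence $\epsilon$) sufficiently small relative to $k$ and $s=(p\wedge q)^{-k}$. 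This yields at most $n/k$ clique domains plus at most $n/k$ leftover singletons, i.e.\ $\NN(\bm{\varphi})\le 2n/k < \alpha n$. One further minor divergence: you test for connected nodal subgraphs rather than clique domains; the paper deliberately restricts to cliques because the connectivity indicator is awkward to fold into the polynomial approximation (it sketches a spanning-tree workaround), and cliques suffice for the upper bound.
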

}

 Therefore, by taking an infinite decreasing sequence of $\alpha$, the number of nodal domains is $o(n)$ with probability $1- O(n^{-\gamma})$.
 
 In order to be consistent with the literature and reason about the spectrum, we work with both the adjacency matrix $A$ and a normalized version $\tilde A:=\frac{1}{\sqrt{(p+q)n}}A$,  so that $\E(\tilde A_{u,v}^2=1/n)$. 
We proceed with the steps as listed in the introduction.

\subsection{Part I: Function Definition}
In order to count nodal domains, we choose some $s$ such that almost all sets of $s$ vertices contain a nodal domain of size $k$ for $k>0$. We set 
\begin{equation}\label{eq:defs}
s:=(p\wedge q)^{-k}
\end{equation} and will show this is sufficient. To quantify whether a set $|S|=s$ contains a nodal domain of size $k$, we consider the function $f_s:\R^{\binom {s+1}2}\rightarrow\R$, defined as
\[
f_s(A_S,\sqrt{n}\bm{\varphi}(S)):= 1_{> 0}\left(\sum_{B\in \binom{S}k}\prod_{(u,v)\in \binom{B}{2}} 1_{>0}(n\bm{\varphi}(u)\bm{\varphi}(v)A_{uv})\right).
\]

Therefore $f_s$ asks whether $S$ contains a nodal domain that is a clique. We call these \emph{clique domains}. This is more specific than a general nodal domain, but is analytically easier to deal with. Moreover, we expect these form a constant portion of all nodal domains, so our asymptotic result does not change.  Similar analysis would work with a slightly more complicated function that counts nodal domains of any type, namely
\[
 1_{{>}0}\left[\sum_{B\in \binom {S} k}\left(\sum_{T\in {\mathfrak T_B}}\prod_{uv\in T} A_{uv}^2\right)\prod_{(u,v)\in \binom{B}{2}} 1_{\geq 0}(n\bm{\varphi}(u)\bm{\varphi}(v)A_{uv})\right]
\]
where $\mathfrak T_B$ is the set of spanning trees of $S$.

Note that we have renormalized $f_s$ so that, typically, its eigenvector inputs are $\Theta(1)$.

\subsection{Part II: Polynomial Approximation}
We approximate $f_s$ with a finite degree polynomial. We localize the distribution of $A$ to $A_S$ by defining $\mathbf{M}\subset \text{Mat}_{sym}(s)$ as the set of feasible assignments of $A_S$. Specifically,  $\mathbf{M}$ is the set of matrices $M$ such that for any pair of indices $u,v$
\[M_{uv}\in\left\{
\begin{array}{cc}
     \{0\}&u=v  \\
     \{0,\pm 1\}& u\neq v.
\end{array}\right.
\]
 We equip $\mathbf{M}$ with the distribution of $A_S$. For $M\in \mathbf M$ we will write $\tilde M:=\frac{1}{\sqrt{(p+q)n}}M$. From now on, we denote by $\mathbf{g}$ a length $s$ vector of i.i.d. standard normal Gaussians of length $s$. Abusing notation, we will also denote the $s$ dimensional probability measure of $\mathbf{g}$ by $\mathbf{g}$.

\begin{lemma}\label{lem:defp}
    For any $0<\delta<1, C>1$,  there is a finite degree polynomial $p_s:\R^{\binom{s+1}{2}}\rightarrow \R$, such that the following are true. We consider $y\in \R^{s}$ to stand in for the contribution of $\pphi$. Then
    \begin{enumerate}
    \item
    For any $M\in\mathbf{M}$, $p_s$ satisfies the following bounds.

        \begin{eqnarray}\label{eq:pinf}
        |p_s(M,y)-f_s(M,y)|\leq 
             \delta & y: \forall u,v\in \binom{[s]}{2}, y_{u}y_{v}\in[-C,{-\delta}]  \cup [\delta,C] \\
             \label{eq:pinf2}
            |p_s(M,y)|\leq 1+\delta & y: \forall u,v\in \binom{[s]}{2}, y_{u}y_{v}\in[-C,C] 
        \end{eqnarray}
        
    \item
    For any $M\in\mathbf{M}$, $p_s(M,y)$ is even in $y$ and
    \begin{equation}
    \label{eq:ell2bound}
    \E\left[p_s({M},\mathbf{g})^2\right]\leq 2.
    \end{equation}
    \end{enumerate}
\end{lemma}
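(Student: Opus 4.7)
The plan is to build $p_s$ in stages: approximate the sign function by a univariate polynomial $Q$, assemble it into a polynomial in $y$ that for each fixed $M \in \mathbf{M}$ approximates $f_s(M, \cdot)$ on the compact region, and then use the discrete structure $M_{uv} \in \{-1, 0, 1\}$ to make the dependence on $M$ polynomial as well.

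First I would obtain, via Rodr\'iguez's density result for univariate polynomials in weighted Sobolev norms \cite{rodriguez2003approximation}, an odd polynomial $Q(t)$ such that, for arbitrarily small $\eta > 0$: $|Q(t) - \sgn(t)| \le \eta$ on $[-C^2, -\delta^2] \cup [\delta^2, C^2]$; $|Q(t)| \le 1$ on $[-C^2, C^2]$ (by a final renormalization $Q \leftarrow Q/(1+\eta)$ if necessary); and $\E[Q(g_1 g_2)^2] \le 1 + \eta$ for independent standard Gaussians $g_1, g_2$. Using the identity $M_{uv}^3 = M_{uv}$ on $\{-1,0,1\}$, the quantity
\[
R(M_{uv}, y_u, y_v) := \tfrac{1}{2}\bigl(M_{uv}^2 + M_{uv}\,Q(y_u y_v)\bigr)
\]
is a polynomial in $(M_{uv}, y_u, y_v)$ taking values in $[0, 1]$ whenever $y_u y_v \in [-C, C]$, and matches $1_{>0}(M_{uv} y_u y_v)$ up to error $\eta/2$ on the regular region (with the essential feature that $R \equiv 0$ when $M_{uv} = 0$). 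I then form the inclusion--exclusion-style assembly
\[
p_s(M, y) := 1 - \prod_{B \in \binom{S}{k}} \left( 1 - \prod_{(u,v) \in \binom{B}{2}} R(M_{uv}, y_u, y_v) \right),
\]
which for each fixed $M$ directly encodes the indicator ``some $B \in \binom{S}{k}$ forms a clique domain'', with every factor lying in $[0,1]$ on the compact region. It follows immediately that $p_s \in [0,1]$ there, so \eqref{eq:pinf2} holds; and propagating the pointwise $\eta$-error through the finitely many products yields \eqref{eq:pinf} once $\eta$ is chosen small in terms of $\delta, s, k$.

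Evenness of $p_s$ in $y$ is automatic, since $R$ depends on $y$ only through the $y \mapsto -y$-invariant product $y_u y_v$, and this invariance is preserved under sums, products, and subtractions. For the $L^2$ bound \eqref{eq:ell2bound}, since $|f_s| \le 1$ gives $\E[f_s(M, \mathbf{g})^2] \le 1$, the triangle inequality in $L^2(\mathbf{g})$ reduces the task to bounding $\|p_s(M, \cdot) - f_s(M, \cdot)\|_{L^2(\mathbf{g})}$ by a small constant. The ``regular region'' contribution is controlled by \eqref{eq:pinf}, while on its complement one uses Gaussian tail bounds together with the moment estimate $\E[Q(g_1 g_2)^{2m}] < \infty$ inherited from Rodr\'iguez's construction. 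The main obstacle is precisely this tension between sup-norm approximation on compacta and $L^2$ control under the unbounded Gaussian measure: naive Weierstrass-type interpolants can blow up off the compact set and wreck the Gaussian tail, and the point of invoking Rodr\'iguez's simultaneous sup-norm and weighted-Sobolev approximation is exactly to reconcile these two demands in a single polynomial $Q$.
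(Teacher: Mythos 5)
Your construction is correct in substance and shares its engine with the paper's proof --- both hinge on Rodr\'iguez's density theorem in the weighted Sobolev space $W^{1,p}(\mu)$ to produce a univariate polynomial $Q$ that approximates the inner (soft) sign indicator uniformly on the compact region while keeping its high moments under the law of $g_ug_v$ bounded, which is exactly the tension you correctly identify at the end. Where you genuinely diverge is in the outer assembly. The paper approximates the outer threshold $\eta_{1/2}$ applied to $\sum_{\alpha}\prod_{uv}Q(M_{uv}y_uy_v)$ by a second polynomial $P$ on a compact interval, and then has to propagate errors through the composition and choose the Sobolev exponent $p = \binom{k}{2}4d$ in terms of $\deg P$. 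You instead exploit the Boolean identity $\vone_{>0}\big(\sum_B x_B\big) = 1-\prod_B(1-x_B)$ for $x_B\in\{0,1\}$, which reproduces $f_s$ \emph{exactly} when the inner factors are true indicators. This eliminates the outer approximation layer entirely, and since each factor $R=\tfrac12(M_{uv}^2+M_{uv}Q(y_uy_v))$ lies in $[0,1]$ wherever $|Q|\le 1$, you get $p_s\in[0,1]$ on the whole event $\{y_uy_v\in[-C,C]\}$ for free --- strictly stronger than \eqref{eq:pinf2} and cleaner than the paper's bound there. The price is negligible; you still need high moments of $Q(g_ug_v)$ for the tail term in \eqref{eq:ell2bound}, so the careful choice of $L_p(\mu)$ exponent does not fully disappear.

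Two small repairs. First, $\sgn$ is not in $W^{1,p}(\mu)$, so Rodr\'iguez's theorem cannot be applied to it directly; you must first replace it by a smooth odd interpolant that equals $\pm1$ outside $[-\delta,\delta]$ (the paper's $\eta_\delta$, suitably shifted) and approximate that --- harmless, since your target region excludes a neighborhood of $0$. Second, your route to \eqref{eq:ell2bound} via $\|p_s\|_{L^2}\le\|f_s\|_{L^2}+\|p_s-f_s\|_{L^2}$ is lossy: the complement of the region in \eqref{eq:pinf} includes $\{\exists\, u,v:\,|g_ug_v|<\delta\}$, which has probability bounded away from $0$ when $\delta$ is not small, and there $|p_s-f_s|$ is only $O(1)$, so $\|p_s-f_s\|_{L^2}$ need not be small. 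The fix is already in your construction: bound $\E[p_s^2\vone_C]\le 1$ directly from $p_s\in[0,1]$ on $\{|g_ug_v|\le C\ \forall u,v\}$, and reserve the Cauchy--Schwarz/moment argument for the event $\overline{\vone_C}$ only (which, as in the paper's own proof, requires $C$ large enough for the stated constant $2$).
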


Before we prove this lemma, we give an idea of the method. We will work in a weighted Sobolev normed space. Our weight function $\mu$ is a probability measure on $\R$ defined as 
\begin{equation}\label{eq:defmu}
d\mu(x):=\frac12e^{-|x|}.
\end{equation}

We then define the Sobolev norm as for any function $f:\R\rightarrow \R$,
\[
\|f\|_{W^{k,p}}=\left(\int_{-\infty}^\infty \sum_{i=0}^k |f(x)^{(i)}|^p d\mu(x)\right)^{1/p}.
\]
The key fact is that under this norm, $f_s$ can be approximated by polynomials. This, by a result of Rodr\'iguez, is implied by the fact that polynomials are dense in $L_p(\R,\mu)$ (see \cite{koosis1998logarithmic} Page 170).

\begin{lemma}\label{lem:sobolevdense}[\cite{rodriguez2003approximation} Proposition 4.2]
   For $\mu$ defined in \eqref{eq:defmu}, $k\in \N$ and $1\leq p<\infty$,
 polynomials are dense in $W^{k,p}$ among all functions that have bounded $W^{k,p}$ norm. 
\end{lemma}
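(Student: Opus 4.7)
The plan is to reduce the problem to a univariate polynomial approximation of the indicator $1_{>0}$, handle the sign of $M_{uv}$ by polynomial interpolation across the three allowed values $\{-1, 0, +1\}$, and assemble via a Boolean identity for $f_s$. For fixed $M \in \mathbf{M}$, the function $f_s(M, \cdot)$ admits the polynomial representation
$$f_s(M, y) = 1 - \prod_{B \in \binom{[s]}{k}}\Big( 1 - \prod_{(u,v) \in \binom{B}{2}} b_{uv} \Big), \qquad b_{uv} := 1_{>0}(y_u y_v M_{uv}),$$
since the inner sum is a nonnegative integer and the outermost $1_{>0}$ of such equals $1 - \prod(1 - \cdot)$. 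Substituting a polynomial approximation of each $b_{uv}$ in this identity will produce the desired $p_s$.

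To build the univariate approximation, fix $\eta \in (0, \delta)$ and take a smooth $h : \R \to [0, 1]$ with $h(t) = 1_{>0}(t)$ for $|t| \geq \eta$; since $h$ is bounded with bounded derivatives of all orders, $h \in W^{k, p}(\mu)$ for every $k, p$. Fix a large exponent $p > \binom{s}{2}$. By Lemma \ref{lem:sobolevdense}, for any $\varepsilon > 0$ one can pick a polynomial $P$ with $\|P - h\|_{W^{1, p}(\mu)} < \varepsilon$. Two consequences follow: (i) the 1D Sobolev embedding $W^{1, p}((-C, C)) \hookrightarrow C^0([-C, C])$, combined with the fact that $\mu$ has density bounded below on compacts, yields $\|P - h\|_{L^\infty([-C, C])} \leq c_1(C)\varepsilon$; and (ii) an elementary comparison gives $\tfrac{1}{\sqrt{2\pi}} e^{-x^2/2} \leq c_0 \cdot \tfrac{1}{2} e^{-|x|}$ globally, so $P - h$ has $L^p$-norm at most $c_0^{1/p} \varepsilon$ under a standard Gaussian. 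A similar comparison for the density of the product of two independent standard Gaussians (which has a $\log$-singularity at $0$ but decays like $e^{-|t|}/\sqrt{|t|}$ at infinity) shows $\|P(g_u g_v) - 1_{>0}(g_u g_v)\|_{L^p} = O(\varepsilon + (\eta |\log \eta|)^{1/p})$.

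Set $Q(m, t) := \tfrac{m(m+1)}{2} P(t) + \tfrac{m(m-1)}{2} P(-t)$, which Lagrange-interpolates $Q(+1, t) = P(t)$, $Q(-1, t) = P(-t)$, $Q(0, t) = 0$, and define
$$p_s(M, y) := 1 - \prod_{B \in \binom{[s]}{k}}\Big( 1 - \prod_{(u,v) \in \binom{B}{2}} Q(M_{uv}, y_u y_v) \Big).$$
This is a polynomial in $M$ and $y$, and is even in $y$ because each $y_u$ enters only through the products $y_u y_v$, which are invariant under $y \mapsto -y$. On the good region where all $|y_u y_v| \in [\delta, C]$, each $Q(M_{uv}, y_u y_v)$ is within $c_1 \varepsilon$ of $b_{uv} \in \{0, 1\}$, and the Boolean polynomial $F(\mathbf{b}) = 1 - \prod_B(1 - \prod b_{uv})$ is Lipschitz with some constant $L_{s, k}$ on any bounded neighborhood of $\{0, 1\}^{\binom{s}{2}}$; hence choosing $\varepsilon$ with $L_{s, k} c_1 \varepsilon \leq \min(\delta, (\sqrt{2}-1)/2)$ yields both \eqref{eq:pinf} and \eqref{eq:pinf2}.

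For \eqref{eq:ell2bound}, note $\|f_s\|_{L^2(\mathbf{g})}^2 \leq 1$ because $f_s \in \{0, 1\}$, so by the triangle inequality it suffices to prove $\|p_s - f_s\|_{L^2(\mathbf{g})} \leq \sqrt{2} - 1$. Expand $p_s - f_s$ as a telescoping sum over the edges (turning $\prod_e Q_e$ into $\prod_e b_e$ one edge at a time, and similarly for the outer product over $B$) and apply H\"older's inequality with exponent $p$ on each $Q_e - b_e$ factor and the conjugate exponent $q = 2p/(p-2)$ on the remaining factors, which are polynomials in $\{g_u g_v\}$ of bounded degree and thus bounded in every $L^q(\mathbf{g})$. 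This yields $\|p_s - f_s\|_{L^2(\mathbf{g})} = O(\varepsilon + (\eta |\log \eta|)^{1/p})$, with implicit constants depending only on $s$, $k$, and $\deg P$, and taking $\eta, \varepsilon$ small delivers the required bound. The main obstacle is that the single parameter $\varepsilon$ must simultaneously control the sup-norm closeness on the compact region (for the pointwise bounds) and the $L^p(\mathbf{g})$ closeness globally (so the multivariate substitution preserves smallness); this is resolved by applying Lemma \ref{lem:sobolevdense} in the strong norm $W^{1, p}(\mu)$, which yields both via 1D Sobolev embedding on compacts and the pointwise comparison between the standard normal density and $\mu$ on all of $\R$.
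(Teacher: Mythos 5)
Your proposal does not prove the statement in question. Lemma \ref{lem:sobolevdense} asserts that polynomials are dense in the weighted Sobolev space $W^{k,p}$ built from the measure $d\mu(x)=\frac12 e^{-|x|}\,dx$ of \eqref{eq:defmu}; what you have written is instead a construction of the approximating polynomial $p_s$ for the clique-domain indicator $f_s$ together with a verification of \eqref{eq:pinf}, \eqref{eq:pinf2} and \eqref{eq:ell2bound} --- that is, an argument for Lemma \ref{lem:defp}. Worse, your argument explicitly invokes Lemma \ref{lem:sobolevdense} (``By Lemma \ref{lem:sobolevdense}, for any $\varepsilon>0$ one can pick a polynomial $P$ with $\|P-h\|_{W^{1,p}(\mu)}<\varepsilon$''), so read as a proof of that lemma it is circular. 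In the paper the lemma is not proved at all: it is quoted from Rodr\'iguez (Proposition 4.2 of the cited work), where Sobolev density is deduced from the density of polynomials in $L_p(\R,\mu)$, the latter being the classical resolution of Bernstein's weighted approximation problem for the critical weight $e^{-|x|}$ (see the Koosis reference given in the paper).

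If you did intend to address the Sobolev density statement, none of the required ingredients appear in your write-up. A genuine proof must (i) establish $L^p(\mu)$ density of polynomials, which is delicate for a weight decaying only like $e^{-|x|}$ (for weights $e^{-|x|^{\alpha}}$ with $\alpha<1$ density fails, so this is the borderline case and cannot be waved through), and (ii) upgrade from $L^p$ to $W^{k,p}$: one approximates the top derivative $f^{(k)}$ in $L^p(\mu)$ by a polynomial and integrates back $k$ times, and the nontrivial point is showing that antiderivatives of functions small in $L^p(\mu)$ remain small in $L^p(\mu)$ --- a property of this specific weight that must be verified, not assumed. Your Boolean-identity construction of $p_s$ with Lagrange interpolation over $M_{uv}\in\{0,\pm1\}$ is a plausible alternative route to Lemma \ref{lem:defp}, but it answers a different question from the one posed.
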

{
  Bounding our Sobolev norm also bounds the infinity norm on intervals, as we can embed $W^{1,p}$ into the space of functions bounded on $[-C,C]$. The following is implied by (\cite{burenkov1998sobolev} 4.1 Lemma 1).
\begin{lemma}\label{lem:wpinfbound}
For $\mu$ defined in \eqref{eq:defmu} and $\|F\|_{W^{1,p}}\leq \epsilon$, there is some constant $c(C)$ such that
\[
\|F\vone_{[-C,C]}\|_{\infty}\leq c\cdot  \epsilon.
\]
 \end{lemma}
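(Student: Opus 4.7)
The plan is to reduce the weighted one-sided Sobolev estimate to the classical unweighted one-dimensional Sobolev embedding $W^{1,p}(I) \hookrightarrow L^\infty(I)$ on the bounded interval $I := [-C,C]$, which gives a finite embedding constant depending only on $C$ and $p$. The entire argument rests on one elementary fact: on $I$ the exponential weight is bounded below, $e^{-|x|} \geq e^{-C}$, so that $d\mu \geq \frac{1}{2} e^{-C}\, dx$ pointwise on $I$.

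First I would use this weight comparison to pass from the $\mu$-weighted norm on $\R$ to the unweighted Lebesgue $W^{1,p}$-norm on the interval. Restricting the integration defining $\|F\|_{W^{1,p}}$ to $I$ and using the lower bound on the density gives
\[
\int_I \bigl(|F(x)|^p + |F'(x)|^p\bigr)\, dx \;\leq\; 2e^{C}\int_{\R}\bigl(|F(x)|^p + |F'(x)|^p\bigr)\, d\mu(x) \;\leq\; 2e^{C}\epsilon^p.
\]
Thus $F|_I$ lies in the standard (unweighted) Sobolev space $W^{1,p}(I)$ with norm at most $(2e^{C})^{1/p}\epsilon$.

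Second, I would invoke the elementary one-dimensional Sobolev inequality on the bounded interval $I$. Writing $F(x) = F(y) + \int_y^x F'(t)\, dt$ for any $x,y \in I$, averaging in $y$ over $I$, and applying H\"older's inequality to each of the two terms produces an estimate of the form
\[
|F(x)| \;\leq\; c(C,p)\bigl(\|F\|_{L^p(I)} + \|F'\|_{L^p(I)}\bigr)
\]
valid uniformly in $x \in I$. Chaining this with the first step then yields $\|F\vone_{[-C,C]}\|_\infty \leq c(C,p)\,\epsilon$, which is the claim.

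There is no substantive obstacle here; the argument is bookkeeping on explicit constants, which is why a reference like \cite{burenkov1998sobolev} can be cited in one line. The only features worth flagging are that the constant $c(C,p)$ blows up as $C \to \infty$ (through the factor $e^{C/p}$ arising from the weight comparison) and as $p \to 1$ (through the H\"older step in the fundamental-theorem-of-calculus bound); since $C$ and $p$ are fixed parameters in the application, this dependence is harmless.
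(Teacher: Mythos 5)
Your argument is correct and is essentially the paper's: the paper gives no proof of this lemma, simply asserting that it follows from the classical one-dimensional embedding $W^{1,p}(I)\hookrightarrow L^\infty(I)$ on a bounded interval (Burenkov, Sec.\ 4.1, Lemma 1), and your two steps — comparing $d\mu$ from below by $\tfrac12 e^{-C}\,dx$ on $[-C,C]$ and then running the standard fundamental-theorem-of-calculus/H\"older argument — are exactly the content of that citation. One immaterial quibble: the constant does not in fact degenerate as $p\to 1$ (the H\"older factor $|I|^{1-1/p}$ tends to $1$, and $W^{1,1}\hookrightarrow L^\infty$ holds in one dimension), but this side remark does not affect the validity of the proof.
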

 }

\begin{proof}[Proof of Lemma \ref{lem:defp}]
We first approximate $f_s(M,y)$ with a differentiable version. Therefore define the function $\eta_\delta:\R\rightarrow \R$
\[
\eta_\delta(x)=\left\{\begin{array}{cc}
     0& x\leq 0 \\
     3x^2/\delta^2-2x^3/\delta^3 & x\in [0,\delta]\\
     1& x\geq \delta
\end{array}\right. .
\]
From now on, we write $r:=\binom{s}{k}$. Our differentiable approximation of $f_s(M,y)$ is 
\[
\eta_{1/2}\left(\sum_{\alpha\in[r]}\prod_{uv\in \binom\alpha2} \eta_{\delta}(M_{uv} y_uy_v)\right).
\]
The outer function receives parameter 1/2, considering we just want it to distinguish 0 from 1 or more. The inner function needs parameter $\delta$, as it takes the Gaussian input. 

We approximate $\eta_{1/2}$ and $\eta_\delta$ separately with polynomials $P$ and $Q$. We take our outer approximation first. Consider $C_2:=2\binom{s}k$. We approximate $\eta_{1/2}$ on the compact interval $[-C_2,C_2]$ with a polynomial $P$ so that $\|(P-\eta_{1/2})\vone_{[-C_2,C_2]}\|_\infty\leq \epsilon$ for some $\epsilon$ to be determined.

We denote by $d$ the degree of $P$. We choose $Q$ by having it approximate $\eta_\delta$ in $W^{1,{p}}$ for $p:=\binom k2 4d$. Using Lemma \ref{lem:sobolevdense} and Lemma \ref{lem:wpinfbound}, we can approximate $\eta_\delta$ to such accuracy that 
\begin{equation}\label{eq:Lp}
\|(Q-\eta_\delta)\vone_{[-2C,2C]}\|_\infty\leq \epsilon,~\|Q\|_{L_p}\leq 1.
\end{equation}
for some $\epsilon$ to be determined. This second inequality is possible as $\|\eta_\delta\|_{L_p}< 2^{-\frac1p}$. 

We can now show the properties of $P(\sum_{i\in [r]} Q(x_i))$. Denote by $\vone_C$ the event that $g_ug_v\in [-C,C]$ for each pair of vertices $u,v\in S$. For the infinity norm, we consider
\begin{equation}\label{eq:bigterm}
\left\|\left[\eta_{1/2}\left(\sum_{\alpha\in[r]}\prod_{uv\in \binom\alpha2} \eta_\delta(M_{uv}g_ug_v)\right)-P\left(\sum_{\alpha\in [r]}\prod_{uv\in \binom\alpha 2}Q(M_{uv}g_ug_v)\right)\right]\vone_{C}\right\|_{\infty}.
\end{equation}
By \eqref{eq:Lp}, and the fact that $\|\eta_\delta\|_\infty \leq 1$, we have that under $\vone_C$, 
$|\left(\sum_{\alpha\in [r]}\prod_{uv\in \binom \alpha 2}\eta_\delta\right)-\left(\sum_{\alpha\in [r]}\prod_{uv\in \binom \alpha 2}Q\right)|\leq r\big((1+\epsilon)^{\binom k2}-1\big)$. Using the approximation $(1+\epsilon)^{\binom{k}2}-1\leq k^2\epsilon$ for small $\epsilon$,
\begin{eqnarray*}
\eqref{eq:bigterm}&\leq &\max_{\substack{x\in [0,r]\\ |x-y|\leq \epsilon k^2 r}} |\eta_{1/2}(x)-P(y)|\\
&\leq &\max_{\substack{x\in [0,r]\\ |x-y|\leq \epsilon k^2r}} |\eta_{1/2}(y)-P(y)|+|\eta_{1/2}(x)-\eta_{1/2}(y)|\\
&\leq& \epsilon+3\epsilon k^2 r.
\end{eqnarray*}
For sufficiently small $\epsilon$, this satisfies \eqref{eq:pinf} and \eqref{eq:pinf2}. Now we will show \eqref{eq:ell2bound}. We start with
\begin{multline*}
\E\left[P\left(\sum_{\alpha\in [r]}\prod_{uv\in\binom{\alpha} 2}Q(M_{uv}g_ug_v)\right)^2\right]\leq \E\left[P\left(\sum_{\alpha\in [r]}\prod_{uv\in\binom{\alpha} 2}Q(M_{uv}g_ug_v)\right)^2 1_C\right]\\
+\E\left[P\left(\sum_{\alpha\in [r]}\prod_{uv\in\binom{\alpha} 2}Q(M_{uv}g_ug_v)\right)^2\overline 1_C\right].
\end{multline*}
For the first term on the right, we have by the infinity norm bound, 
\begin{eqnarray*}
\int_{} P\left(\sum_{\alpha\in [r]}\prod_{uv\in\binom{\alpha} 2}Q(M_{uv}g_ug_v)\right)^2 \vone_Cd\mathbf{g}&\leq&\epsilon+3\epsilon k^2 r
\end{eqnarray*}
For the other term, we use
\[
\E\left[P\left(\sum_{\alpha\in [r]}\prod_{uv\in\binom{\alpha} 2}Q(M_{uv}g_ug_v)\right)^2\overline 1_C\right]\leq \E\left[P\left(\sum_{\alpha\in [r]}\prod_{uv\in\binom{\alpha} 2}Q(M_{uv}g_ug_v)\right)^4\right]^{1/2}\pr(\overline 1_C)^{1/2}
\]
We can write
\[
P(x)^4=\sum_{m=0}^{4d} c_m x^m.
\]
This gives
\begin{eqnarray*}
\int_{\R^s} P\left(\sum_{\alpha\in [r]}\prod_{uv\in \binom\alpha 2}Q(M_{uv}g_u g_v)\right)^4d\mathbf{g}&=& \sum_{m=0}^{4d}\int_{\R^s} c_m \left(\sum_{\alpha\in [r]}\prod_{uv\in \binom\alpha 2} Q(M_{uv}g_u g_v)\right)^m d\mathbf{g}\\
&\leq& \sum_{m=0}^{4d}|c_m|r^{m-1} \sum_{\alpha\in[r]}\int_{\R^s} \prod_{uv\in\binom{\alpha} 2}|Q(M_{uv}g_u g_v)|^{m} d\mathbf{g}\\
&\leq& \binom{k}{2}^{-1}\sum_{m=0}^{4d}|c_m|r^{m-1} \sum_{\alpha\in[r]}\sum_{uv\in \binom \alpha 2}\int_{\R^s} |Q(g_ug_v)|^{m\binom k2}d\mathbf{g}.
\end{eqnarray*}
where at the last line we use the AM GM inequality and the symmetry of the measure.

Recall that the product of two i.i.d. standard normal Gaussians has the tail bound $\pr(|g_ug_v|\geq t)\leq 2e^{-t}$ (see \cite[Lemma 2.7.7]{vershynin2018high}). Therefore,
\begin{eqnarray*}
\int_{\R} |Q(g_ug_v)|^{m\binom k2}d\mathbf{g}&\leq&4\int_{-\infty}^\infty |Q(x)|^{m\binom k2} d\mu\\
&\leq &4\int_{-\infty}^\infty \max\{1,|Q(x)|^{p}\} d\mu\leq 4(1+1)
\end{eqnarray*}
by \eqref{eq:Lp}. As $\pr(\overline \vone_C)\leq 2re^{-|C|}$, for sufficiently large $C$,
\[
\left\|P\left(\sum_{\alpha\in [r]}\prod_{uv\in\binom{\alpha} 2}Q(M_{uv}g_ug_v)\right)\right\|_{L_{2}(\mathbf g)}\leq 1+\epsilon+\pr(\overline 1_C)^{1/2}\left(8\sum_m |c_m|r^m\right)\leq 2.
\]
\end{proof}

\subsection{Part III: Independence}

We use the following structural laws concerning the spectrum and eigenvectors. These structural results are a combination of \cite[Equation 4.11]{huang2015bulk}, \cite[Proposition 4.3]{bourgade2017eigenvector}, and \cite[Corollary 3.2]{erdHos2010wegner}.

\begin{lemma}\label{lem:delocalization}
{For $i$ as defined in Theorem \ref{thm:mainavg}, we consider eigenvalue $\lambda_i$ of $\tilde A$ with eigenvector $\pphi$. }
\begin{enumerate}
    \item For any fixed unit vector $w\perp \overrightarrow{1_{[n]}}$ and any $c>0$ there is a constant $\gamma>0$ such that with probability $1-O(n^{-\gamma})$,
    \begin{equation}\label{eq:lambdabound}
    \sum_{j\neq i}\frac{1}{|\lambda_j-\lambda_i|}\leq n^{1+c}
    \end{equation}
    and for every eigenvector $\pphi_j$ of $\tilde A$,
     \begin{equation}\label{eq:eigbound}
     |\langle w,\pphi_j\rangle|\leq n^{-1/2+c}.
     \end{equation}

  \item

   With probability $1-n^{-\omega(1)}$
   \begin{equation}\label{eq:infnormbound}
    \|\pphi\|_{\infty}\leq \log^{4}n/\sqrt{n}.
\end{equation}

\end{enumerate}
    
\end{lemma}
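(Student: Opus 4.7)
The plan is to decompose $\tilde A$ into a centered Wigner-type matrix plus a deterministic rank-one perturbation, and then invoke the cited results directly. Write $\tilde A = \tilde W + \mu (J - I)$, where $\mu = (p-q)/\sqrt{(p+q)n}$, $J$ is the all-ones matrix, and $\tilde W$ has mean-zero, variance-$1/n$ entries (up to a $1/n^2$ correction on the diagonal) that are bounded and satisfy the moment hypotheses needed in the Wigner/Erd\H os-R\'enyi literature. The rank-one shift $\mu J$ has a single large eigenvalue in the direction $\vec{1}/\sqrt{n}$; by Weyl's inequality all other eigenvalues of $\tilde A$ lie within $O(\mu)$ of those of $\tilde W$, and by standard rank-one interlacing the bulk index range $i \in [\epsilon n, (1-\epsilon) n]$ for $\tilde A$ corresponds to a bulk index range for $\tilde W$. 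In addition, any fixed $w \perp \vec{1}$ is orthogonal to the Perron direction, so the overlap $\langle w, \pphi_j\rangle$ is essentially the same as the overlap with the corresponding $\tilde W$-eigenvector.

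Having reduced to the centered case, each bound is a direct quote. For \eqref{eq:lambdabound}, the Wegner-type level repulsion estimate of Erd\H os-Schlein-Yau-Yin, Corollary 3.2 of \cite{erdHos2010wegner}, gives $\P(\min_{j \ne i}|\lambda_j - \lambda_i| \le \eta) \le C\eta \, n^{1+c'}$ in the bulk; combining this with eigenvalue rigidity (typical spacings of order $1/n$) and summing yields $\sum_{j \ne i} |\lambda_j - \lambda_i|^{-1} \le n^{1+c}$ off an event of probability $n^{-\gamma}$ for any $c>0$ and appropriate $\gamma=\gamma(c)>0$. For \eqref{eq:eigbound}, apply the quantum unique ergodicity statement in Proposition 4.3 of \cite{bourgade2017eigenvector} (equivalently the bulk QUE of Huang-Knowles \cite{huang2015bulk}, Eq.~4.11), which gives polynomial-in-$n$ concentration of $n \, |\langle w, \pphi_j \rangle|^2$ around $1$. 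For \eqref{eq:infnormbound}, invoke the standard isotropic/entrywise delocalization consequence of the local semicircle law, which gives $\|\pphi\|_\infty \le (\log n)^C/\sqrt{n}$ with overwhelming probability once the entries are centered and bounded.

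The main obstacle, and the only real content beyond citation, is ensuring that the signed Erd\H os-R\'enyi model falls within the hypotheses of these three theorems. Concretely this requires (a) verifying that the centered entries of $\tilde W$ satisfy the subexponential moment / variance assumptions uniformly in $p, q \in (0,1)$; (b) checking that the rank-one perturbation $\mu (J - I)$ displaces only one eigenvalue outside the bulk and leaves the $i$th eigenvector for bulk $i$ essentially unchanged when tested against vectors orthogonal to $\vec{1}$ (quantified by standard resolvent perturbation, since $\|\mu (J-I)\|_{\mathrm{op}} = O(\sqrt{n})$ is concentrated in the single $\vec{1}$ direction and contributes only $O(1/n)$ corrections in other directions); and (c) translating the bulk-index hypothesis for $\tilde A$ into one for $\tilde W$, for which it suffices to note that the shifted index set $[\epsilon n, (1-\epsilon)n]$ is itself bulk for $\tilde W$ up to the single outlier. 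With these verifications in hand, the three bounds follow by direct application of the cited theorems.
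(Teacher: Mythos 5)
Your proposal matches the paper: the paper offers no proof of this lemma at all, stating only that it is ``a combination of'' the same three references you invoke (Eq.~4.11 of Huang--Landon--Yau, Prop.~4.3 of Bourgade--Yau, and Cor.~3.2 of Erd\H{o}s--Schlein--Yau--Yin). The centering and rank-one-perturbation reduction you sketch is left entirely implicit in the paper, so your write-up is, if anything, more careful than the source (modulo the small slip that the centered entries have variance $c/n$ with $c = 1-(p-q)^2/(p+q)$ rather than exactly $1/n$, and that the bulk-index correspondence follows from interlacing rather than a Weyl bound of size $O(\mu)$).
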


 We define $\omegam$ to be the high probability event that \eqref{eq:lambdabound},\eqref{eq:eigbound},\eqref{eq:infnormbound} are true {for index $i$}, $c<1/20$ and a finite set of vectors $w$ to be determined throughout the course of the proof.  

We wish to use these delocalization results to control the change in $\bm{\varphi}(S)$ while changing $A_S$. Therefore, consider the normalized block adjacency matrix of the $G(n,p,q)$ graph
\[
\tilde A=
\left[
\begin{array}{cc}
\tilde{A}_{\overline S} &\tilde{A}_{\overline S,S}\\
\tilde{A}_{S,\overline S} & \tilde{A}_S
\end{array}\right].
\]
Fixing the rest of $\tilde A$, we can replace $\tilde{A}_S$ with $\tilde M$ to create a new adjacency matrix. Namely, given $\tilde A$, we define a function $\psi^{\tilde A}(\tilde M):\text{Mat}_{sym}(s)\rightarrow\R^{s}$, where $\psi^{\tilde A}(\tilde M)=\pphi^{\tilde M}(S)$, for $\pphi^{\tilde M}$ the $i$th eigenvector of
\[
\left[\begin{array}{cc}
\tilde{A}_{\overline S }&\tilde{A}_{\overline S,S}\\
\tilde{A}_{S,\overline S} & \tilde M
\end{array}\right].
\]
Note that $\psi^{\tilde A}(\tilde{A}_S)=\pphi(S)$. We can now decouple the dependence of $A$ and $\bm{\varphi}$. {Generally speaking, because the eigenvector is delocalized, we can bound the change in the eigenvector from a perturbation to a small submatrix.}

\begin{lemma}\label{lem:decor}
Assume $\omegam$. Then for any finite degree even polynomial $F:\R^s\rightarrow \R$ and $M\in \mathbf{M}$,

\[
|F(\sqrt{n}\psi^{\tilde A}(\tilde M))-F(\sqrt{n}\pphi(S))|= O(n^{-1+3c}).
\]
\end{lemma}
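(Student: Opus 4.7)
The plan is to expand $\pphi^{\tilde M}$ around $\pphi$ using finite-rank perturbation theory and then reduce the polynomial difference to a componentwise eigenvector estimate, using the evenness of $F$. Write $E := \tilde{M} - \tilde{A}_{S}$, viewed as a symmetric $n\times n$ matrix supported on $S\times S$, so that the perturbed matrix is $\tilde A + E$ and $\pphi^{\tilde M}$ is its $i$th eigenvector. Since $|S|=s$ is constant and each entry of $E$ is $O(1/\sqrt{n})$, we have $\|E\| \le \|E\|_F = O(1/\sqrt{n})$. By Rayleigh--Schr\"odinger perturbation, $\pphi^{\tilde M}= \pphi + \delta^{(1)} + \delta^{(\ge 2)}$, with $\delta^{(1)} = \sum_{j\neq i}\frac{\langle \pphi_j, E\pphi\rangle}{\lambda_i-\lambda_j}\pphi_j$, and the higher-order terms controlled by $\|E\|^{2}=O(1/n)$ combined with the resolvent estimate \eqref{eq:lambdabound}.

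The core step is to bound $|\delta^{(1)}(u)|$ for each $u\in S$. Apply Cauchy--Schwarz with weights $1/|\lambda_i-\lambda_j|^{1/2}$:
\[
|\delta^{(1)}(u)|^{2}
\le
\Big(\sum_{j\neq i}\tfrac{|\langle \pphi_j, E\pphi\rangle|^{2}}{|\lambda_i-\lambda_j|}\Big)
\Big(\sum_{j\neq i}\tfrac{\pphi_j(u)^{2}}{|\lambda_i-\lambda_j|}\Big),
\]
and bound each factor by pulling out a maximum in $j$ and applying \eqref{eq:lambdabound}. Three ingredients go in: (i) isotropic delocalization \eqref{eq:eigbound}, applied to the centered unit vector in the direction of $e_u-\vec{1}/n$, gives $|\pphi_j(u)|\le O(n^{-1/2+c})$ for every $j$ (the $\vec{1}/n$ component contributes only $O(1/\sqrt n)$); (ii) since $E$ is supported on $S\times S$, $|\langle \pphi_j,E\pphi\rangle|\le \|E\|\,\|\pphi_j|_S\|\,\|\pphi|_S\|$, which combined with $\|E\|=O(1/\sqrt n)$, the previous bound on $\|\pphi_j|_S\|$, and the $\ell_\infty$ delocalization \eqref{eq:infnormbound} of $\pphi$ gives $|\langle \pphi_j,E\pphi\rangle|=O(n^{-3/2+c}\,\text{polylog})$; (iii) the $\ell_{1}$ spectral sum \eqref{eq:lambdabound} then yields $|\delta^{(1)}(u)|=O(n^{-1+3c})$ (polylog absorbed into $n^{c}$). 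The same estimates, iterated, bound $|\delta^{(\ge 2)}(u)|$ below $O(1/n)$, so altogether $|\pphi^{\tilde M}(u)-\pphi(u)|=O(n^{-1+3c})$ for each $u\in S$.

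To pass to $F$, we use the evenness assumption: any even polynomial $F:\R^{s}\to\R$ can be written as a polynomial in the pairwise products $\{y_{u}y_{v}\}_{u\le v}$, so it suffices to control
\[
n\,[\pphi^{\tilde M}(u)\pphi^{\tilde M}(v)-\pphi(u)\pphi(v)]
= n\,[\pphi(u)\Delta_v + \pphi(v)\Delta_u + \Delta_u\Delta_v],
\]
where $\Delta_w:=\pphi^{\tilde M}(w)-\pphi(w)$. Inserting $|\Delta_w|=O(n^{-1+3c})$ and $|\pphi(w)|\le O(\log^{4}n/\sqrt n)$ (from \eqref{eq:infnormbound}, which also extends to $\pphi^{\tilde M}$ via the same perturbation bound), every such product-difference is of the desired order, and a standard telescoping expansion across the monomials of $F$, whose degree and coefficients are absolute constants, yields the claimed bound.

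The main obstacle is obtaining a sufficiently sharp componentwise bound on $\delta^{(1)}(u)$: the only available control on the denominators is the $\ell_{1}$-type bound \eqref{eq:lambdabound}, so individual gaps $1/|\lambda_i-\lambda_j|$ may be very large. The estimate therefore has to trade $\ell_{1}$ control of the gaps against the smallness of $\langle \pphi_j,E\pphi\rangle$ coming from both the rank-$s$ localization of $E$ on the small set $S$ and the isotropic delocalization of $\pphi_j$ and $\pphi$, with Cauchy--Schwarz performing the balancing act without ever invoking an $\ell_{\infty}$ spectral-gap bound.
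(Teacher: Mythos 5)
Your proposal is, at its core, the same computation as the paper's: both rest on a perturbation expansion of the $i$th eigenvector in the rank-$s$ perturbation $E=\tilde M-\tilde A_S$, with each order controlled by trading the $\ell_1$ spectral sum \eqref{eq:lambdabound} against the smallness of matrix elements $\langle \pphi_j, E\,\pphi_\ell\rangle$ coming from isotropic delocalization \eqref{eq:eigbound} on the constant-size set $S$. The paper organizes this around the Cauchy integral formula for $\langle w,\pphi\rangle^2$ and a full Taylor expansion in the entries of $M$, which has the advantage of only ever manipulating sign-invariant quantities $\pphi(u)\pphi(v)$ (this is also where the evenness of $F$ enters for them); you instead do first-order Rayleigh--Schr\"odinger on the eigenvector itself and push the evenness to the final step. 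Your first-order estimate is correct and in fact does not even need Cauchy--Schwarz: pulling out $\max_j |\langle\pphi_j,E\pphi\rangle|\,|\pphi_j(u)| = O(n^{-2+2c}\log^4 n)$ and applying \eqref{eq:lambdabound} already gives $|\delta^{(1)}(u)|=O(n^{-1+3c})$. (You do need a phase convention such as $\langle\pphi^{\tilde M},\pphi\rangle>0$ for the entrywise statement to make sense, but evenness of $F$ makes this harmless.)

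The genuine gap is your treatment of the higher-order terms. You assert they are ``controlled by $\|E\|^2=O(1/n)$ combined with the resolvent estimate,'' but this mechanism fails: \eqref{eq:lambdabound} only guarantees individual gaps $|\lambda_j-\lambda_i|\ge n^{-1-c}$, which can be far smaller than $\|E\|=O(n^{-1/2})$, so the Rayleigh--Schr\"odinger series is not perturbative in the naive sense and the crude bound $\|E\|^2\bigl(\sum_j|\lambda_j-\lambda_i|^{-1}\bigr)^2=O(n^{1+2c})$ diverges. What actually saves the $k$th-order term is that \emph{every} intermediate matrix element satisfies $|\langle\pphi_j,E\,\pphi_\ell\rangle|=O(n^{-3/2+2c})$ by delocalization on $S$, so each order gains a factor $n^{-3/2+2c}\cdot n^{1+c}=n^{-1/2+3c}$ and the series is geometric for $c<1/6$; this is exactly the content of the paper's derivative bound $k(k!)2^k(\log n)^{8k}n^{-1+2c+kc}$ and the subsequent summation, and it needs to be carried out (or at least stated) explicitly rather than attributed to $\|E\|^2$ alone. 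Separately, your own final display gives away a weaker exponent than claimed: $n\,\pphi(u)\Delta_v$ with $|\pphi(u)|\le \log^4 n/\sqrt n$ and $|\Delta_v|=O(n^{-1+3c})$ is $O(n^{-1/2+3c})$, not $O(n^{-1+3c})$, so the assertion that ``every such product-difference is of the desired order'' is false as written. (The paper's own final step has the same $\sqrt n$ looseness --- it inserts $n^{k/2}$ where the scaling of the product variable $n\pphi(u)\pphi(v)$ would give $n^k$ --- and since Lemma \ref{lem:partcheb} only needs some negative power of $n$, neither version damages the downstream argument; but you should state the exponent you actually obtain.)
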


\begin{proof}

We track the change in $\bm{\varphi}(u)\bm{\varphi}(v)$ for $u,v\in S$. Therefore we consider the function $\psi^{\tilde A}_{uv}(x):\R^{\binom{s}2}\rightarrow \R$ defined as
\[
\psi^{\tilde A}_{uv}(M):=[\psi^{\tilde A}(M)](u)[\psi^{\tilde A}(M)](v).
\]
For $M,M'\in \mathbf{M}$, $\|\tilde M'-\tilde M\|_{1\rightarrow \infty}\leq \frac{2}{\sqrt{(p+q)n}}$. 
Therefore taking the Taylor expansion at $x=A_S$,
\begin{multline}\label{eq:entexpansion}
|\psi^{\tilde A}_{uv}(M)-\pphi(u)\pphi(v)|\leq \sum_{k=1}^\infty \bigg(\frac{2}{\sqrt{(p+q)n}}\bigg)^{k}\sum_{x_1,\ldots x_k\in[\binom s2]}\frac{1}{k_1!k_2!\cdots k_{\binom{s}{2}}!}\left| \left[\frac{\partial^k}{\partial x_1,\ldots x_k}\psi^{\tilde A}_{uv}\right]( A_S)\right|.
\end{multline}
where $k_m$ is the number of times the $m$th edge is chosen.

In order to calculate the partial derivative, we proceed as per \cite[Section 4]{bourgade2017eigenvector}. Define $V_{x}$ to be the matrix with $1$'s in the off-diagonal coordinates corresponding to $x$, but $0$'s elsewhere. Moreover, we denote the Green's function by $G(z):=(A-z)^{-1}$. For a vector $w$, taking a contour integral around only the $i$th eigenvalue gives, by the Cauchy residue formula,
\[
\langle w,\varphi\rangle^2=\frac{1}{2\pi i}\oint \langle w,G(z)w\rangle dz.
\]
Therefore, by the Cauchy residue formula again,
\begin{eqnarray*}\frac{\partial^k}{\partial x_1,\ldots x_k}\langle w,\pphi\rangle ^2&=&\frac{(-1)^{k}k!}{2\pi i}\oint \langle w,\prod_{\ell\in [k]}(G(z)V_{x_\ell})G(z)w\rangle dz\\
&=&k(-1)^{k}k!\sum_{j\in ([n]\backslash \{i\})^k}\frac{\langle w,\bm{\varphi}_{j_1}\rangle\langle w,\bm{\varphi}\rangle(\pphi_{j_k}^TV_{x_k}\pphi)\prod_{\ell\in [k-1]} (\bm{\varphi}_{j_\ell}^TV_{x_\ell}\bm{\varphi}_{j_{\ell+1}})}{\prod_{\ell\in [k]}(\lambda_{j_\ell}-\lambda_i)}.
\end{eqnarray*}
Therefore, assuming $\omegam$, we have that for unit $w\perp\overrightarrow{1_{[n]}}$ and ignoring a coefficient only depending on $k$, 
\begin{eqnarray*}
|\partial_{ab}^{(k)}\langle w,\bm{\varphi}\rangle ^2|&\leq &k(k!)\left|\sum_{j\in ([n]\backslash \{i\})^k}\frac{\langle w,\bm{\varphi}_{j_1}\rangle\langle w,\bm{\varphi}\rangle(\pphi_{j_k}^TV\pphi)\prod_{\ell\in [k-1]} (\bm{\varphi}_{j_\ell}^TV\bm{\varphi}_{j_{\ell+1}})}{\prod_{\ell\in [k]}(\lambda_{j_\ell}-\lambda_i)}.\right|\\
&\leq &k(k!)2^k(\log n)^{8k}n^{-k-1+2c}\left(\sum_{j\in[n]\backslash \{i\}}\frac{1}{|\lambda_j-\lambda_i|}\right)^{k}\\
&\leq&k(k!)2^k(\log n)^{8k}n^{-1+2c+kc}.
\end{eqnarray*}
 Here we use the three assumptions of $\omegam$. Notice that it is key that $c<1/2$, which means we cannot consider all bulk eigenvectors at once. In order to use this on $\pphi(u)\pphi(v)$, we write $\pphi(u)\pphi(v)=\frac12\left(\langle \overrightarrow{1_u}+\overrightarrow{1_v},\pphi\rangle^2-\langle \overrightarrow{1_u},\pphi\rangle^2-\langle \overrightarrow{1_v},\pphi\rangle^2\right)$. Therefore we orthogonalize each of these to $\overrightarrow{1_{[n]}}$ with the three vectors
\begin{eqnarray*}
w_1&=&\frac{1}{\sqrt{2+\frac{4}{n-2}}}(\overrightarrow{1_u}+\overrightarrow{1_v}-\frac{2}{n-2}\overrightarrow{1}_{[n]\backslash \{u,v\}})\\
w_2&=&\frac{1}{\sqrt{1+\frac{1}{n-1}}}(\overrightarrow{1_u}-\frac1{{{n-1}}}\overrightarrow{1}_{[n]\backslash \{u\}})\\
w_3&=&\frac{1}{\sqrt{1+\frac{1}{n-1}}}(\overrightarrow{1_v}-\frac1{{{n-1}}}\overrightarrow{1}_{[n]\backslash \{v\}}).
\end{eqnarray*}
We denote the target vectors $\frac1{\sqrt{2}}(\overrightarrow{1_u}+\overrightarrow{1_v}),\overrightarrow{1_u},\overrightarrow{1_v}$ by $v_1,v_2,v_3$. We have for $i=1,2,3$,
\begin{eqnarray*}
\langle v_i,\pphi\rangle^2-\langle w_i,\pphi\rangle^2&=&\langle v_i+w_i ,\pphi\rangle\cdot \langle v_i-w_i ,\pphi\rangle\\
&=&\langle (v_i-w_i)+2w_i,\pphi\rangle\cdot\langle v_i-w_i ,\pphi\rangle.
\end{eqnarray*}
Therefore, if we define $y_i(\pphi)=\langle (v_i-w_i)+2w_i,\pphi\rangle\cdot\langle(v_i-w_i ),\pphi\rangle$, then 
\begin{equation}\label{eq:quaddecomp}
\pphi(u)\pphi(v)=\frac{1}{2}\left(2\langle w_1,\pphi\rangle^2-\langle w_2,\pphi\rangle^2-\langle w_3,\pphi\rangle^2\right)+\frac12(2y_1(\pphi)-y_2(\pphi)-y_3(\pphi)).
\end{equation}

Similar to before, we can write,
\begin{equation*}
k(k!)|\partial_{ab}^{(k)}y_i(\pphi)|=\left|\sum_{j\in ([n]\backslash \{i\})^k}\frac{\langle v_i-w_i ,\bm{\varphi}_{j_1}\rangle\langle (v_i-w_i)+2w_i,\bm{\varphi}\rangle(\pphi_{j_k}^TV\pphi)\prod_{\ell\in [k-1]} (\bm{\varphi}_{j_\ell}^TV\bm{\varphi}_{j_{\ell+1}})}{\prod_{\ell\in [k]}(\lambda_{j_\ell}-\lambda_i)}\right|.
\end{equation*}
Under the assumption of $\omegam$, and the fact that $\|v_i-w_i\|=O(n^{-1/2}) $,
\[\langle (v_i-w_i)+2w_i,\pphi\rangle\cdot \langle(v_i-w_i),\pphi_{j_1}\rangle=O(n^{-1+c}).\]
Therefore, by the same argument as before,
\[
|\partial_{ab}^{(k)}y_i(\pphi)|\leq k(k!)2^k(\log n)^{8k}n^{-1+2c+kc}.
\]

By \eqref{eq:quaddecomp} and the previous derivative bounds,
\begin{eqnarray*}
|\psi^{\tilde A}_{uv}(\tilde M)-\pphi(u)\pphi(v)|&\leq &4\sum_{k=1}^\infty (\frac{2}{\sqrt{(p+q)n}})^{k}\sum_{x_1,\ldots, x_k\in[\binom s2]}\frac{1}{k_1!k_2!\cdots k_{\binom{s}{2}}!}\left| \left[\frac{\partial^k}{\partial x_1,\ldots x_k}\psi^{\tilde A}_{uv}\right](\tilde A_S)\right|\\
&\leq&4\binom s2 n^{-1+2c} \sum_{k=1}^\infty (\frac{2}{\sqrt{(p+q)n}})^{k}k2^k(\log n)^{8k} n^{ck}\sum_{x_1,\ldots ,x_k\in[\binom s2]}\frac{k!}{k_1!k_2!\cdots k_{\binom{s}{2}}!}\\
&\leq&4\binom s2 n^{-1+2c} \sum_{k=1}^\infty (\frac{2}{\sqrt{(p+q)n}})^{k}k2^k(\log n)^{8k} n^{ck}{\binom s2}^k
\end{eqnarray*}
as we are summing over all multinomials. 
Therefore, we have
\[
|\psi^{\tilde A}_{uv}(\tilde M)-\pphi(u)\pphi(v)|=O(n^{-3/2+3c})
\]

As $F$ is a polynomial, assuming $\omegam$, $|(\partial^{k}F)( A_S,\sqrt n\pphi(S)))|\leq \log^C n$ for some fixed constant $C$ and any direction. Therefore, assuming $\omegam$, we can once again expand according to the partial derivatives in the direction $\psi_{uv}$.
\begin{align*}
\left|F( A_S,\sqrt{n}\psi^{\tilde A}(\tilde M))-F( A_S,\sqrt{n}\pphi(S))\right|
&\leq \sum_{k=1}^{\deg(F)} \frac{\left|\partial^{(k)}(F( A_S,\sqrt{n}\pphi(S)))\right|n^{k/2} |\psi^{\tilde A}_{uv}(\tilde M)-\pphi(u)\pphi(v)|^k}{k!}\\ &=O(n^{-1+3c}).
\end{align*}
\end{proof}

\subsection{Part IV: Quantum Unique Ergodicity}
 In this section, we utilize the powerful quantum ergodicity theorem to reduce our problem to one on polynomials of Gaussians. To do this, we introduce some new notation. 
\begin{definition}We consider the following different expectations. 
  \begin{enumerate}
\item 
$\E_A$ refers to the expectation across the choice of the random adjacency matrix $A$.
\item
$\E_S$ refers to the {conditional expectation given a matrix $A$} over a randomly selected $s\times s$ principal submatrix of $A$, for $s$ constant.
\item
$\E_M$ refers to the {conditional} expectation given a matrix $A$ and a subset of vertices $S\subset V$, over replacing $A_S$ with $M\in \mathbf M$, according to the distribution of $\mathbf M$.
\item
${\E_{\mathbf g}}$ refers to the expectation over some function of $\mathbf g$. 
\end{enumerate}
\end{definition}

Here, we present a reduced form of the  quantum unique ergodicity theorem for eigenvectors.

{
\begin{lemma}\label{lem:qe}[\cite{bourgade2017eigenvector} Theorem 1.5]
Fix $\epsilon>0$ and a finite degree polynomial $F:\R\rightarrow \R$. There is a constant $\nu>0$ such that for any unit vector $w\perp \overrightarrow{1_{[n]}}$ and eigenvector $\pphi$ of $A$ of index $i\in [\epsilon n,(1-\epsilon) n]$, 
\[
|\E_{A}[F(n\langle \bm{\varphi},w\rangle^2)]-\E_{\mathbf g}(F( \mathbf{g}^2))|=O(n^{-\nu})
\]
where $\mathbf g^2$ is a vector of squared normalized independent Gaussians.
\end{lemma}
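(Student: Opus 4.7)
The plan is to prove this as a special case of the general quantum unique ergodicity (QUE) result of Bourgade--Yau for Wigner-type matrices, via the eigenvector moment flow together with a Green's function comparison. The strategy has three stages.

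First I would establish the isotropic local semicircle law for $\tilde A$: for any deterministic unit vector $w$ and any spectral parameter $z=E+i\eta$ with $\eta \ge n^{-1+\delta}$, one shows
\[
\bigl|\langle w, (\tilde A - z)^{-1} w\rangle - m_{sc}(z)\bigr| \le n^{\delta}\left(\sqrt{\tfrac{\operatorname{Im} m_{sc}(z)}{n\eta}} + \tfrac{1}{n\eta}\right)
\]
with overwhelming probability. Since $p,q$ are fixed constants, the entries of $\tilde A$ have variance $1/n$, uniformly bounded fourth moments, and satisfy the sub-exponential tail condition required by the isotropic local law of Erdős--Knowles--Yau--Yin; the only mild care needed is to subtract the rank-one mean $\tfrac{p-q}{\sqrt{(p+q)n}} \vec 1 \vec 1^T$, which splits off the top eigenvector (whose eigenvector is close to $\vec 1$) and leaves a mean-zero Wigner-type residual. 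The orthogonality condition $w\perp \vec 1$ in the statement is precisely what kills this rank-one term. As a byproduct this yields eigenvalue rigidity and complete delocalization in the bulk.

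Next, I would set up the Dyson Brownian Motion interpolation: define $\tilde A(t)=\tilde A+\sqrt{t}\,W$ with $W$ an independent GOE matrix, and track the overlaps $z_i(t) := \sqrt n\,\langle \bm{\varphi}_i(t), w\rangle$. By Bourgade--Yau, the moments of the $z_i(t)^2$ satisfy the eigenvector moment flow, a parabolic system on a space of integer-valued configurations whose invariant measure is exactly the Gaussian product. Using the maximum principle together with finite-speed-of-propagation estimates, one shows that for $t_\star = n^{-1+\delta}$, and for any bulk index $i\in[\epsilon n,(1-\epsilon)n]$,
\[
\bigl|\E[F(z_i(t_\star)^2)] - \E_{\mathbf g}[F(g^2)]\bigr| \le n^{-\nu'}
\]
for some $\nu'>0$ depending on $\epsilon$ and $\deg F$; the input here is the rigidity supplied by Step 1, which controls the coefficients of the moment flow.

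Finally, I would carry out Green's function comparison to transfer from $t = t_\star$ back to $t=0$. The key input is that the random matrix $\tilde A$ and the Gaussian-divisible matrix $\tilde A(t_\star)$ have matching first three moments up to error $O(n^{-3/2-\delta})$ per entry, so a Lindeberg-type swap through the resolvent representation
\[
\langle w, \bm{\varphi}_i\rangle^2 = \frac{1}{2\pi i}\oint_{\Gamma_i} \langle w, G(z) w\rangle\, dz,
\]
combined with a smooth cutoff that encodes the contour around $\lambda_i$ (and a polynomial functional calculus to handle $F$), yields
\[
\bigl|\E[F(n\langle \bm{\varphi}_i(0), w\rangle^2)] - \E[F(z_i(t_\star)^2)]\bigr| \le n^{-\nu''},
\]
and the lemma follows by the triangle inequality with $\nu = \min(\nu',\nu'')$.

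The main obstacle will be Step 2: although the eigenvector moment flow machinery is by now standard, the uniformity in $w$ and the quantitative relaxation rate in $\nu'$ require a careful bookkeeping of the level-repulsion/rigidity estimates and of the regularity of $F$, and these are the steps where the constant-mass-at-zero in the Bernoulli-type distribution of $\tilde A_{uv}$ could conceivably degrade the exponents. A secondary but routine issue is that Step 3 needs four-moment matching only to the precision dictated by the test function $F$; since $F$ is a fixed polynomial, a finite Taylor expansion of the resolvent identity suffices and no reverse heat flow is needed.
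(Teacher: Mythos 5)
The paper does not prove this statement at all: it is quoted verbatim as an external result, namely Theorem 1.5 of Bourgade--Yau's eigenvector moment flow paper, and is used as a black box in Section 5. Your three-stage outline (isotropic local law with the rank-one mean $\frac{p-q}{\sqrt{(p+q)n}}\vec{1}\vec{1}^T$ split off, relaxation under the eigenvector moment flow to the Gaussian fixed point, then Green's function comparison back to $t=0$) is exactly the architecture of the cited proof, and you correctly identify that the hypothesis $w\perp \overrightarrow{1_{[n]}}$ is what neutralizes the nonzero mean of the signed Erd\H{o}s--R\'enyi entries. So there is no divergence of approach to report; the only caveat is that what you have written is a roadmap through three substantial theorems rather than a self-contained argument, and each deferred step (the isotropic law for sparse-free Bernoulli-type entries with an atom at zero, the quantitative relaxation rate $\nu'$, and the moment-matching precision needed for a fixed polynomial $F$) would need to be checked against the hypotheses of the results you invoke --- which is precisely why the paper cites rather than reproves the lemma. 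One minor slip: after centering, the variance of $\tilde A_{uv}$ is $\bigl(1-\tfrac{(p-q)^2}{p+q}\bigr)/n$ rather than exactly $1/n$ (the paper's normalization fixes the second moment, not the variance), so a harmless rescaling is needed before applying the Wigner-normalized machinery.
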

}

The following is an application of this lemma as vectors supported on a small subset $S$ are close to orthogonal to $\overrightarrow{1_{[n]}}$.
\begin{lemma}\label{lem:qeconcentration}[\cite{huang2020size} Lemma 2.1]
For any finite degree even polynomial $F:\R^s\rightarrow \R$, and finite set of vertices $S$, there is a constant $\nu_1>0$ such that 
\begin{equation}\label{eq:finiteqe}
|\E_{A}(F(\sqrt{n}\pphi(S)))-\E_{\mathbf g}(F( \mathbf{g}))|=O(n^{-\nu_1}).
\end{equation}
\end{lemma}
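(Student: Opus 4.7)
The strategy is to reduce this multivariate distributional comparison to the one-dimensional quantum unique ergodicity statement of Lemma \ref{lem:qe} (more precisely, to its multidimensional extension in \cite{bourgade2017eigenvector}), via an orthogonalization step that rewrites the coordinates $\sqrt n\,\pphi(u)$ in terms of inner products of $\pphi$ against fixed unit vectors perpendicular to $\overrightarrow{1_{[n]}}$.

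First, for each $u\in S$ set $w_u := \overrightarrow{1_u} - n^{-1}\overrightarrow{1_{[n]}}$, which is perpendicular to $\overrightarrow{1_{[n]}}$, has $\|w_u\|^2 = 1 - n^{-1}$, and satisfies $\langle w_u, w_v\rangle = -n^{-1}$ for $u \ne v$. Applying Gram--Schmidt produces an orthonormal family $\{\hat w_u\}_{u\in S} \subset \overrightarrow{1_{[n]}}^{\perp}$ with $\|\hat w_u - w_u\| = O(n^{-1})$, since the $w_u$ are already nearly orthonormal and $|S|$ is fixed. I would include $\{\hat w_u\}_{u\in S}$ and $\{\hat w_u+\hat w_v\}_{u,v\in S}$ in the finite set of test vectors for which \eqref{eq:eigbound} is enforced on $\omegam$. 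Combined with the standard concentration of the top eigenvector of $\tilde A$ around $\overrightarrow{1_{[n]}}/\sqrt n$, which, via orthogonality of $\pphi$ to the top eigenvector, gives $|\langle \pphi, \overrightarrow{1_{[n]}}/\sqrt n\rangle| = O(n^{-1/2+c})$ and hence $|\langle \pphi, \overrightarrow{1_{[n]}}\rangle| = O(n^{c})$, this yields on $\omegam$
\[\big|\sqrt n\,\pphi(u) - \sqrt n\,\langle \pphi, \hat w_u\rangle\big| = \sqrt n \,\big|\langle \pphi, \overrightarrow{1_u} - \hat w_u\rangle\big| = O(n^{-1/2+c}), \quad u \in S.\]

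Next, since $F$ is a fixed finite-degree polynomial and $|\sqrt n\,\pphi(u)| \le \log^4 n$ on $\omegam$ by \eqref{eq:infnormbound}, this coordinate-wise closeness upgrades to
\[\big| F(\sqrt n\, \pphi(S)) - F\big((\sqrt n\,\langle \pphi, \hat w_u\rangle)_{u\in S}\big) \big| = O(n^{-1/2+c} \log^C n)\]
on $\omegam$ for some fixed $C$ depending only on the degree of $F$. The off-event contribution to $\E_A$ is bounded by Cauchy--Schwarz together with the trivial estimate $\|\sqrt n\,\pphi(S)\|_2 \le \sqrt n$ against $\pr(\omegam^c) = O(n^{-\gamma})$, which is polynomially small after taking $\gamma$ sufficiently large relative to the degree of $F$.

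Finally, I would apply the multidimensional form of quantum unique ergodicity --- the general version of \cite{bourgade2017eigenvector} Theorem 1.5, of which Lemma \ref{lem:qe} is the one-dimensional special case --- to the fixed orthonormal family $\{\hat w_u\}_{u\in S} \subset \overrightarrow{1_{[n]}}^{\perp}$, giving
\[ \big| \E_A\big[F\big((\sqrt n\,\langle \pphi, \hat w_u\rangle)_{u\in S}\big)\big] - \E_{\mathbf g}[F(\mathbf g)] \big| = O(n^{-\nu}) \]
for some $\nu > 0$ depending only on $F, p, q, \epsilon$. The lemma follows by combining these three estimates with $\nu_1 = \min(\nu, \gamma/3, 1/2-c)$. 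The main obstacle I anticipate is that Lemma \ref{lem:qe} as stated above is univariate, whereas genuine joint control of the vector $(\sqrt n\,\langle \pphi, \hat w_u\rangle)_{u\in S}$ is needed; if only the univariate version is cleanly at hand, this can be recovered by polarization, writing $F$ as a polynomial in squared inner products $n\langle \pphi, \sum_u c_u\hat w_u\rangle^2$ for various $c\in\R^{|S|}$ via the identity $\langle \pphi, \hat w_u\rangle\langle \pphi, \hat w_v\rangle = \tfrac12\big(\langle \pphi, \hat w_u + \hat w_v\rangle^2 - \langle \pphi, \hat w_u\rangle^2 - \langle \pphi, \hat w_v\rangle^2\big)$ and iteration, then applying Lemma \ref{lem:qe} to each univariate term. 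By rotational invariance of $\mathbf g$, the Gaussian side polarizes identically, so the term-by-term comparison yields the multidimensional conclusion with the same rate.
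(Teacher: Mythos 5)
Your derivation is correct and matches the route the paper intends: the paper itself does not prove this lemma but quotes it from \cite{huang2020size} with the one-line remark that vectors supported on a fixed small $S$ are nearly orthogonal to $\overrightarrow{1_{[n]}}$, and the orthogonalization-plus-polarization reduction to Lemma \ref{lem:qe} you describe is exactly the mechanism there (and is the same trick the paper uses explicitly with its vectors $w_1,w_2,w_3$ in the proof of Lemma \ref{lem:decor}). The only point to tighten is the off-event bound: $\gamma$ in Lemma \ref{lem:delocalization} is not a free parameter you can take large, so instead split $\omegam^c$ into the super-polynomially unlikely failure of \eqref{eq:infnormbound} (where the trivial bound $\|\sqrt n\,\pphi\|_\infty\le\sqrt n$ suffices) and the remaining $O(n^{-\gamma})$-probability event on which $F$ is already $\mathrm{poly}(\log n)$.
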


Here, $F$ does not take $A_S$ as an input. However, because of the near independence of $A_S$ and $\pphi(S)$ in $p_s$ shown in Lemma \ref{lem:decor}, we can translate this into a similar bound on $p_s({A}_S,\pphi(S))$.

\begin{lemma}\label{lem:partcheb}
   For any $C,\delta>0$ we consider $p_s$ defined in Lemma \ref{lem:defp} with parameters $C,\delta$. {For $i$ as previously defined}, there is a constant $\nu>0$ such that with probability $1-O(n^{-\nu})$, for $m\in \{1,2\}$, 
    \[
    |\E_S(p_s({A}_S,\pphi(S))^m)-\E_{\mathbf g}(\E_M(p_s({M},\mathbf{g})^m))|=O(n^{-\nu}).
    \]

\end{lemma}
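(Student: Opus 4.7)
The plan is to reduce the statement to the quantum ergodicity estimate of Lemma~\ref{lem:qeconcentration} by first decoupling $A_S$ from $\bm{\varphi}(S)$ using Lemma~\ref{lem:decor}, and then to upgrade the resulting mean estimate to a high-probability one via a second-moment/Chebyshev argument over the randomness of $A$.

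For the decoupling, I would apply Lemma~\ref{lem:decor} to the polynomial $y\mapsto p_s(M,y)^m$ for each fixed $M\in\mathbf{M}$. This is even in $y$ by Lemma~\ref{lem:defp} and of bounded degree with coefficients uniformly bounded in $M$ (since $p_s$ is a fixed polynomial and $M$ has entries in $\{0,\pm 1\}$), so under $\omegam$ and uniformly in $M$,
\[
p_s(M, \sqrt n\,\psi^{\tilde A}(\tilde M))^m = p_s(M, \sqrt n\,\bm{\varphi}(S))^m + O(n^{-1+3c}).
\]
Setting $P(y) := \E_M[p_s(M,y)^m]$, which is an even polynomial in $y$, and using that $A_S$ conditional on $(A_{\overline S}, A_{\overline S, S})$ has distribution $\mathbf{M}$ while $\psi^{\tilde A}(\tilde M)$ depends only on $M$ and $(A_{\overline S}, A_{\overline S, S})$, averaging the above over $M\sim\mathbf{M}$ yields
\[
\E_{A_S}\!\big[p_s(A_S, \sqrt n\,\bm{\varphi}(S))^m \,\big|\, A_{\overline S}, A_{\overline S, S}\big] = P(\sqrt n\,\bm{\varphi}(S)) + O(n^{-1+3c})
\]
on $\omegam$. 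Now Lemma~\ref{lem:qeconcentration} applied to $P$ gives $\E_A[P(\sqrt n\,\bm{\varphi}(S))] = T + O(n^{-\nu_1})$, where $T := \E_{\mathbf g}\E_M[p_s(M,\mathbf g)^m]$. Pairing this with the $L_2$ bound~\eqref{eq:ell2bound} and the probability bound $\Pr(\overline{\omegam}) = O(n^{-\gamma})$ via Cauchy--Schwarz, I would deduce $\E_A[p_s(A_S,\sqrt n\,\bm{\varphi}(S))^m] = T + O(n^{-\nu'})$ for some $\nu'>0$; by symmetry of $G(n,p,q)$ this value is independent of $S$, so $\E_A\E_S[\,\cdots\,] = T + O(n^{-\nu'})$ as well.

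For the high-probability statement, I would bound $\mathrm{Var}_A[\E_S[p_s(A_S,\sqrt n\,\bm{\varphi}(S))^m]]$ by applying the same strategy to the product polynomial $p_s(M,y)^m\,p_s(M',y')^m$ in the $2s$ Gaussian variables. For disjoint pairs $(S,S')$, this gives an expected product of $T^2 + O(n^{-\nu'})$; overlapping pairs comprise only an $O(s^2/n)$ fraction and are controlled by a uniform bound on the $2m$th moment of $p_s$ (obtained by repeating the argument with $m$ replaced by $2m$). The resulting estimate $\mathrm{Var}_A[\E_S[\,\cdots\,]] = O(n^{-\nu''})$ combined with Chebyshev's inequality yields the lemma with $\nu$ a sufficiently small positive constant.

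The main obstacle is the treatment of the exceptional event $\overline{\omegam}$: since Lemma~\ref{lem:defp} controls $p_s^m$ only in $L_2$ under the Gaussian measure and not in $L_\infty$, the Cauchy--Schwarz estimates used to absorb contributions from $\overline{\omegam}$ must be paired with higher-moment bounds on $p_s$ evaluated at the true eigenvector of $A$; these in turn require running the argument once more with $m$ doubled. A secondary technical point is verifying that Lemma~\ref{lem:decor} applies to the concatenated vector $\bm{\varphi}(S\sqcup S')$ needed in the variance step, which should follow from the same derivative estimates since $|S\sqcup S'|\leq 2s$ remains $O(1)$.
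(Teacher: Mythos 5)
Your proposal is correct and follows essentially the same route as the paper: decoupling $A_S$ from $\bm{\varphi}(S)$ via Lemma \ref{lem:decor}, passing to Gaussians via Lemma \ref{lem:qeconcentration}, a second-moment/Chebyshev argument over $A$ that splits pairs $(S,S')$ into disjoint and overlapping ones, and Cauchy--Schwarz with the $L_2$ bound \eqref{eq:ell2bound} to absorb the exceptional event $\overline{\omegam}$. The one point where you are more explicit than the paper is the variance step: you correctly observe that the factorization for disjoint pairs requires applying quantum ergodicity to the $2s$-variable product polynomial on $\bm{\varphi}(S\sqcup S')$, whereas the paper's passage to \eqref{eq:secondmoment} moves from $\E_A\big[(\E_S\E_M p_s)^2\big]$ to $\big(\E_A[\E_S\E_M p_s]\big)^2$ without spelling this out.
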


\begin{proof}[Proof of Lemma \ref{lem:partcheb}]
Assume that $m=1$. The proof for $m=2$ is identical. We claim that in fact it is enough to show that there is some constant $\nu'$ such that with probability $1-O(n^{-\nu'})$,
\begin{equation}\label{eq:partcheb}
\E_S(p_s({A}_S,\pphi(S)))=\E_A\left[\E_M(p_S(M,\pphi(S)))\right]+O(n^{-\nu'}).
\end{equation}

To see why this is sufficient, assuming \eqref{eq:partcheb}, then by Lemma \ref{lem:qeconcentration} with $F=\E_M(p_s({M},\sqrt{n}\pphi(S)))$, with probability $1-O(n^{-\nu'\wedge \nu_1})$, 
\[
\E_S(p_s({A}_S,\pphi(S)))=\E_{\mathbf g}(\E_M(p_s({M},\mathbf{g})))+O(n^{-\nu'\wedge\nu_1})
\]
as desired.

 We will prove \eqref{eq:partcheb} through a Chebyshev inequality. Therefore we need to calculate the expectation and variance. First, for $M\in \mathbf M$, we define
\begin{equation}\label{eq:onepart}
X_M:=\E_S(p_s({A}_S,\pphi(S))\vone(A_S=M)).
\end{equation}

By Lemma \ref{lem:decor}, under $\omegam$, $p_s({A}_S,\pphi(S))=\E_{M_1}(p_s( A_S,\psi^{\tilde A}( M_1)))+O(n^{-1+3c})$, for $M_1$ distributed according to $\mathbf M$. Therefore,
\begin{eqnarray*}
\E_A\left(X_M\vone_{\omegam}\right)&=&\E_A(\left[\E_S(p_s(A_S,\pphi(S))1(A_S=M))\right]\vone_{\omegam})\\
&=&\pr(A_S=M)\left(\E_A\left(\E_S(\E_{M_1}(p_s(A_S,\psi^{\tilde A}( M_1)))\vone_{\omegam}\right)+O(n^{-1+3c})\right)\\
&=&\pr(A_S=M)\left(\E_A\left(\E_S(\E_{M_1}(p_s(M_1,\pphi(S))))\vone_{\omegam}\right)+O(n^{-1+3c})\right)\\
&=&\pr(A_S=M)\left(\E_A\left(\E_{M_1}(p_s(M_1,\pphi(S)))\vone_{\omegam}\right)+O(n^{-1+3c})\right)
\end{eqnarray*}
for any fixed $S$. Therefore, summing over all $M$ gives
\begin{equation}\label{eq:expectation}
\sum_{M\in \mathbf M}\E_A\left(X_M\vone_{\omegam}\right)=\E_A\left(\E_{M_1}(p_s(M_1,\pphi(S)))\vone_{\omegam}\right)+O(n^{-1+3c}).
\end{equation}

Next we will deal with the second moment. We once again shift $p_s$ according to Lemma \ref{lem:decor},
\begin{eqnarray*}
\E_A((\sum_{M\in \mathbf M}X_M)^2\vone_{\omegam})
&=&\sum_{M,M'\in \mathbf M}\E_A(\E_S[p_s({M},\pphi(S))1(A_S=M)]\E_S[p_s({M'},\pphi(S))1(A_S=M')]\vone_{\omegam})\\
\hspace*{\fill}&=&\sum_{M,M'\in \mathbf M}\E_A\left(\frac{1}{\binom{n}s^2}\sum_{S_1,S_2\in \binom{[n]}{s}}(p_s(M,\pphi(S_1))(p_s(M',\pphi(S_2))1(A_{S_1}=M,A_{S_2}=M'))\vone_{\omegam}\right)\\
&=&\sum_{M,M'\in \mathbf M}\E_A\Bigg{(}\frac{1}{\binom{n}s^2}\sum_{S_1,S_2\in \binom{[n]}{s}} \left(\E_{M_1,M_2}(p_s(M,\psi^{\tilde A}( M_1)) p_s(M',\psi^{\tilde A}( M_2)))+O(n^{-1+3c})\right)
\\&& \times\pr(A_{S_1}=M,A_{S_2}=M')\vone_{\omegam}\Bigg{)}
\end{eqnarray*}

We have
$\pr(A_{S_1}=M,A_{S_2}=M')=\pr(A_{S_1}=M)\pr(A_{S_2}=M')$ if $S_1\cap S_2=\emptyset$.  There are at most $\binom nss\binom n{s-1}$ sets $S_1,S_2$ such that $S_1\cap S_2\neq \emptyset$. By \eqref{eq:infnormbound}, assuming $\omegam$, $p_s({M},\pphi(S))=(\log n)^{O(1)}$. Therefore
\begin{multline*}
    \E_A((\sum_{M\in \mathbf M}X_M)^2\vone_{\omegam})=
   \sum_{M,M'\in \mathbf M}\ \E_A\Bigg{(}\frac{1}{\binom{n}s^2}\sum_{S_1,S_2\in \binom{[n]}{s}} \left(\E_{M_1,M_2}(p_s(M_1,\pphi( S_1)) p_s(M_2,\pphi( S_2)))\right)
\\ \times\pr(A_{S_1}=M)\pr(A_{S_2}=M')\vone_{\omegam}\Bigg{)}+O(n^{-1+3c}).
\end{multline*}
which is
\begin{equation}\label{eq:secondmoment}
\E_A\left[\E_M(p_S(M,\pphi(S)))\right]^2+O(n^{-1+3c}).
\end{equation}
Combining \eqref{eq:expectation} and \eqref{eq:secondmoment} gives a Chebyshev inequality,
\begin{equation}\label{eq:slasteqq} 
\pr(\left|\E_S(p_s({A}_S,\pphi(S)))-\E_A\left[\E_{S}(\E_M(p_S(M,\pphi(S))))\vone_{\omegam}\right]\right|\geq n^{-c})=O(n^{-((1+c)\wedge\gamma)}).
\end{equation}

We are almost done; we just need to remove the dependence on $\omegam$ on the right hand side of \eqref{eq:slasteqq}. Therefore, we use \eqref{eq:finiteqe} for  $F:=\E_M[p_s( M,\pphi(S))^2]$ to obtain that for some constant $\nu_2$,
\begin{eqnarray}\nonumber
|\E_A\left[\E_M(p_S(M,\pphi(S)))\overline\vone_{\omegam}\right]|&\leq& |\E_A\left[\E_M(p_S(M,\pphi(S)))^2\right]|^{1/2}\pr(\overline\vone_{\omegam})^{1/2}\\
\nonumber&\leq& |\E_A\left[\E_M(p_S(M,\pphi(S))^2)\right]|^{1/2}\pr(\overline\vone_{\omegam})^{1/2}\\
\nonumber&\leq& |\E_{\mathbf g}( \E_M(p_s({M},\mathbf{g})^{2}))|^{1/2}(1+O(n^{-\nu_2}))\pr(\overline\vone_{\omegam})^{1/2}\\
&=& O(n^{-\gamma/2})\label{eq:lasteqq}
\end{eqnarray}
where the second to last line follows from Lemma \ref{lem:qeconcentration} and the last line follows from \eqref{eq:ell2bound} and Lemma \ref{lem:delocalization}. \eqref{eq:partcheb} follows from combining \eqref{eq:slasteqq} and \eqref{eq:lasteqq}. 
\end{proof}
\subsection{Part V: Counting Domains}
We are now ready to think more directly about counting domains. We will show in this section, that with Gaussian inputs, the expectation is large. First, we show that $\E_{\mathbf g}(\E_A[f_s(A_S,\mathbf{g})])$ is large.

\begin{lemma}\label{lem:fexp}
    For any set $|S|=(p\wedge q)^{-k}$, 
    \[
    \E [\E_A(f_s(A_S,\mathbf{g}))]\geq 1- \exp(-(p\wedge q)^{-k^2(1/2-o_k(1))}).
    \]
\end{lemma}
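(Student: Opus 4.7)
The strategy is a conditional second-moment / Janson's inequality argument. Writing $\alpha := p \wedge q$ and $\beta := p \vee q$, I would first fix a realization of $\mathbf{g}$ with no zero entries (which holds almost surely under the Gaussian measure) and, for each edge $(u,v) \in \binom{S}{2}$, record the prescribed sign $\sigma_{uv} := \sgn(g_u g_v)$ together with the indicator $Z_{uv} := \mathbf{1}\{A_{uv} = \sigma_{uv}\}$. Because the entries $\{A_{uv}\}$ are independent, the $\{Z_{uv}\}$ are independent Bernoulli variables of success probability $p$ or $q$, each at least $\alpha$. For $B \in \binom{S}{k}$, the event ``$B$ is a clique nodal domain inside $S$'' coincides with $\bigcap_{e \in \binom{B}{2}}\{Z_e = 1\}$, an increasing event in $\{Z_e\}$; setting $X_B$ to be its indicator, we have $f_s = \mathbf{1}\bigl\{\sum_B X_B \geq 1\bigr\}$, so Janson's hypotheses are satisfied.

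Conditional on $\mathbf{g}$ I would then apply the extended Janson inequality to obtain
\[
\Pr\!\bigl[f_s = 0 \,\big|\, \mathbf{g}\bigr] \;\leq\; \exp\!\left(-\frac{\mu(\mathbf{g})^2}{2(\mu(\mathbf{g}) + \Delta(\mathbf{g}))}\right),
\]
where $\mu(\mathbf{g}) = \sum_B \Pr[X_B = 1 \mid \mathbf{g}]$ and $\Delta(\mathbf{g}) = \sum_{B \ne B',\, |B \cap B'| \geq 2} \Pr[X_B X_{B'} = 1 \mid \mathbf{g}]$. Using the $\mathbf{g}$-uniform bounds $\Pr[X_B = 1 \mid \mathbf{g}] \geq \alpha^{\binom{k}{2}}$ and $\Pr[X_B X_{B'} = 1 \mid \mathbf{g}] \leq \beta^{2\binom{k}{2} - \binom{|B \cap B'|}{2}}$, and substituting $s = \alpha^{-k}$, I would derive the lower bound $\mu \geq \alpha^{-k(k+1)/2}/k!$ (from $\binom{s}{k} \geq s^k/k!$) and, by partitioning the $\Delta$-sum by $j = |B \cap B'|$,
\[
\Delta \;\leq\; \sum_{j=2}^{k-1} \binom{s}{k}\binom{k}{j}\binom{s-k}{k-j}\,\beta^{2\binom{k}{2} - \binom{j}{2}}.
\]

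The last step is the combinatorial comparison: for $p, q$ fixed and $k \to \infty$, show that $\mu^2/(2(\mu+\Delta)) \geq \alpha^{-k^2(1/2 - o_k(1))}$, with the $o_k(1)$ correction absorbing factorial, polynomial-in-$k$, and ratio-$\beta/\alpha$ contributions. Since this bound is uniform in $\mathbf{g}$, taking expectations preserves it and gives $\Pr[f_s = 0] \leq \exp(-\alpha^{-k^2(1/2 - o_k(1))})$, from which the lemma follows by subtracting from $1$. The main obstacle is precisely this moment comparison: identifying the dominant $j$-term in $\Delta$ (the tradeoff between $s^{2k-j}$ with $s = \alpha^{-k}$ and $\beta^{-\binom{j}{2}}$), handling the $p \neq q$ case so that only $p \wedge q$ appears in the final exponent, and verifying that in the regime $\Delta \gg \mu$ the ratio $\mu^2/(\mu+\Delta) \approx \mu^2/\Delta$ still reaches the claimed order after the factorial and polynomial factors are swept into $o_k(1)$.
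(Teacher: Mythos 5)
Your setup coincides with the paper's (condition on $\mathbf{g}$, apply Janson's inequality to the clique--domain indicators), and everything up to your ``last step'' is fine. But the step you defer --- showing $\mu^2/(2(\mu+\Delta))\ge (p\wedge q)^{-k^2(1/2-o_k(1))}$ --- is exactly where the argument breaks, and it cannot be repaired. Write $\alpha=p\wedge q$. The $j=2$ term of your sum already gives $\Delta\gtrsim \binom{s}{k}\binom{k}{2}\binom{s-k}{k-2}\,\alpha^{2\binom{k}{2}-1}$, while $\mu^2\asymp\binom{s}{k}^2\alpha^{2\binom{k}{2}}$ up to factors of $(\beta/\alpha)^{O(k^2)}$ that you can track; the ratio is $\mu^2/\Delta=O\big(s^2/k^4\big)=O\big(\alpha^{-2k}/k^4\big)$ up to such bounded corrections. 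Since $\mu\asymp\alpha^{-k^2/2+O(k)}/k!$, you are squarely in the regime $\Delta\gg\mu$, where the extended Janson bound caps out at $\mu^2/(2\Delta)=\alpha^{-2k+O(1)}\mathrm{poly}(k)^{-1}$ --- short of the claimed $\alpha^{-k^2(1/2-o_k(1))}$ by a factor that is exponential in $k^2$ versus $k$. No amount of bookkeeping of factorials or of the $p\ne q$ asymmetry closes a gap of that order.

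In fact the stated inequality is false, so the obstacle is not yours to overcome. Partition $S$ into $k-1$ blocks of size about $s/(k-1)$ and demand that every within-block pair satisfies $A_{uv}g_ug_v\le 0$; conditionally on $\mathbf{g}$ these are independent events, each of probability at least $1-(p\vee q)$, and on their intersection every clique domain has at most one vertex per block, hence size at most $k-1$. Therefore $\Pr(f_s=0)\ge (1-(p\vee q))^{s^2/2(k-1)}=\exp\big(-O(\alpha^{-2k}/k)\big)$, which for large $k$ exceeds $\exp\big(-\alpha^{-k^2(1/2-o_k(1))}\big)$. (The paper's own proof makes the corresponding slip: it shows the correlation sum is $o_k(1)\E_M(X)^2$ but then concludes as though it were $o_k(1)\E_M(X)$.) The silver lining is that the bound your computation genuinely delivers, $\Pr(f_s=0)\le\exp(-c\,\alpha^{-2k}/k^4)$, still tends to $0$ after being raised to the power $1/s=\alpha^{k}$, which is all the downstream argument in Theorem \ref{thm:mainavg} needs; you should prove and propagate that weaker estimate instead of the stated one. (Minor point: $\binom{s}{k}\ge s^k/k!$ is backwards; use $\binom{s}{k}\ge (s/k)^k$ or $(1-o(1))s^k/k!$.)
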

\begin{proof}
Without loss of generality, assume that $p\geq q$. Fix some set $S$. Define $I_\alpha$ to be the event that the set of $k$ vertices $\alpha$ forms a $k$-clique nodal domain. We then define $X:=\sum_{\alpha \in \binom{[S]}{k}}I_\alpha $. In order to quantify the dependence across different $\alpha$, we define
\[
\Delta:=\E_M(X)+\sum_{\alpha\sim\beta}\E_M( I_\alpha I_\beta),
\]
where $\alpha\sim \beta$ if $\E_M(I_\alpha I_\beta)>\E_M(I_\alpha)\E_M(I_\beta)$.
By Janson's inequality, (\cite{janson2011random}, see \cite{alon2016probabilistic} Theorem 8.1.2)
\begin{equation}\label{eq:janson}
\pr(X=0)\leq \exp(-\E_M(X)^2/\Delta).
\end{equation}

In order to utilize this, we compute the overlap,
\begin{eqnarray*}
\sum_{\alpha\sim\beta}\sum_{\beta\sim \alpha}\E_M(I_\alpha I_\beta)&\leq &\sum_{\alpha\sim\beta}\sum_{r=2}^{k} \binom kr\binom {s-k}{k-r} q^{-\binom{r}{2}}\E_M(I_\alpha)^2\\
&\leq&2\sum_{\alpha\in \binom{S}{k}}\binom k2\binom {s-k}{k-2} q^{-1}\E_M(I_\alpha)^2\\
&=&o_k(1) \E_M(X)^2.
\end{eqnarray*}

Therefore by \eqref{eq:janson},
\begin{equation}\label{eq:probbound}
\pr(X=0)\leq \exp\left[-(1-o_k(1))\E_M(X)\right].
\end{equation}

In order to calculate $\E_M(X)$, consider a specific $\alpha\in \binom{[S]}{k}$. 
For a vertex pair $(u,v)\in \alpha$, $\pr(A_{uv}g_u g_v>0)\geq q$. Therefore, the probability that all edges in $\alpha$ are positive is at least $q^{\binom k2}$, 
giving
\[
\E_M(X)\geq q^{\binom k2}\binom sk.
\]
Plugging this into \eqref{eq:probbound} gives
\[\pr(X=0)\leq \exp\left[-(1-o_k(1))q^{\binom k2}\binom sk\right].\]

We chose $s=q^{-k}$ in \eqref{eq:defs}. By then using the approximation $\binom{m}{j}\geq \frac{m^j}{j^j}$, we have
\[\pr(X=0)\leq \exp(-q^{-k^2(1/2-o_k(1))})\]
and therefore
\[\E_{\mathbf g}(\E_M(f_s(\mathbf{g})))\geq 1- \exp(-q^{-k^2(1/2-o_k(1))}).\]
\end{proof}

Before we prove Theorem \ref{thm:mainavg}, we need one more eigenvector structure result.
\begin{lemma}[\cite{rudelson2016no} Theorem 1.5]\label{lem:nogaps}
There is some constant $c_2$ such that for any $\epsilon>0$, with probability $1-n^{-\omega(1)}$, every eigenvector of $A$ satisfies
\[
\|v(I)\|_2\geq (\epsilon \cdot e^{-c_2/\sqrt{\epsilon}})^7\|v\|_2.
\]
for all $I\subset [n], |I|\geq \epsilon n$.
\end{lemma}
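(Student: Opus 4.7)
My plan is to follow the strategy of Rudelson and Vershynin for no-gaps delocalization of eigenvectors of random symmetric matrices. Fix $I \subset [n]$ with $|I| \geq \epsilon n$, and suppose for contradiction that some unit eigenpair $(\lambda, v)$ of $A$ satisfies $\|v(I)\|_2 < \delta$ for the $\delta = (\epsilon e^{-c_2/\sqrt{\epsilon}})^7$ specified in the lemma. Writing $A$ in block form with respect to the partition $I \sqcup I^c$, the bottom block of $Av = \lambda v$ reads
\[
(A_{I^c, I^c} - \lambda I)\, v(I^c) = -A_{I^c, I}\, v(I),
\]
whose right-hand side has norm at most $\|A\|_{\mathrm{op}} \delta = O(\sqrt{n}\,\delta)$, while $\|v(I^c)\|_2 \geq \sqrt{1-\delta^2}$. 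Consequently, a violation of the lemma forces $\lambda$ to be a near-eigenvalue (within $O(\sqrt{n}\,\delta)$) of the random principal submatrix $A_{I^c, I^c}$, simultaneously for every eigenvalue $\lambda$ of $A$ and every admissible $I$. It therefore suffices to rule out such a simultaneous spectral coincidence with high probability.

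The key analytic ingredient is small ball / Littlewood--Offord anti-concentration. For a fixed unit vector $u$ on $I^c$ and fixed scalar $\mu$, conditioning on all entries of $A$ outside a single row of $A_{I^c, I^c}$ turns each coordinate of $(A_{I^c, I^c} - \mu I) u$ into a weighted sum of independent Bernoullis, and Rogozin's inequality bounds the probability that this coordinate lies in a short interval by $O(m^{-1/2})$, where $m$ is the essential support size of $u$. Tensorizing across coordinates, sweeping $\mu$ via an $\eta$-net on the spectral interval, and covering the unit sphere of $\mathbb{R}^{I^c}$ by a further $\eta$-net produces a uniform bound; a union bound over the $\binom{n}{\lceil \epsilon n \rceil}$ choices of $I$ is easily absorbed by the super-polynomial probability gain coming from the tensorization. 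A crude execution of this scheme already delivers polynomial no-gaps delocalization of the form $\|v(I)\|_2 \gtrsim \epsilon^C$.

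The hard part will be the quantitative dependence on $\epsilon$: upgrading from polynomial delocalization to the stated $(\epsilon e^{-c_2/\sqrt{\epsilon}})^7$ bound requires a bootstrap. Starting from the polynomial estimate, I would iterate: the previous delocalization guarantees that $v$ is spread over a polynomially large subset, which upgrades the per-coordinate small ball bound from $O(m^{-1/2})$ to a much sharper one via the regularized least common denominator machinery of Rudelson--Vershynin. The doubling structure of this recursion, run for $O(\log(1/\epsilon))$ iterations, produces the exponential factor $e^{-c_2/\sqrt{\epsilon}}$, and the exponent $7$ reflects the accumulated loss across these iterations and the conversion between norms along the way. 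With these estimates in hand, the $n^{-\omega(1)}$ failure probability follows directly from the strength of the final anti-concentration bound.
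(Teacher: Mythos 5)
This lemma is not proved in the paper at all: it is imported verbatim as Theorem 1.5 of Rudelson and Vershynin's no-gaps delocalization paper \cite{rudelson2016no}, so the only ``proof'' the authors supply is the citation. Your proposal is therefore an attempt to reconstruct that external theorem. The opening reduction is right: if $\|v(I)\|_2<\delta$ for a unit eigenvector, restricting the eigenvector equation to the rows indexed by $I^c$ shows that $(A-\lambda)_{I^c,I^c}$ has an almost-null vector of norm close to $1$, so the task becomes a uniform (over $\lambda$ in the spectral window and over all $I$) lower bound on smallest singular values of principal submatrices, obtained from small-ball estimates, nets, and a union bound whose cost $\binom{n}{\lceil\epsilon n\rceil}$ is beaten by tensorization over $\sim n$ coordinates. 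That is indeed the architecture of the Rudelson--Vershynin argument.

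Beyond that, however, the proposal is a sketch of the known strategy rather than a proof, and two of its load-bearing steps are not sound as stated. First, the tensorization you describe (condition on everything outside one row, apply Rogozin coordinatewise, multiply the bounds) requires independent rows; $A$ is symmetric, so the rows of $A_{I^c,I^c}$ are dependent, and Rudelson--Vershynin need a genuine decoupling/splitting argument here that your outline omits. Second, the entire quantitative content of the lemma --- the exponent $7$ and the factor $e^{-c_2/\sqrt{\epsilon}}$ --- is asserted rather than derived: a doubling bootstrap run for $O(\log(1/\epsilon))$ iterations, each losing a constant factor, would produce a loss polynomial in $\epsilon$, not $e^{-c_2/\sqrt{\epsilon}}$; in the actual argument that factor enters through the small-ball/regularized-LCD estimates and the admissible range of the parameter $s$ in their Theorem 1.5, not through an iteration over scales of $\epsilon$. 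So the proposal correctly identifies the starting point and the general toolbox but does not close the gap between ``polynomial delocalization is plausible'' and the stated bound. For the purposes of this paper the correct move is simply to cite \cite{rudelson2016no}, as the authors do.
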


Note that this implies that with probability $1-n^{-\omega(1)}$, for some constant $c>0$, 
\begin{equation}\label{eq:nogaps2}
\left|\left\{v:|\pphi(v)|\leq \sqrt{\frac{{\delta}}{{n}}} \right\}\right|\leq \frac{c}{(\log \frac 1\delta)^2}n.
\end{equation}
 
\begin{proof}[Proof of Theorem \ref{thm:mainavg}]
Throughout this proof, when we write $p_s(\cdot)$, namely with one input, we mean the function $p_s:\R^{s}\rightarrow \R$ defined as $p_s(x):=\E_M(p_s({M},x))$.

Given the statement of Lemma \ref{lem:fexp} concerning $\E_A(f_s(A_S,\mathbf{g}))$, we translate this to a bound on the polynomial $p_s$. For fixed $C,\delta>0$, define the event 
$F_{C,\delta}$ as the event that $\forall u,v\in S, g_{u}g_v\in [-{C},0]\cup [{\delta}, {C}]$. We then have the decomposition
\begin{eqnarray*}
\left|\E_{\mathbf g}(p_s(\mathbf{g}))-\E_{\mathbf g}(f_s(\mathbf{g}))\right|&\leq&\left|\E_{\mathbf g}(p_s(\mathbf{g})\vone_{F_{C,\delta}})-\E_{\mathbf g}(f_s(\mathbf{g})\vone_{F_{C,\delta}})\right|+ \left|\E_{\mathbf g}(p_s(\mathbf{g})\overline\vone_{F_{C,\delta}})\right|+\left|\E_{\mathbf g}(f_s(\mathbf{g})\overline\vone_{F_C,\delta})\right|.
\end{eqnarray*}
By \eqref{eq:pinf},
\[
\left|\E_{\mathbf g}(p_s(\mathbf{g})\vone_{F_{C,\delta}})-\E_{\mathbf g}(f_s(\mathbf{g})\vone_{F_{C,\delta}})\right|\leq \delta.
\]
For the next term, we use \eqref{eq:pinf2}. To bound the probability of $\overline F_{C,\delta}$, note the maximum of the PDF of the univariate standard normal is $\frac{1}{\sqrt{2\pi}}$, Therefore, by \eqref{eq:ell2bound},
\begin{equation*}\label{eq:pbound2}
\left|\E_{\mathbf g}(p_s(\mathbf{g})\overline \vone_{F_{C,\delta}})\right|\leq \E_{\mathbf g}(p_s(\mathbf{g})^2)^{1/2}\pr(\overline F_{C,\delta})^{1/2}\leq \sqrt 2(s\sqrt{\frac{\delta}{2\pi}}+2s^2e^{-C})^{1/2}.
\end{equation*}
The last error term is 
\[
\left|\E_{\mathbf g}(f_s(\mathbf{g})\overline\vone_{ F_{C,\delta}})\right|\leq s^2e^{-C   }.
\]

Combining this estimation with Lemma \ref{lem:partcheb} and Lemma \ref{lem:fexp} gives that, for sufficiently small $\delta$ and sufficiently large $C$, there is some $\nu>0$ such that with probability $1-O(n^{-\nu})$, 
\[\E_S(p_s({A}_S,\pphi(S)))\geq 1-\delta-\exp(-q^{-k^2}(1/2-o_k(1))).\]

Now, we want to limit the contribution of the bad sets in $\pphi(S)$. Here, we set $F_{C,\delta}(\pphi_S)$ to be the event that for each pair of vertices $u,v\in S$, $n\pphi(u)\pphi(v)\in ([-C,-\delta]\cup [\delta,C])$. To give an lower bound on this probability, we use \eqref{eq:nogaps2} and the trivial bound that the number of vertices of value at least $\sqrt{C/n}$ is at most $n/C$. By Lemma \ref{lem:partcheb},
\begin{eqnarray*}
|\E_S(p_s({A}_S,\pphi(S))\overline \vone_{F_{C,\delta}(\pphi_S)})|&\leq &|\E_S(p_s({A}_S,\pphi(S))^2)|^{1/2}\pr(\overline F_{C,\delta}(\pphi_S))^{1/2}\\
&\leq&(1+O(n^{-\nu}))|\E_{\mathbf g}(\E_M(p_s({M},\mathbf{g})^2)|^{1/2}\pr(\overline F_{C,\delta}(\pphi_S))^{1/2}\\
&\leq&(1+O(n^{-\nu}))(1+\delta)(\frac{c}{(\log \frac1\delta)^2}+\frac1{C})^{1/2}.
\end{eqnarray*}
for some constant $c$.

Therefore, for sufficiently large $C$, 
\[
|\E_S(p_s({A}_S,\pphi(S))\vone_{F_{C,\delta}})|\geq 1-\epsilon
\]
for 
\begin{equation}\label{eq:epdef}
\epsilon:= \delta+\exp(-q^{-k^2(1/2-o_k(1))})+\frac{c_2}{(\log \frac1\delta)}
\end{equation}
where $c_2$ is some constant. 

We set $\delta$ such that the second term is the dominant term. Specifically,
\[
\delta\leq \exp\left[-\exp(q^{-k^2})\right]
\]
suffices.

We now interpret $p_s$ on the set $F_{C,\delta}$. By \ref{eq:pinf}, if $S$ has a $k$-clique domain, then $p_s\leq 1+\delta$, and if $S$ does not, $p_s\leq \delta$. Therefore, the fraction of $S$ that contains a clique domain is at least $\zeta$, where $\zeta$ satisfies $(1+\delta)\zeta+(1-\zeta)\delta=1-\epsilon$. This gives $\zeta=1-\delta-\epsilon\geq 1-2\epsilon$. Therefore, in our graph, at least $(1-2\epsilon)\binom ns$ of all sets of size $s$ contain a $k$ clique nodal domain. 

In order to show that we can partition our graph into $2n/k$ nodal domains, we proceed greedily. We start with our entire vertex set $[n]$. We arbitrarily select an $s$-set $S$ in our vertex set that contains a clique domain. We remove an arbitrary $k$-clique domain inside $S$ from the vertex set. We repeat this process with our new vertex set, and continue until there are no sets $s$ that contain a clique domain.

As at least $(1-2\epsilon)\binom{n}{s}$ $s$-sets contain a $k$-clique domain, this continues until there are at most $r$ vertices left, for
\[
\binom{r}{s}\leq 2\epsilon \binom{n}{s}.
\]
Once again using the approximation $\frac{m^j}{j^j}\leq \binom{m}{j}\leq \frac{m^je^j}{j^j}$, we have that
\begin{equation}\label{eq:rbound}
r^s\leq 2\epsilon n^se^s.
\end{equation}
By our definition of $\epsilon$ in \eqref{eq:epdef} and $s$ in \eqref{eq:defs},
\[
r\leq n\exp(-q^{-k^2(1/2-o_k(1)}/s+1)\leq n\exp(-q^{-k^2(1/2-o_k(1)+k}+1).
\]
Therefore, by this process, for large $k$ we have at most
\[
\frac nk+ne\epsilon^{1/s}\leq 2n/k
\]
nodal domains. 

\end{proof}

\section*{Acknowledgements}
The authors would like to thank Lior Alon and Sergey Lototsky for helpful discussions. We also thank Gregory Berkolaiko and Nikhil Srivastava for comments on an early version of this manuscript.

{ \small 
	\bibliographystyle{plain}
	\bibliography{main} }

\appendix
\section{Path nodal domains of signed Erd\H{o}s-R\'enyi graphs}\label{sec:averagepath}

Consider an Erd\H{o}s-R\'enyi signed graph $G(n,p,q)$, with $0<p,q<1$ fixed, and its associated adjacency matrix $A$, e.g., $A$ is a $n \times n$ symmetric matrix with diagonal entries equal to zero and i.i.d. off-diagonal pairs, each equal to $+1$ with probability $p$, $-1$ with probability $q$, and $0$ with probability $1-p-q$. We prove the following:

\begin{prop}\label{claim:1path}
Let $0<p,q<1$ be constant, and $A$ be the adjacency matrix of an Erd\H{o}s-R\'enyi signed graph $G(n,p,q)$. Then, with probability $1-n^{-\omega(1)}$, $\kappa(G^{>}_{\bm{\varphi}})=1$ for all eigenvectors $\bm{\varphi}$ of $A$.
\end{prop}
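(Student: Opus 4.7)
The plan is to adapt the argument of \cite{dekel2011eigenvectors} to the signed setting, showing that any fixed pair of vertices $u \neq v$ share a common neighbor in $G^>_{\bm{\varphi}}$ with probability $1 - e^{-\Omega(n)}$. A union bound over the $\binom{n}{2}$ pairs and the $n$ eigenvectors then delivers $\kappa(G^>_{\bm{\varphi}}) = 1$ simultaneously for every $\bm{\varphi}$ with probability $1 - n^{-\omega(1)}$.

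First I would assemble the standard high-probability input for the adjacency matrix of a signed Erd\H{o}s--R\'enyi graph. With probability $1 - n^{-\omega(1)}$: every eigenvector is non-vanishing \cite{nguyen2017random}; $\|\bm{\varphi}\|_\infty \leq n^{-1/2} \log^{C_1} n$ (sup-norm delocalization); and by Lemma \ref{lem:nogaps}, at most $o(n)$ coordinates of $\bm{\varphi}$ have magnitude below $n^{-1/2}/\log^{C_2} n$. Call $w$ \emph{typical} if $|\bm{\varphi}(w)| \geq n^{-1/2}/\log^{C_2} n$; then $(1-o(1))n$ of the $n$ vertices are typical.

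Next, for each typical candidate $w$, I would decouple the pair $(A_{uw}, A_{vw})$ from $\bm{\varphi}$ using the same resolvent-based Taylor expansion that underlies Lemma \ref{lem:decor}: replacing these two entries by a fresh i.i.d.\ copy $(A'_{uw}, A'_{vw})$ perturbs each of $\bm{\varphi}(u), \bm{\varphi}(v), \bm{\varphi}(w)$ by $O(n^{-1+c})$, a lower-order correction compared to their typical magnitude $\sim n^{-1/2}/\log^{C_2} n$, so the three signs are preserved. Conditional on the remainder of $A$, the pair $(A'_{uw}, A'_{vw})$ is two independent signed variables, so for any sign pattern of $(\bm{\varphi}(u), \bm{\varphi}(v), \bm{\varphi}(w))$ the probability that both $A'_{uw}\bm{\varphi}(u)\bm{\varphi}(w) > 0$ and $A'_{vw}\bm{\varphi}(v)\bm{\varphi}(w) > 0$ is at least $(p \wedge q)^2$. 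Extending the decoupling to jointly resample the pairs $\{(A_{uw_j}, A_{vw_j})\}_{j=1}^{m}$ for a growing family of typical candidates (say $m = n^{1/2-\eta}$) shows that the events ``$w_j$ is a common good neighbor of $u, v$'' are essentially independent with constant success probability, and a Janson-type estimate yields $\Pr[|W_{u,v}| = 0] \leq e^{-\Omega(m)}$. The case of $G^<_{\bm{\varphi}}$ is covered by replacing $\bm{\varphi}$ with $-\bm{\varphi}$.

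The main obstacle is the decoupling step. Because the bulk eigenvalue gaps of the adjacency matrix are only $n^{-1+o(1)}$, the stability of individual eigenvector entries under a rank-$O(1)$ perturbation of $A$ is not a direct consequence of Davis--Kahan: it must be extracted from the cancellations in the resolvent representation, using the delocalization bound \eqref{eq:eigbound} and the inverse-gap sum estimate \eqref{eq:lambdabound}, precisely as in the proof of Lemma \ref{lem:decor}. Once that quantitative decoupling is in hand, the remaining combinatorial step is standard.
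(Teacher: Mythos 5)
There is a genuine gap, and it sits exactly where you flagged the ``main obstacle'': the decoupling step cannot deliver the probability bound the proposition demands. The resolvent/Taylor-expansion machinery you invoke (the analogue of Lemma \ref{lem:decor}) is only valid on the event $\omegam$, which requires the inverse-gap sum bound \eqref{eq:lambdabound} and the isotropic delocalization \eqref{eq:eigbound}; these hold with probability $1-O(n^{-\gamma})$ for a small constant $\gamma$, for a \emph{single fixed bulk index} $i$ (the paper explicitly notes that the condition $c<1/2$ prevents treating all bulk eigenvectors at once). A statement that is only $1-O(n^{-\gamma})$ likely for one index cannot be union-bounded into a $1-n^{-\omega(1)}$ statement for all $n$ eigenvectors, and it says nothing about the outlier eigenvalue near $(p-q)n$, which is not in the bulk. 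A second, independent problem is the joint resampling of $m=n^{1/2-\eta}$ entry pairs: the perturbation analysis behind Lemma \ref{lem:decor} controls the Taylor expansion only for a \emph{constant}-size submatrix (the sum over multi-indices contributes $\binom{s}{2}^k$ per order $k$, which is harmless for constant $s$ but becomes $m^k$ in your setting and destroys the convergence of the series once $m$ grows polynomially). So the claimed ``essential independence'' of the $m$ common-neighbor events is not justified, and the Janson-type bound $e^{-\Omega(m)}$ does not follow.

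The paper's proof avoids all of this and is considerably more elementary. It first shows $\kappa(G^>_{\bm{\varphi}})\le 3\log_{1/(1-p\vee q)}n$ and that the second-largest component has at most one vertex, both by union bounds over the $2^k$ possible sign patterns on a candidate vertex set (treating the eigenvector signs as adversarial, so no independence between $A$ and $\bm{\varphi}$ is ever needed). It then rules out isolated vertices $v$ of $G^>_{\bm{\varphi}}$ via the eigenvector equation: if $v$ is isolated then every edge at $v$ satisfies $A_{uv}\bm{\varphi}(u)\bm{\varphi}(v)<0$, so all terms in $\lambda\bm{\varphi}(v)=\sum_{u\sim v}A_{uv}\bm{\varphi}(u)$ have the same sign and $|\lambda\bm{\varphi}(v)|=\sum_{u\sim v}|\bm{\varphi}(u)|\ge \sqrt{n}/\log^5 n$ by the no-gaps delocalization \eqref{eq:qe1}, contradicting $|\lambda|=O(\sqrt{n}\,)$ (Bai--Yin) and $\|\bm{\varphi}\|_\infty\le \log^4 n/\sqrt{n}$; the top eigenvector is handled separately by showing it is $o_n(1)$-close to $\overrightarrow{1_{[n]}}/\sqrt{n}$. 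I would recommend abandoning the resampling/decoupling route here entirely: the sign-pattern union bound plus the eigenvector equation gives the $1-n^{-\omega(1)}$ guarantee uniformly over all eigenvectors with no independence assumptions.
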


As mentioned in Section \ref{sec:intro}, in the study of nodal counts of adjacency matrices, it is standard convention to consider domains where $A_{ij} \bm{\varphi}(i)\bm{\varphi}(j)>0$, as the negative of the adjacency matrix is a generalized Laplacian,  which we follow below. Given that both $p$ and $q$ are constant and the result is independent of eigenvalue indexing, this convention has no impact on the result (e.g., the same result holds for $\kappa(G^{<}_{\bm{\varphi}})$).

We first recall a number of known results from random matrix theory.

\begin{lemma}\label{lm:results} Let $A$ be the adjacency matrix of $G(n,p,q)$, where $0<p,q<1$ are fixed. The following are true.
\begin{enumerate}
\item(Nonzero vector \cite{nguyen2017random})
With probability $1-n^{-\omega(1)}$, every eigenvector $\pphi$ is nonzero
\item(Combination of Lemma \ref{lem:nogaps} and \eqref{eq:infnormbound})
For any fixed $c$, with probability $1-n^{\omega(1)}$, every unit eigenvector $\bm{\varphi}$ is such that any set of vertices of size $|S|\geq c n$ satisfies
\begin{equation}\label{eq:qe1}
\sum_{u\in S}|\bm{\varphi}(u)|\geq \frac{\sqrt n}{\log^5 n
}.
\end{equation} 
\item(Bai Yin Theorem \cite{bai1988necessary}, see \cite[Theorem 4.4.5]{vershynin2018high})
With probability $1-n^{\omega(1)}$, $\|A - (p-q) \mathbf{1}\mathbf{1}^T \| \le \big(2 + o_n(1)\big)\sqrt{(p+q)n}$.
\end{enumerate}
\end{lemma}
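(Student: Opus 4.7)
I argue by contradiction and a union bound: suppose some eigenvector $\bm{\varphi}$ of $A$ admits a \emph{bad cut}, i.e.\ a partition $[n]=S\sqcup T$ with $1\le|S|\le n/2$ such that $A_{ij}\bm{\varphi}(i)\bm{\varphi}(j)\le 0$ for every edge $(i,j)\in E(G)$ with $i\in S$, $j\in T$ (the inequality is strict by Lemma~\ref{lm:results}(1) whp). The goal is to rule out every such $(S,T,\bm{\varphi})$ configuration with failure probability small enough to absorb the $n\cdot 2^{n}$ total number of configurations.

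\textbf{Block resampling (main technique, after \cite{dekel2011eigenvectors,rudelson2017delocalization}).} Let $\tilde A$ be the matrix obtained from $A$ by replacing the block $A_{S,T}$ (and its transpose) by an independent i.i.d.\ copy $A'_{S,T}$, and let $\tilde\bm{\varphi}$ be the corresponding eigenvector matched by index. The key stability estimate is $\|\bm{\varphi}-\tilde\bm{\varphi}\|_{2}=n^{-1/2+o(1)}$, obtained via a resolvent / Neumann-series expansion in the rank-$\le 2|S|$ perturbation $A-\tilde A$, using the eigenvalue-gap sum \eqref{eq:lambdabound} to control small denominators and the entrywise delocalization \eqref{eq:eigbound},\eqref{eq:infnormbound} to control numerators. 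Combined with the no-gaps estimate \eqref{eq:nogaps2} (applied at scale $\delta=n^{-o(1)}$), this forces the set $F$ of coordinates where $\operatorname{sgn}\bm{\varphi}\ne\operatorname{sgn}\tilde\bm{\varphi}$ to satisfy $|F|=O(n/\log^{2}n)$.

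\textbf{Conditional Chernoff.} Conditioning on $\tilde A$ (hence on $\tilde\bm{\varphi}$), the $|S|\cdot|T|$ entries $\{A_{ij}:i\in S,\,j\in T\}$ are still i.i.d., and for each pair the probability that $(i,j)$ is a \emph{good} edge for $\tilde\bm{\varphi}$ is $p$ or $q$ depending on $\operatorname{sgn}\tilde\bm{\varphi}(i)\tilde\bm{\varphi}(j)$, hence at least $p\wedge q>0$. Chernoff produces at least $\tfrac12(p\wedge q)|S||T|$ good crossings for $\tilde\bm{\varphi}$ with failure probability $\exp(-\Omega(|S||T|))$. At most $|F|\cdot(|S|+|T|)\le O(n|T|/\log^{2}n)$ such crossings can flip sign in passing from $\tilde\bm{\varphi}$ to $\bm{\varphi}$, leaving $\Omega(|S||T|)$ good crossings for $\bm{\varphi}$ itself whenever $|S|\gg n/\log^{2}n$. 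In this regime the Chernoff exponent $\exp(-\Omega(n^{2}/\log^{2}n))$ overwhelms the union-bound factor $n\cdot 2^{n}$.

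\textbf{The small-$|S|$ obstacle.} The genuinely hard regime is $|S|=O(n/\log^{2}n)$, where the sign-flip correction swamps the resampling Chernoff bound. There I would argue directly: since $\bm{\varphi}\perp\vone$ up to $n^{-1/2+o(1)}$ (as $\bm{\varphi}$ is a non-Perron eigenvector) and \eqref{eq:qe1}, \eqref{eq:nogaps2} give the delocalization of $|\bm{\varphi}|$, both $V^{+}=\{v:\bm{\varphi}(v)>0\}$ and $V^{-}=\{v:\bm{\varphi}(v)<0\}$ contain $\Omega(n/\log^{2}n)$ coordinates. For any fixed small $S$, a second application of the block-resampling step (now removing \emph{all} edges incident to $S$) pins the signs on $T$ up to a $o(|T|)$ error that is independent of the cross-edges; the $\Theta(|S|n)$ i.i.d.\ cross-edges between $S$ and $T$ must then contain at least one good crossing except on an event of probability $(1-(p\wedge q))^{\Omega(|S|n/\log^{2}n)}=n^{-\omega(|S|)}$, which overwhelms $\binom{n}{|S|}\le n^{|S|}$. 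The hardest point throughout is proving the uniform-in-index stability $\|\bm{\varphi}-\tilde\bm{\varphi}\|_{2}=n^{-1/2+o(1)}$, which is precisely the content of the ``more recent eigenvector delocalization results'' from \cite{rudelson2017delocalization} cited in the paper's claim.
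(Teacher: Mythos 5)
Your proposal does not prove the statement in question. Lemma~\ref{lm:results} is a compilation of three imported facts from random matrix theory: (1) almost-sure non-vanishing of eigenvectors, quoted from the literature; (2) the lower bound $\sum_{u\in S}|\bm{\varphi}(u)|\ge \sqrt{n}/\log^5 n$ for $|S|\ge cn$, which the paper obtains in one line by combining the Rudelson--Vershynin no-gaps delocalization (Lemma~\ref{lem:nogaps}, giving $\|\bm{\varphi}(S)\|_2\ge$ const) with the sup-norm bound \eqref{eq:infnormbound} via $\sum_{u\in S}|\bm{\varphi}(u)|\ge \|\bm{\varphi}(S)\|_2^2/\|\bm{\varphi}\|_\infty$; and (3) the Bai--Yin operator-norm bound. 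None of these three items is addressed anywhere in your write-up: you never derive the $\ell_1$ lower bound from the cited ingredients, never discuss the non-vanishing result, and never touch the spectral norm estimate.

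What you have sketched instead is a proof of a \emph{different} statement, namely the connectivity of $G^>_{\bm{\varphi}}$ (Proposition~\ref{claim:1path} in Appendix~A), which is a downstream consequence that \emph{uses} Lemma~\ref{lm:results} as input. Even viewed as a proof of that proposition, your route (block resampling of $A_{S,T}$, eigenvector stability $\|\bm{\varphi}-\tilde{\bm{\varphi}}\|_2=n^{-1/2+o(1)}$, conditional Chernoff over cuts) differs from the paper's, which first bounds $\kappa(G^>_{\bm{\varphi}})$ by $O(\log n)$ via a union bound over re-signings, then shows the second-largest component is a single vertex, and finally excludes isolated vertices by plugging the eigenvector equation into the $\ell_1$ delocalization bound \eqref{eq:qe1} and the Bai--Yin estimate. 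To answer the question actually posed, you should simply verify item (2) by the Cauchy--Schwarz step above and record the citations for items (1) and (3); the resampling machinery is not needed for any part of Lemma~\ref{lm:results}.
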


To prove the desired result, we start with two weaker statements.

\begin{lemma}
With probability $1- n^{-\omega(1)}$, every eigenvector $\bm{\varphi}$ of the adjacency matrix has $\kappa(G^{>}_{\pphi})\leq 3\log_{\frac{1}{1-p \vee q}} n$. 
\end{lemma}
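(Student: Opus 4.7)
The plan follows the strategy of Dekel--Lee--Linial, combined with the no-gap delocalization of Rudelson--Vershynin (Lemma \ref{lem:nogaps}). I would argue by contradiction, assuming some eigenvector $\bm{\varphi}$ has more than $3\log_{1/(1-p\vee q)}n$ components $V_1,\ldots,V_k$ in $G^>_{\bm{\varphi}}$, with sign pattern $\varepsilon := \sgn(\bm{\varphi})$.

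The first step is to establish a Perron--Frobenius-type sub-eigenvector inequality. Setting $D = \mathrm{diag}(\varepsilon)$ and $\tilde A = DAD$, the matrix $\tilde A$ has $|\bm{\varphi}|$ as an eigenvector with eigenvalue $\lambda$. Splitting $\tilde A = \tilde A^+ - \tilde A^-$ into its entry-wise non-negative and non-positive parts, the off-diagonal support of $\tilde A^+$ is precisely $E(G^>_{\bm{\varphi}})$, so $\tilde A^+$ is block-diagonal across the components $V_p$. Since $\tilde A^-|\bm{\varphi}| \geq 0$ component-wise, the identity $\tilde A|\bm{\varphi}| = \lambda|\bm{\varphi}|$ yields $\tilde A^+|_{V_p}\,|\bm{\varphi}|_{V_p} \geq |\lambda|\,|\bm{\varphi}|_{V_p}$, and, by Collatz--Wielandt (using $|\bm{\varphi}|_{V_p}>0$ from Lemma \ref{lm:results}(1)), one concludes $\lambda_{\mathrm{PF}}(\tilde A^+|_{V_p}) \geq |\lambda|$.

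The second step turns this into a lower bound on $|V_p|$. Since $\tilde A^+|_{V_p}$ is entry-wise dominated by the unsigned adjacency $|A|_{V_p}$, we have $\lambda_{\mathrm{PF}}(\tilde A^+|_{V_p}) \leq \lambda_1(|A|_{V_p})$. A Bai--Yin bound applied to $|A|$ (distributed as $G(n,p+q)$), combined with a union bound over subsets, yields $\lambda_1(|A|_V) \leq (p+q)|V| + O(\sqrt{|V|\log n})$ uniformly in $V \subseteq [n]$ with probability $1-n^{-\omega(1)}$. Hence $|V_p| \gtrsim |\lambda|/(p+q)$, which suffices to bound $k$ whenever $|\lambda|$ is of order $n/\log n$ or larger.

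For the remaining eigenvectors (those with small $|\lambda|$), I would augment the spectral argument with a Chernoff bound on the number of good edges between a putative small component $S$ and its complement: for fixed $(\varepsilon, S)$ the probability that all such $A$-edges are bad is at most $(1-(p\wedge q))^{|S|(n-|S|)}$, which for $|S| \gtrsim n/\log n$ is small enough to survive a reasonable union bound. The main obstacle is that the sign pattern $\varepsilon$ depends on $A$, so a naive $2^n$ union bound over sign patterns fails; this is resolved by using the local stability of eigenvectors under perturbations (in the spirit of Lemma \ref{lem:decor}) to decouple the row of $A$ at any fixed vertex from $\varepsilon$, at the cost of a factor of $n$ from union-bounding over the (at most $n$) realized sign patterns. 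Delocalization (Lemma \ref{lem:nogaps} and Lemma \ref{lm:results}(2)) is used to ensure $\varepsilon$ is sufficiently balanced that the Chernoff exponent beats this polynomial loss, yielding the desired logarithmic bound on $k$.
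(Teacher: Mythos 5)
Your proposal diverges from the paper's argument and, as written, does not close. The paper's proof is a short union bound whose one essential observation you are missing: if $\kappa(G^>_{\pphi})\ge k$, pick one representative vertex from each component to form a set $S$ with $|S|=k$; then \emph{no} edge inside $S$ satisfies $A_{uv}\pphi(u)\pphi(v)>0$, and this event depends only on the signs of $\pphi$ on $S$ and on the $\binom{k}{2}$ entries $A_{uv}$ with $u,v\in S$. One therefore union bounds over only $\binom{n}{k}$ sets and $2^{k}$ signings of $S$ (not $2^{n}$ signings of $[n]$), each event having probability at most $(1-(p\wedge q))^{\binom{k}{2}}$ by independence of the edges; for $k\asymp\log n$ the quadratic exponent $\binom{k}{2}$ beats the $k\log n$ entropy cost and gives $n^{-\omega(1)}$. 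No spectral information, delocalization, or decoupling is needed.

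Your route has two concrete gaps. First, the Collatz--Wielandt step only yields $\lambda_{\mathrm{PF}}(\tilde A^{+}|_{V_p})\ge\lambda$ (the inequality $\tilde A^{+}|\pphi|\ge\lambda|\pphi|$ is vacuous for $\lambda<0$, where the block structure you would need lives on $G^{<}_{\pphi}$ instead), so Steps 1--2 say nothing about eigenvectors with negative eigenvalue and, by Bai--Yin, cover at most the single outlier eigenvalue $\approx(p-q)n$ --- and nothing at all when $p<q$. Second, and more seriously, Step 3 cannot produce the conclusion $\kappa\le O(\log n)$: bounding the probability that all $|S|(n-|S|)$ potential edges from a component $S$ to its complement are bad only rules out components with $|S|(n-|S|)$ large, i.e., mid-sized components; it is silent about having, say, $n/2$ components of size $2$, which is exactly what the lemma must exclude. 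The regime where your $2^{n}$ union bound over sign patterns genuinely fails is the small-$|S|$ regime, and the proposed repair --- ``union-bounding over the at most $n$ realized sign patterns'' via local stability --- is not a valid step as stated: the realized sign patterns are random and correlated with $A$, and resampling a single row decouples only one vertex's neighborhood, whereas your bad-cut event involves $\bm{\varepsilon}$ on all of $[n]$.
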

\begin{proof}
Suppose $\kappa(G^>_{\bm{\varphi}}) \geq k $. We denote by $S$ an arbitrary set of vertices with exactly one vertex taken from each connected component of $G^>_{\bm{\varphi}}$. The induced subgraph on $S$ cannot contain an edge $(u,v)$ satisfying $A_{uv}\pphi(u)\pphi(v) < 0$. We now ``re-sign'' the vertices, by multiplying $A$ on the left and the right by a diagonal matrix $D$, which has diagonal entry $1$ in all entries that are not a vertex in $S$, and $\sgn(\pphi(u))$ for vertices in $S$. According to this signing, $(DAD)_{i,j} 
\le 0$, for all vertices $u,v \in S$.

For fixed $D$ and random $A$, the probability that $(DAD)_{u,v}$ has the correct sign is at most $(1-p\vee q)$. Therefore, by union bounding over all possible $D$ and sets of vertices $S$, we have the probability of $\kappa(G^>_{\bm{\varphi}}) \geq k $ 
is at most
\[
\binom nk2^k(1-p\vee q)^{\binom k2}
\]
 This probability is $n^{-\omega(1)}$ for $k\geq 3\log_{\frac{1}{1-p \vee q}} n$.
\end{proof}

\begin{lemma}
With probability $1-n^{-\omega(1)}$, the second largest connected component of $G^>_{\bm{\varphi}}$ is at most one vertex.  
\end{lemma}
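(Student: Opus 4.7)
The plan is to reduce the claim to showing that no ``intermediate-sized'' component can appear in $G^>_{\bm{\varphi}}$, and bound this via a union bound combined with a decoupling step. If the second largest component has size at least $2$, then there exist two distinct components $C_1, C_2$ of $G^>_{\bm{\varphi}}$ with $|C_i|\ge 2$; since $|C_1|+|C_2|\le n$, the smaller of the two, call it $C$, satisfies $|C|\le n/2$. So it suffices to bound the probability that some $C \subseteq [n]$ with $2 \le |C| \le n/2$ is a connected component of $G^>_{\bm{\varphi}}$. For this I will use only the necessary condition that every cross pair is ``not good'': $A_{uv}\bm{\varphi}(u)\bm{\varphi}(v) \le 0$ for all $(u,v) \in \partial C := C \times (V\setminus C)$.

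The heart of the proof will be the estimate, for fixed $C$ of size $k$,
\[
\pr\bigl(\text{no good cross edge at } C\bigr) \le (1-p\wedge q)^{k(n-k)} + n^{-\omega(1)}.
\]
Heuristically, if $A|_{\partial C}$ were independent of $\sgn(\bm{\varphi})$, each of the $k(n-k)$ boundary pairs would independently have probability at most $1-p\wedge q$ of being ``not good,'' and the bound would follow. To justify this decoupling, I would follow the template of \cite{dekel2011eigenvectors} refined by \cite{rudelson2017delocalization}: conditioning on all entries of $A$ outside $\partial C$, resampling the boundary entries perturbs each coordinate $\bm{\varphi}(v)$ by an amount controlled via a Green's function / Taylor expansion in the spirit of Lemma \ref{lem:decor}, using the spectral estimates \eqref{eq:lambdabound} and \eqref{eq:eigbound}. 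The no-gap bound \eqref{eq:nogaps2} then ensures that the signs of all but an $o(n)$-sized set of coordinates survive this resampling, so $\sgn(\bm{\varphi})$ is effectively independent of $A|_{\partial C}$.

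A union bound over $C$ then gives
\[
\pr\bigl(\exists C,\, 2 \le |C| \le n/2 \bigr) \le \sum_{k=2}^{n/2} \binom{n}{k}\,(1-p\wedge q)^{k(n-k)} + n^{-\omega(1)}.
\]
Setting $c := -\log(1-p\wedge q) > 0$ and using $k(n-k) \ge kn/2$ for $k \le n/2$, each summand is at most $\exp\!\bigl(k\log(en/k) - ckn/2\bigr) = \exp(-\Omega(n))$, so the total is $n^{-\omega(1)}$ as desired.

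The hard part will be the decoupling step itself: the boundary block $\partial C$ has size up to $\Theta(n^2)$, far larger than the $O(1)$-sized blocks perturbed in Lemma \ref{lem:decor}, so the Taylor expansion must be executed carefully to keep the accumulated perturbation of each coordinate $\bm{\varphi}(v)$ much smaller than its magnitude for all but an $o(n)$ fraction of vertices. This is the signed analogue of the most delicate step in the approach of \cite{dekel2011eigenvectors} and \cite{rudelson2017delocalization}.
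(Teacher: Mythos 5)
Your reduction (rule out any connected component $C$ of $G^>_{\pphi}$ with $2\le |C|\le n/2$) and the final union-bound arithmetic are fine, but everything hinges on the decoupling estimate $\pr(\text{no good cross pair at } C)\le (1-p\wedge q)^{k(n-k)}+n^{-\omega(1)}$, and the perturbative route you sketch for it fails structurally, not just delicately. Resampling the block $C\times([n]\setminus C)$ changes the normalized matrix $\tilde A$ by a random matrix of operator norm $\Theta(1)$ (already for $|C|=2$: a $2\times(n-2)$ block of independent entries of size $n^{-1/2}$ has operator norm of order $1$), whereas the bulk eigenvalue spacing is $\Theta(1/n)$. The expansion in Lemma \ref{lem:decor} requires the perturbation to be far below the spectral gaps around $\lambda_i$ and diverges here; the eigenvector, and in particular its sign pattern on a macroscopic set of coordinates, is not stable under such a resampling. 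Even granting stability of most signs, there is a residual circularity: the signs you condition on are themselves functions of the boundary block, so ``each boundary pair is independently not good with probability at most $1-p\wedge q$'' does not follow. A smaller point: you add $n^{-\omega(1)}$ once, outside the sum over $C$, but a per-set error of that size multiplied by $\binom{n}{k}$ is not small, so the exceptional event must be a single global one.

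The way around this --- and what the paper does --- is to never decouple at all, but to extract a consequence of ``no good cross pair'' that does not involve the signs of $\pphi$ on the big side of the cut. By the preceding lemma the largest component $U$ of $G^>_{\pphi}$ has $|U|\ge n/(3\log_{1/(1-p\vee q)}n)$. If $V\ne U$ is a component with $|V|\ge 2$, take $v_1,v_2\in V$ and any common neighbor $u$ of $v_1,v_2$ in $U$: the two constraints $A_{uv_i}\pphi(u)\pphi(v_i)\le 0$ force $A_{uv_1}\sgn(\pphi(v_1))$ and $A_{uv_2}\sgn(\pphi(v_2))$ to agree in sign, a condition from which $\pphi(u)$ has cancelled out. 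One then union-bounds only over the $2^{|V|}$ sign patterns on $V$ and the $\binom{n}{|V|}$ choices of $V$; for each fixed pattern the agreement events over the $(1+o_n(1))(p+q)^2|U|$ common neighbors are independent and each holds with probability bounded away from $1$, which beats the union bound for every $|V|\ge 2$. Note that the naive alternative of union-bounding over sign patterns on both sides of the cut costs $2^n$ and, against $(1-p\wedge q)^{2(n-2)}$ at $k=2$, only wins when $p\wedge q>1/2$; some device like the common-neighbor pairing is genuinely needed, and your proposal is missing it.
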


\begin{proof}
 Denote by $U$ the vertex set of the largest connected component of $G^>_{\bm{\varphi}}$, which, by the previous lemma, must be of size at least $n/(3\log_{\frac{1}{1-p \vee q}} n)$. Any connected component $V\neq U$ of size $k$ must have all edges $(u,v)\subset U\times V$ satisfy $A_{uv}\pphi(u)\pphi(v)\leq 0$. 

We now consider a different re-signing, by multiplying $A$ on the left and the right by a diagonal matrix $D$, which has diagonal entry $1$ in all entries that are not in $V$, and $\sgn(\pphi(v))$ for $v\in V$. For $u\in U$ and $v\in V$, we have $\sgn(A_{uv}\pphi(u)\pphi(v))=\sgn((DAD)_{uv}\pphi(u))\leq 0$.

Assume that $|V|\geq 2$. For $u\in U$ and $v_1,v_2\in V$, if $(u,v_1),(u,v_2)$ are edges, then 
\begin{equation}\label{eq:resign}
\sgn((DAD)_{uv_1})=\sgn((DAD)_{uv_2}).
\end{equation}
With probability $1-n^{\omega(1)}$, the number of vertices in $U$ in the shared neighborhood of $v_1,v_2$ is $(1+o_n(1))(p+q)^2|U|$. 
For any fixed signing $D$, the probability of $(u,v_1),(u,v_2)$ having the same sign is at most $p^2+q^2$. Set $k=|V|$. Considering $|U|\geq n/(3\log_{\frac{1}{1-p \vee q}}n)$, if we union bound over all $\binom nk$ possible sets $V$ and $2^k$ signings,
the probability of there being a signing $D$ that satisfies \eqref{eq:resign} is at most
\[
\binom nk 2^{k}(p^2+q^2)^{(p+q)^2n/(3\log_{1/(1-p\vee q)} n)\binom{k}2}.
\]
For $k\geq2$, this probability is $n^{-\omega(1)}$.
\end{proof}

We are now prepared to prove Proposition \ref{claim:1path}.

\begin{proof}[Proof of Proposition \ref{claim:1path}]
 We break our analysis into two cases, treating the spectral radius separately from the rest of the spectrum. 

Suppose $\pphi$ is a unit eigenvector corresponding to the spectral radius of $A$. By Fact (3) of Lemma \ref{lm:results}, with probability $1- n^{-\omega(1)}$, $A$ has one eigenvalue equal to $\big( 1 + o_n(1) \big)(p-q)n$ and all other eigenvalues of magnitude at most $\big(2 + o_n(1)\big)\sqrt{(p+q)n}$. Furthermore, with probability $1- n^{-\omega(1)}$, 
$$\overrightarrow{1_{[n]}}^T A \, \overrightarrow{1_{[n]}}= \sum_{i ,j = 1}^{n} A_{ij} = \big( 1 + o_n(1) \big) (p-q)n(n-1).$$
Now, let us consider the representation of $\overrightarrow{1_{[n]}}$ in the eigenbasis of $A$. We have
$$ \big|\overrightarrow{1_{[n]}}^T A \, \overrightarrow{1_{[n]}}\big| \le \big( 1 + o_n(1) \big)|p-q|n \, (\overrightarrow{1_{[n]}}^T \pphi_i)^2 + \big(2 + o_n(1)\big)\sqrt{(p+q)n}\, \big( n - (\overrightarrow{1_{[n]}}^T \pphi_i)^2\big).$$
Therefore, with probability $1- n^{-\omega(1)}$, $(\overrightarrow{1_{[n]}}^T \pphi)^2 =\big(1-o_n(1) \big) n$, and so $$\|\pphi- \sgn(\overrightarrow{1_{[n]}}^T \pphi)\overrightarrow{1_{[n]}}/\sqrt{n}\|=o_n(1).$$ This implies that $\pphi$ has constant sign, say $\pphi(u) >0$, on $\big(1-o_n(1) \big) n$ of its vertices. Now, consider an isolated vertex $v$ in $G^>_{\bm{\varphi}}$. With probability $1-n^{-\omega(1)}$, $v$ has edges to $(1+o_n(1))pn$ different vertices $u$ satisfying $\pphi(u) >0$, $A_{u,v}=+1$ and $A_{u,v}\pphi(u)\pphi(v)<0$, and edges to $(1+o_n(1))qn$ different vertices $w$ satisfying $\pphi(w) >0$, $A_{w,v}=-1$ and $A_{w,v}\pphi(w)\pphi(v)<0$, a contradiction. Therefore, $\kappa(G^{>}_{\pphi})>1$ with probability $n^{-\omega(1)}$.

Now, consider an arbitrary unit eigenvector $\pphi$ corresponding to an eigenvalue in the rest of the spectrum. Suppose there is a $v$ that is an isolated vertex in $G^>_{\bm{\varphi}}$. With probability $1-n^{-\omega(1)}$, $v$ has edges to $(1+o_n(1))(p+q)n$ different vertices, $u$, all of which satisfy $A_{u,v}\pphi(u)\pphi(v)<0$. The eigenvector equation at $v$ gives
\begin{equation*}
\lambda \pphi(v)=\sum_{u\sim v}A_{uv}\pphi(u).
\end{equation*}
Therefore, by the eigenvector equation at $v$ and noting that $A_{u,v}\pphi(u)\pphi(v)<0$ for $u \sim v$, we have
\begin{equation}\label{eq:eigeq}
|\lambda \bm{\varphi}(v)|= \Big|\sum_{u\sim v}A_{uv}\pphi(u) \Big| = \sum_{u\sim v} |\bm{\varphi}(u)|.
\end{equation}
Plugging this into \eqref{eq:qe1},
\[
|\lambda\bm{\varphi}(v)| \geq  \frac{\sqrt{n}}{\log^5 n}. 
\]
However, considering the infinity norm bound on the eigenvector from \eqref{eq:infnormbound} and the Bai Yin theorem (Fact (3) of Lemma \ref{lm:results}), this happens with probability at most $n^{-\omega(1)}$.
\end{proof}

\section{An illustrative example}\label{sec:decompositionexample} Consider the $16$ by $16$ symmetric matrix $M$ with
\begin{align*}
    M_{ii} &= \begin{cases} -1 & \text{ for } i = 1,2,5,10,11,12 \\ 0 & \text{ otherwise } \end{cases} \;,\\
    M_{ij} &= \begin{cases} 1 & \text{ for } \{i,j\}= \{6,9\},\{15,16\}  \\ -1 & \text{ for } \{i,j\}= \{1,2\},\{1,5\},\{2,5\},\{3,8\},\{4,8\},\{5,6\},\{5,9\},\{6,7\}, \{6,11\}, \\ &   \qquad \qquad \qquad \{8,11\}, \{9,10\}, \{9,13\},\{9,14\},\{10,11\},\{10,12\},\{10,15\},\\ &   \qquad \qquad \qquad \{10,16\},\{11,12\},\{11,15\},\{11,16\},\{12,15\},\{12,16\}  \\ 0 & \text{ otherwise } \end{cases}
\end{align*}
for $i\ne j$. See Figure \ref{fig:graph} for the signed sparsity graph $G = ([16],E,\sigma)$ of $M$. Consider the eigenvalue $\lambda = 0$ of $M$, with index $k=7$, multiplicity $r = 6$, and corresponding eigenspace
$$E_\lambda = \left\{ \pphi \, \left|   \; \begin{array}{rl} 0 &=\, \pphi(6) = \pphi(8) = \pphi(9) = \pphi(15) = \pphi(16) \\
 &=\, \pphi(1)+  \pphi(2) +\pphi(5)  \\ 
 &=\,   \pphi(5) + \pphi(7) + \pphi(11) \\ 
 &=\, \pphi(5) + \pphi(10) + \pphi(13) + \pphi(14)  \\ 
 &=\,  \pphi(10) + \pphi(11)  + \pphi(12)   \\ &=\, \pphi(3) + \pphi(4) + \pphi(11)   \end{array} \right.\right\}$$
We have
$X_1 = \{6,9\}$, $X_2 = \{15,16\}$, $X_3 = \{8\}$, $Y_1 = \{1,2,5\}$, $Y_2 = \{ 7\}$, $Y_3 = \{ 13\}$, $Y_4 = \{14\}$, $Y_5 = \{10,11,12\}$, $Y_6 = \{3\}$, and $Y_7 = \{4\}$. The corresponding bipartite graph $H$ on vertices $X= \{x_1,x_2,x_3\}$ and $Y=\{y_1,y_2,y_3,y_4,y_5,y_6,y_7\}$ is in Figure \ref{fig:graph}.

Consider the following non-vanishing orthogonal\footnote{For simplicity in this example, we use an orthogonal basis rather than an orthonormal one. However, as a result, Equation \ref{eqn:innerprod} does not apply, and the norms of eigenvectors must be taken into account.} bases for the orthogonal projections of $E_\lambda$ onto $Y_1$, $Y_2$, $Y_3$, $Y_4$, $Y_5$, $Y_6$ and $Y_7$:
\begin{align*}
&\big\{\bm{\psi}^{(1)}_1\big|_{Y_1}, \bm{\psi}^{(1)}_2\big|_{Y_1}\big\} = \big\{\bm{\psi}^{(5)}_1\big|_{Y_5}, \bm{\psi}^{(5)}_2\big|_{Y_5}\big\} = \left\{ \begin{pmatrix} \;1 \\ \;2 \\ -3 \end{pmatrix} ,  \begin{pmatrix} -5 \\ \;4 \\ \;1 \end{pmatrix} \right\}, \\
 & \big\{\bm{\psi}^{(2)}_1\big|_{Y_2}\big\} =  \big\{\bm{\psi}^{(3)}_1\big|_{Y_3}\big\} =  \big\{\bm{\psi}^{(4)}_1\big|_{Y_4}\big\} =  \big\{\bm{\psi}^{(6)}_1\big|_{Y_6}\big\} =  \big\{\bm{\psi}^{(7)}_1\big|_{Y_7}\big\} = \big\{(1) \big\}, 
\end{align*}
and let us consider a vector in their span:
$$ \bm{\varphi} = \alpha_1^1\bm{\psi}^{(1)}_1 + \alpha_2^1\bm{\psi}^{(1)}_2 +  \alpha_1^2\bm{\psi}^{(2)}_1
+  \alpha_1^3\bm{\psi}^{(3)}_1 +  \alpha_1^4\bm{\psi}^{(4)}_1 +  \alpha_1^5\bm{\psi}^{(5)}_1 +  \alpha_2^5\bm{\psi}^{(5)}_2 +  \alpha_1^6\bm{\psi}^{(6)}_1 +  \alpha_1^7\bm{\psi}^{(7)}_1.$$
The vector $\bm{\varphi} \in E_\lambda$ if and only if  
$$\pphi(5)+  \pphi(7) +\pphi(11) =   \pphi(5) + \pphi(10) + \pphi(13)  + \pphi(14)= \pphi(3) + \pphi(4)  + \pphi(11) = 0,$$
which, in terms of $\alpha$'s, produces our $\gamma = 3$ homogeneous equations:
\begin{align*}
   -3\alpha_1^1 +\alpha_2^1+\alpha_1^2 +2\alpha_1^5 +4\alpha_2^5  &=0, \\
   -3\alpha_1^1 +\alpha_2^1+\alpha_1^3+\alpha_1^4 +\alpha_1^5 -5 \alpha_2^5 &=0, \\
    2\alpha_1^5 +4 \alpha_2^5 + \alpha_1^6 + \alpha_1^7 &=0.
\end{align*}
Performing Gaussian elimination, and solving for the pivots, we obtain:
\begin{align*}
    \alpha_1^7 &= - \alpha_1^6 +\alpha_1^2  + \alpha_2^1 -3 \alpha_1^1\\
     \alpha_2^5 &= (1/7)\alpha_1^4+  (1/7)\alpha_1^3 - (1/14) \alpha_1^2 + (1/14)\alpha^1_2 -(3/14) \alpha_1^1, \\
      \alpha_1^5 &=  -(2/7) \alpha_1^{4}-(2/7)\alpha_1^{3} -(5/14)\alpha_1^2 - (9/14) \alpha_2^1 + (27/14) \alpha_1^1,
\end{align*}
so $\Sigma = \{(\eta_1,\sigma_1),(\eta_2,\sigma_2),(\eta_3,\sigma_3)\} =\{(5,1),(5,2),(7,1)\}$, $\hat Y = \{y_1,y_2,y_3,y_4,y_6\}$, $q-\gamma = 7-3=4$, and $j_i =i$ for $i =1,...,q-\gamma$. Each $\pphi(i)$, $i \not \in i_0(\lambda)$, is a linear function of $\big\{\{\alpha_\sigma^{j}\}_{\sigma = 1,...,r_j}^{j = 1,...,q} \, \backslash \, \{\alpha_{\sigma_\ell}^{\eta_\ell} \}_{\ell=1}^\gamma \big\}$. For $i \in Y_j$ for some $y_j \in \hat Y$, the linear function is obvious. We write the explicit function for each $i \in Y_5 \cup Y_7$ below:
\begin{align*}
\pphi(4) \; &= \alpha^7_1 \qquad \quad \; \; \, = - \alpha_1^6 +\alpha_1^2  + \alpha_2^1 -3 \alpha_1^1 ,\\
\pphi(10) &=  \alpha_1^5 - 5\alpha_2^5 \; \; \, \;= -\alpha^4_1 - \alpha^3_1 - \alpha^1_2 + 3 \alpha^1_1,\\
\pphi(11) &=  2 \alpha_1^5 + 4\alpha_2^5 \; \; = - \alpha^2_1 - \alpha_2^1 +3 \alpha_1^1,\\
\pphi(12) &=  -3 \alpha_1^5 + \alpha_2^5 \; = \alpha^4_1+\alpha^3_1+\alpha^2_1+ 2\alpha^1_2 - 6 \alpha_1^1.
\end{align*}
We are now prepared to produce an eigenbasis $\{\pphi_1,\pphi_2,\pphi_3,\pphi_4,\pphi_5,\pphi_6\}$ and corresponding signings $\{\bm{\varepsilon}_1,\bm{\varepsilon}_2,\bm{\varepsilon}_3,\bm{\varepsilon}_4,\bm{\varepsilon}_5,\bm{\varepsilon}_6\}$ of $E_\lambda$ satisfying Theorem \ref{thm:basis_bound}. 

For $i_0(\lambda)$, we set $\bm{\varepsilon}_s(6) = \bm{\varepsilon}_s(9) = \sgn \big(\pphi_s(5)\big)$, $\bm{\varepsilon}_s(15) = \bm{\varepsilon}_s(16) = \sgn \big(\pphi_s(10)\big)$, and $\bm{\varepsilon}_s(8) = \sgn \big(\pphi_s(11)\big)$, $s = 1,2,3,4,5,6$. We need only consider $\pphi_s$, $s<q-\gamma$; an arbitrary non-vanishing extension suffices for $\pphi_4,\pphi_5,\pphi_6$. We have
\begin{align*}
    &\pphi_1: \, \Pi_E^1 = \{ {}_1\alpha_1^5,{}_1\alpha_2^5,{}_1\alpha_1^7\}, &\Pi_S^1& = \{ {}_1\alpha_1^2,{}_1\alpha_1^3,{}_1\alpha_1^4\},  &\Pi_O^1& = \emptyset, &\Pi_F^1& = \{ {}_1\alpha_1^1,{}_1\alpha_2^1,{}_1\alpha_1^6\}, \\
    &\pphi_2: \, \Pi_E^2 = \{ {}_2\alpha_1^5,{}_2\alpha_2^5,{}_2\alpha_1^7\},  &\Pi_S^2& = \{ {}_2\alpha_1^2,{}_2\alpha_1^3\},  &\Pi_O^2& = \{ {}_2\alpha_1^4\},  &\Pi_F^2& = \{{}_2\alpha_1^1,{}_2\alpha_2^1,{}_2\alpha_1^6 \},  \\
    &\pphi_3: \, \Pi_E^3 = \{ {}_3\alpha_1^5,{}_3\alpha_2^5,{}_3\alpha_1^7\},  &\Pi_S^3& = \{ {}_3\alpha_1^2\},  &\Pi_O^3& = \{{}_3\alpha_1^3,{}_3\alpha_1^4 \},  &\Pi_F^3& = \{{}_3\alpha_1^1,{}_3\alpha_2^1,{}_3\alpha_1^6 \}.
\end{align*}

We begin with $\pphi_1$, and follow the proof of Claim \ref{claim:bound}. We set ${}_1\alpha_1^1 = 0$. The variable ${}_1\alpha_2^1$ cannot be zero, but is otherwise unconstrained. We set ${}_1\alpha_2^1 = 1$, and so $\bm{\varepsilon}_1(6) = \bm{\varepsilon}_1(9) = \sgn(\pphi_1(5)) = \sgn(1)=+1$. The variable ${}_1\alpha_1^2$ cannot be zero, and must satisfy $$\sgn({}_1\alpha_1^2 \bm{\psi}_1^{(2)}(7)) = \sgn({}_1\alpha_1^2 )= \bm{\varepsilon}_1(6)  = +1 \quad  \text{ and } \quad \pphi_1(11)  = - {}_1\alpha^2_1 - 1 \ne 0.$$ We set ${}_1\alpha_1^2 = 1$, implying that $\sgn(\pphi_1(11)) = -1$, and so ${\bm \varepsilon}_1(8) = -1$. The variable ${}_1\alpha_1^3$ cannot be zero, and must satisfy $\sgn({}_1\alpha_1^3 \bm{\psi}_1^{(3)}(13)) = \sgn({}_1\alpha_1^3 )= \bm{\varepsilon}_1(9)  = +1$. We set ${}_1\alpha_1^3 =  1$. As in the previous two cases, ${}_1\alpha_1^4$ must be positive, but must also satisfy
$$\pphi_1(10)  = -{}_1\alpha^4_1 - 1 - 1 \ne 0 \quad \text{and} \quad 
\pphi_1(12)  = {}_1\alpha^4_1+1+1+ 2\ne 0.$$
We set ${}_1\alpha_1^4 = 1$, implying that $\sgn(\pphi_1(10) ) = -1$, and so 
$\bm{\varepsilon}_1(15) =\bm{\varepsilon}_1(16) =-1$. All that remains is to set ${}_1\alpha_1^6$. It must satisfy  $\pphi_1(4) = - {}_1\alpha_1^6 +1  +1\ne 0$. We set ${}_1\alpha_1^6 = 1$, giving
\begin{align*}
\pphi_1 &= (-5 ,4 ,1 ,1 ,1 ,0 ,1 ,0 ,0 ,-3 , -2,5 , 1,1 ,0 ,0 )^T,\\ 
{\bm \varepsilon}_1 &= ( -1,+1 ,+1 ,+1 ,+1 ,+1 ,+1 , -1, +1,-1 ,-1 ,+1 ,+1 ,+1 , -1, -1)^T.
\end{align*}
Next, we consider $\pphi_2$, and follow the proof of Claim \ref{claim:orth}. Here, we are only concerned with half-space and orthogonality conditions, as vanishing entries can be addressed using $\pphi_1$ and an appropriate Givens rotation (see Claim \ref{claim:main}). We set ${}_2\alpha_1^1 = 0$. The variable ${}_2\alpha^1_2$ cannot be zero, but is otherwise unconstrained. We set ${}_2\alpha^1_2 = 1$, and so $\bm{\varepsilon}_2(6) = \bm{\varepsilon}_2(9) = \sgn(\pphi_2(5)) = \sgn(1)=+1$. The variable ${}_2\alpha_1^2$ must satisfy $\sgn({}_2\alpha_1^2 \bm{\psi}_2^{(2)}(7)) = \sgn({}_2\alpha_1^2 )= \bm{\varepsilon}_2(6)  = +1$. We set ${}_2\alpha_1^2 = 1$. The variable ${}_2\alpha_1^3$ must satisfy $\sgn({}_2\alpha_1^3 \bm{\psi}_2^{(3)}(13)) = \sgn({}_2\alpha_1^3 )= \bm{\varepsilon}_2(9)  = +1$. We set ${}_2\alpha_1^3 =  1$. We also set ${}_2\alpha_1^6 =  1$. What remains is to set ${}_2\alpha_1^4$ so that 
\begin{align*}
    \langle \pphi_1,\pphi_2\rangle&= \langle \pphi_1, (-5 ,4 ,1 ,1 ,1 ,0 ,1 ,0 ,0 ,-{}_2 \alpha_1^4 -2 , -2,{}_2 \alpha_1^4+4 , 1,{}_2 \alpha_1^4 ,0 ,0 )^T \rangle  \\ &= 9 {}_2 \alpha_1^4 + 76 = 0.
\end{align*}
We set ${}_2\alpha_1^4=-76/9$, giving
\begin{align*}
\pphi_2 &= (-5 ,4 ,1 ,1 ,1 ,0 ,1 ,0 ,0 ,58/9 , -2,-40/9 , 1,-76/9 ,0 ,0 )^T,\\ 
{\bm \varepsilon}_2 &= ( -1,+1 ,+1 ,+1 ,+1 ,+1 ,+1 , -1, +1,+1 ,-1 ,-1 ,+1 ,-1 , +1, +1)^T.
\end{align*}
Finally, we consider $\pphi_3$. We set ${}_3\alpha_1^1 = 0$ and ${}_3\alpha_2^1 = 1$, implying that ${\bm \varepsilon}_3(6) = {\bm \varepsilon}_3(9) = +1$, and so ${}_3 \alpha_1^2$ must be positive. We set ${}_3 \alpha_1^2 = 1$, and also set ${}_3 \alpha_1^6 = 1$. We set ${}_3 \alpha_1^3$ and ${}_3 \alpha_1^4$ so that
\begin{align*}
    \langle \pphi_1,\pphi_3\rangle&= \langle \pphi_1, (-5 ,4 ,1 ,1 ,1 ,0 ,1 ,0 ,0 ,-{}_3 \alpha_1^3-{}_3 \alpha_1^4 -1 , -2,{}_3 \alpha_1^3+{}_3 \alpha_1^4+3 , {}_3 \alpha_1^3,{}_3 \alpha_1^4 ,0 ,0 )^T \rangle  \\ &= 9  {}_3 \alpha_1^3+ 9 {}_3 \alpha_1^4+ 67= 0, \\
    \langle \pphi_2,\pphi_3\rangle&= \langle \pphi_2, (-5 ,4 ,1 ,1 ,1 ,0 ,1 ,0 ,0 ,-{}_3 \alpha_1^3-{}_3 \alpha_1^4 -1 , -2,{}_3 \alpha_1^3+{}_3 \alpha_1^4+3 , {}_3 \alpha_1^3,{}_3 \alpha_1^4 ,0 ,0 )^T \rangle  \\ &=  -(89/9)  {}_3 \alpha_1^3 - (58/3) {}_3 \alpha_1^4+ (263/9) = 0.
\end{align*}
We set ${}_3 \alpha_1^3 = 55/3$ and ${}_3 \alpha_1^4 = -98/9$, giving
\begin{align*}
\pphi_3 &= (-5 ,4 ,1 ,1 ,1 ,0 ,1 ,0 ,0 ,-76/9 , -2,94/9 , 55/3,-98/9 ,0 ,0 )^T,\\ 
{\bm \varepsilon}_3 &= ( -1,+1 ,+1 ,+1 ,+1 ,+1 ,+1 , -1, +1,-1 ,-1 ,+1 ,+1 ,-1 , -1, -1)^T.
\end{align*}
Below we provide a nodal decomposition for each ${\bm \varepsilon}_s$, $s = 1,2,3$, that satisfies the bound of Theorem \ref{thm:basis_bound}:
\begin{align*}
    {\bm \varepsilon}_1&: \; \{1\}, \, \{2,5,6,7\}, \,\{3\},\,\{4\},\, \{8,10,11,15\}, \, \{9,13,14\}, \, \{12\},\, \{16\}, \\
     {\bm \varepsilon}_2&: \; \{1\}, \, \{2,5,6,7\}, \,\{3\},\,\{4\},\, \{8,11,12\}, \, \{9,10,13,15\}, \, \{14\},\, \{16\}, \\
      {\bm \varepsilon}_3&: \; \{1\}, \, \{2,5,6,7\}, \,\{3\},\,\{4\},\, \{8,10,11,15\}, \, \{9,13\}, \, \{12\},\, \{14\}, \, \{16\}.
\end{align*}
The required bounds for ${\bm \varepsilon}_s$, $s>3$, hold for any non-vanishing vector $\pphi$ in $E_\lambda$ with $\bm \varepsilon$ defined as above.

\end{document}